\documentclass[11pt,twoside]{article}
\usepackage{sectsty}
\usepackage{tikz}
\usetikzlibrary{shapes}
\usepackage{natbib}
\usepackage{hyperref}
\usepackage{doi}
\usepackage{xurl}
\usetikzlibrary{intersections}   
\usetikzlibrary{arrows,decorations.markings}  
\usepackage[normalem]{ulem}
\usetikzlibrary{positioning}
\usetikzlibrary{arrows.meta}
\usepackage{pgfplots}
\pgfplotsset{compat=1.18} 
\usepackage{graphicx}
\usepackage{amsmath,amsthm}
\usepackage{geometry}
\usepackage{multirow}
\usepackage{float}
\usepackage{titlesec}
\usepackage{amssymb}

\usepackage[utf8]{inputenc}
\usepackage[T1]{fontenc}
\usepackage{lmodern}
\usepackage[english]{babel}
\usepackage{braket}
\usepackage{mathrsfs}
\usepackage{titleps}
\usepackage{tabularx}
\usepackage{enumitem}
\usepackage{wrapfig}
\usepackage{color}
\usepackage[perpage]{footmisc}
\usepackage[babel]{csquotes}
\usepackage{chngcntr}

\newcommand{\field}[1]{\mathbb{#1}}
\newcommand{\N}{\field{N}}
\newcommand{\R}{\field{R}}
\newcommand{\C}{\field{C}}
\newcommand{\Z}{\field{Z}}

\newcommand{\LL}{\mathcal L}

\newtheorem{theorem}{Theorem}[section]
\newtheorem{lemma}[theorem]{Lemma}
\newtheorem{proposition}[theorem]{Proposition}

\newtheorem{kor}[theorem]{Corollary}

\newcolumntype{L}[1]{>{\raggedright\arraybackslash}p{#1}}
\newcolumntype{C}[1]{>{\centering\arraybackslash}p{#1}}
\newcolumntype{R}[1]{>{\raggedleft\arraybackslash}p{#1}}

\titleformat{\paragraph}
{\normalfont\normalsize\bfseries}{\theparagraph}{1em}{}
\titlespacing*{\paragraph}
{0pt}{3.25ex plus 1ex minus .2ex}{1.5ex plus .2ex}



\addto{\captionsenglish}{
    
}

\sectionfont{\Large}
\subsectionfont{\Large}
\date{ }

\sectionfont{\Large}
\subsectionfont{\large}

\counterwithout{figure}{figurecount}

\begin{document}

\title{\Large{\textbf{RIGOROUS LOWER BOUNDS FOR ATOMIC AND IONIC GROUND-STATE ENERGIES}}}

\author{\normalsize{MICHAEL HOFACKER}\footnote{
Institute for Analysis, Karlsruhe Institute of Technology (KIT), D-76128 Karlsruhe, Germany}}

\maketitle

\setcounter{section}{0}

\begin{abstract}
We establish a new iterative method for computing rigorous lower bounds for the ground-state energies of all atoms and ions. The accuracy of the resulting bounds depends on two ingredients: the quality of the initial energy estimate and the quality of the used upper bound for the maximum number of bound electrons, $N_c(Z)$, as a function of the nuclear charge $Z>0$.
In particular, this new method allows us to establish the explicit lower bound $E(N,Z) \geq -0.8710\,N^{1/3}Z^2$ for the ground-state energy $E(N,Z)$ of the atom/ion with $N \in \N$ electrons and nuclear charge $Z \in (0,N]$, thereby considerably improving the best known constant $A=(3/2)^{1/3} \approx 1.1447$ in the estimate $E(N,Z) \geq -A\,N^{1/3}Z^2$ for $0 < Z \leq N$ and coming much closer to the best possible constant 
given by the Thomas-Fermi asymptotics $\lim_{N \to \infty} 
 |E(N,N)|N^{-7/3}= -E_{\textup{TF}}(1)\approx 0.7687$ \cite{LiebThomasFermi}. To demonstrate the potential and the limits of our method, we also compute explicit lower bounds for the ground-state energies $E(N,N)$ of all atoms with $2 \leq N \leq 86$ and we compare them with the bounds obtained in \cite{LiebHertel} and \cite{Seco}.
 Finally, we also provide a uniqueness result for the expectation values of the kinetic, potential and interaction energies on the ground-state space. 
\end{abstract}

\section{Introduction}
Since the early days of quantum mechanics, considerable effort has been devoted to determining the ground-state energies of atoms and molecules as accurately as possible, both experimentally and theoretically. High numerical accuracy is particularly important when differences of several energy levels are considered, as for example is the case of chemical reactions and radiative processes.
Furthermore, lower bounds for atomic and molecular ground-state energies play a fundamental role in the study of the stability of matter, since their asymptotic behavior as the number of electrons $N \in \N$ tends to infinity determines whether matter is stable or not \cite{StabilityofMatter}. Deriving an upper bound for a given atomic or molecular ground-state energy is relatively straightforward. Indeed, by the variational principle (cf. Eq.~\eqref{VarPrinciple} below), it suffices to compute the expectation value of the corresponding Hamiltonian with respect to an arbitrary trial state $\psi$ belonging to its domain. However, one can only expect that the resulting upper bound is close to the (typically unknown) exact value of the ground-state energy if $\psi$ provides a sufficiently good approximation of the true ground state. Over the years, a large toolbox of methods has been developed in quantum chemistry for constructing such approximate ground states. A prominent example is the self-consistent field (SCF) method based on the Hartree-Fock-Roothaan equations. Its accuracy can be further improved by employing Post-Hartree-Fock methods, which also account for electron correlation. For details concerning the SCF method and related approaches, we refer to standard textbooks on quantum chemistry (see, e.g., \cite{Lewars}).
The primary objective of this work is the derivation of lower bounds for atomic and ionic ground-state energies. From a mathematical perspective, this is a considerably more challenging problem, and correspondingly much less rigorous results are available. For small atoms and ions, highly accurate lower bounds have been obtained using, for example, the Bazley-Fox method \cite{Bazley}, the Temple inequality and related methods \cite{AlexanderColdwell, King, Stevenson1938}, as well as subsequent refinements of these approaches \cite{Ireland2022LowerBounds, Marmorino2012}. For larger systems, however, these methods become increasingly inefficient due to the growing dimensionality of the problem and the associated computational costs. In this regime, the approaches developed in \cite{LiebHertel} and \cite{Seco} yield stronger lower bounds, although these bounds still differ from the expected ground-state energies (e.g. upper bounds obtained from extensive numerical computations) by several percent. Moreover, both approaches rely on substantial numerical computations at intermediate stages, making it difficult to derive explicit bounds in terms of the electron number $N \in \N$. At the current state of research, the absence of sufficiently strong lower bounds for atomic and ionic ground-state energies seems to be one of the main obstacles to resolving several prominent open problems in mathematical physics, including the monotonicity of the ionization energies \cite{SimonProblems} and the derivation of the van der Waals law without additional assumptions \cite{Anapolitanos2016, AnapolitanosSigal2013}.

\section{Main results} \label{secmain}
We consider an atom/ion  with $N \in \N_0$ electrons and a nucleus of charge  $Z>0$ that is fixed at the origin in the Born-Oppenheimer approximation. In atomic units where $\hbar=1=4\pi \varepsilon_0=m=e$ the associated electronic Hamiltonian is given by
\begin{align}
H_N(Z):=\;\sum_{i=1}^{N} -\dfrac{1}{2}\Delta_{x_i} - \sum_{i=1}^N \frac{Z}{\left|x_i\right|} +  \sum_{1 \leq i< j \leq N} \frac{1}{\left|x_i-x_j\right|}\,. \label{DefHamiltonian}
\end{align}
Our main focus lies on the ground-state energy
\begin{align}
E(N,Z)= \inf \sigma(H_N(Z))\,. \label{DefE}
\end{align}
For $N=0$ we have that $E(0,Z)=0$ and for $N=1$ it is well-known that $E(1,Z)=-Z^2/2$, but for $N \geq 2$ the exact value of the negative number $E(N,Z)$ is not known in general and it depends essentially on the underlying Hilbert space $\mathcal{H}$ and the domain $D(H_N(Z))$ of the Hamiltonian.
As a consequence of the Pauli exclusion principle and the electron spin, the natural choice for the Hilbert space is the fermionic Hilbert  space
\begin{align*}
\mathcal{H}=\bigwedge_{i=1}^{N} (L^2(\R^3) \otimes \C^q)  \qquad \qquad \text{(Case FER)}
\end{align*}
with $q=2$ degrees of freedom in the spin.
Note that functions $\psi \in \mathcal{H}$ can be identified with 
complex-valued $L^2$-functions that are antisymmetric w.r.t. the $N$ position and spin coordinates $(x_i, s_i) \in \R^3 \times \left\{1,2\right\}$ of the electrons, i.e. for all $i\neq j$,
\begin{align*}
\psi(x_1,s_1,\ldots,x_i,s_i,\ldots,x_j,s_j,\ldots,x_N,s_N) = -\psi(x_1,s_1,\ldots,x_j,s_j,\ldots,x_i,s_i,\ldots,x_N,s_N)\,.
\end{align*}
But most of our results can be extended to a more general framework. For example, we could allow for 
arbitrary $q \in \N$ and also for different particle statistics like bosonic statistics or
even no symmetry at all. 
However, since it is well-known that the ground-state energy in the full-space case 
\begin{flalign*}
\mathcal{H}= L^2(\R^{3N}) \qquad \qquad \qquad  &\text{(Case FS)}
\end{flalign*}
coincides with the bosonic ground-state energy, we restrict ourselves to (Case FER) and (Case FS) in the following. In both of these cases it is well-known that $H_N(Z)$ is self-adjoint and bounded from below on the domain $D(H_N(Z))=\mathcal{H}^2$, where in (Case FS)
$\mathcal{H}^k = H^k(\R^{3N})$ just denotes the usual Sobolev space for $k \in \N$, and in (Case FER),
\begin{align*}
\mathcal{H}^k:= \left\{ \psi \in \mathcal{H} \,: \,  \forall s_1,\ldots,s_N \in \left\{1,\ldots,q \right\}: \psi(\, \cdot \,, s_1,\ldots, \, \cdot \,, s_N) \in H^k(\R^{3N}) \right\}\,.
\end{align*}
Here, $s_i \in \left\{1,\ldots,q\right\}$ denotes the spin coordinate of the $i$-th electron.  
Upper bounds for $E(N,Z)$ can be derived from the variational principle
\begin{align}
E(N,Z)= \inf_{\psi \in \mathcal{H}^2, \| \psi \|=1} \braket{\psi | H_N(Z) \psi}\,.
\label{VarPrinciple}
\end{align}
However, as already explained in the introduction, our focus lies on lower bounds for $E(N,Z)$, the main results being Theorems \ref{theo1} and \ref{theo2} below. But before we can state these results, we first have to introduce some notation and we also recall some related results that will be used in the following.

We start by recalling that the essential spectrum of $H_N(Z)$ can be characterized by the HVZ theorem (see e.g. \cite[Theorem 11.2]{Teschl})
\begin{align}
\sigma_{\textup{ess}}(H_N(Z))= [E(N-1,Z), \infty)\,. \label{HVZ}
\end{align}
An important consequence of Eq. \eqref{HVZ} is that the strict inequality $E(N,Z)<E(N-1,Z)$ implies that $E(N,Z)$ is a discrete eigenvalue  of $H_N(Z)$, so the set of all normalized ground states
\begin{align}
\text{GS}_N(Z):=\left\{ 
\psi \in \mathcal{H}^2 \;:\; H_N(Z)\psi = E(N,Z)\psi \text{ and } \|\psi\|=1 \right\}  \label{DefGS}
\end{align}
is non-empty whenever $E(N,Z)<E(N-1,Z)$. However, a nucleus with a fixed charge number $Z>0$ can only bind finitely many electrons (the first rigorous proofs for this were given independently by Ruskai \cite{Ruskai1, Ruskai2} and Sigal \cite{Sigal1, Sigal2}). Explicitly, we define the maximum number of bound electrons as
\begin{align}
N_c(Z):= \max\left\{N \in \N \,: \, E(N,Z) < E(N-1,Z) \right\}\,.
\label{DefNc}
\end{align}
To explain the motivation behind this definition, we first note that adding more electrons can never increase
the ground-state energy, i.e.
\begin{align}
E(N,Z)\leq E(N-1,Z) \qquad \qquad (N \in \N, Z>0)\,. \label{Edecrease}
\end{align}
This is due to the fact that one can always send one electron to infinity and simultaneously make its kinetic energy arbitrarily small. From Eqs. \eqref{DefNc} and \eqref{Edecrease} we conclude that $N_c(Z)$ is the maximum number of electrons that can be attached to the nucleus with a decrease in the ground-state energy. In particular, one has that $E(N,Z)=E(N_c(Z),Z)$ for all $N \geq N_c(Z)$.

We now recall some explicit bounds for $N_c(Z)$. One the one hand, Zhislin \cite{Zhislin1, Zhislin2} has shown that for $Z>N-1$ the Hamiltonian $H_N(Z)$ has infinitely many discrete eigenvalues below the bottom of the essential spectrum, which implies that $N_c(Z) \geq Z$. 
One the other hand, in (Case FER) with $q=2$ one expects that 
\begin{align}
N_c(Z) \leq Z + C  \label{IonizationConjecture}
\end{align} 
for some constant $C \geq 1$ that is independent of $Z\geq 1$. This is known as the ionization conjecture. 
It has been verified in some approximate models like the Thomas-Fermi model \cite{LiebThomasFermi} or the Hartree-Fock model \cite{SolovejHartreeFock}. Very recently, an AI-based proof has been published \cite{OpenAI2026UniformExcessCharge} - however, without giving an explicit upper bound for the constant\footnote{Tracking the contributions to the constant $C$ with the help of AI leads to $C \approx 2^{10^{18}}$, which is considerably larger than the number of electrons in the observable universe ($\approx 10^{80}$).}, which is expected to be $C=2$ or even $C=1$.
Before only the weaker asymptotic neutrality
\begin{align}
\lim_{Z \to \infty} \dfrac{N_c(Z)}{Z}=1 \label{AsmptoticNeutrality}
\end{align}
has been rigorously established \cite{AsymptoticNeutrality1, AsymptoticNeutrality2}.
We emphasize that Eq. \eqref{AsmptoticNeutrality} is \textit{not} valid for bosons because for bosons this limit is known to be strictly larger than one \cite{LiebBosons, SolovejBosons}. 
Soon after Sigal's and Ruskai's pioneering works \cite{Ruskai1, Ruskai2, Sigal1, Sigal2} Lieb \cite{LiebIonization} gave an elegant proof of the bound
\begin{align}
N_c(Z) < 2Z +1 \qquad \qquad (Z>0)\,, \label{LiebBound}
\end{align}
which holds irrespectively of the particle statistics and the number $q \in \N$ of degrees of freedom in the spin. In (Case FER) with $q=2$, Nam \cite{NamIonization} has refined Lieb's method to establish 
\begin{align}
N_c(Z) < 1.22Z +3 Z^{1/3} \,, \label{NamBound}
\end{align}
which improves Estimate \eqref{LiebBound} for $Z \geq 5.52$.
Nam's proof has been further improved by Hundertmark, Pattakos and Schulz \cite{HundertmarkExcesscharge}, and in particular it has been shown that
\begin{align}
N_c(Z) < 1.1185Z +4 Z^{1/3} \,, \label{SchulzBound}
\end{align}
which improves Estimate \eqref{NamBound} for $Z \geq 30.93$.
In any case, we may consider a function $f: (0,\infty) \rightarrow [1,\infty)$ satisfying
\begin{align}
N_c(Z) < f(Z)\qquad \qquad (Z>0)\,. \label{Ncbound}
\end{align}
For  simplicity, we further assume that $f$ is strictly monotonically increasing and that $f^{-1}(n)$ exists for all sufficiently large $n \in \N$.   
In (Case FER) with $q=2$, it follows from Estimates \eqref{LiebBound}-\eqref{SchulzBound} that Estimate \eqref{Ncbound} is satisfied for
\begin{align}
f(Z):=  \begin{cases}
2Z+1 & \qquad \textup{if} \; Z \leq 5.5\,, \\
1.22Z +3 Z^{1/3} & \qquad \textup{if} \; 5.5 < Z < 30.9\,, \\
1.1185Z +4 Z^{1/3} & \qquad \textup{if} \; Z \geq 30.9\,.
\end{cases} \label{fchoice}
\end{align}
Besides the ionization bound $f$, another key element of Theorem \ref{theo1} below is the absolute value of the ground-state energy when $N$ electrons are attached to a nucleus of charge $Z=1$ and the electron-electron repulsion is completely neglected, which we denote by $\sigma_N$. Explicitly,
\begin{align}
\sigma_N = |\inf \sigma(h_N)|\,,  \label{DefSigmaN}
\end{align}
where 
\begin{align}
h_N:=\;\sum_{i=1}^{N} \left( -\dfrac{1}{2}\Delta_{x_i} - \frac{1}{\left|x_i\right|} \right) \label{Defh_N}
\end{align}
denotes the Hamiltonian without the electron-electron repulsion. 
In (Case FS) it is easy to see that $\sigma_N=N/2$, while in (Case FER) $\sigma_N$ is the sum of the absolute values of the first $N$ eigenvalues of the hydrogen atom counting multiplicity (if each electron has $q$ degrees of freedom in the spin). For details concerning $\sigma_N$, we refer to Appendix \ref{secsigma_N}. With these preparations at hand, we  now state our first result:

\begin{theorem} \label{theo1}
Let $N_0 \in \N$, $Z_0 \in (0,f^{-1}(N_0+1)]$ and $E_0 \in (-\infty, E(N_0,Z_0)]$, where the  strictly monotonically increasing function $f$ satisfies Estimate \eqref{Ncbound} and $f^{-1}(n)$ exists for all $n \in \N$ with $n \geq N_0 +1$. For $N \in \N$ with $N \geq N_0$ and $Z \in [Z_0, \infty)$, we then define $L(N,Z)$ recursively by 
\begin{align}
L(N_0,Z):= \sqrt{\left| E_0 \right|} 
+ \sqrt{\sigma_{N_0}} (Z-Z_0) \label{DefL0}
\end{align}
and, for $N \geq N_0+1$,
\begin{align}
L(N,Z):= \begin{cases}
L(N-1,Z) & \qquad \textup{if} \; Z_0 \leq Z \leq f^{-1}(N)\,, \\
L(N-1,f^{-1}(N)) + \sqrt{\sigma_{N}} (Z-f^{-1}(N))  & \qquad \textup{if}\; Z>f^{-1}(N)\,.
\end{cases} \label{DefL}
\end{align}
Then $E_{L}(N,Z):= -L(N,Z)^2$ is a lower bound for $E(N,Z)$, i.e.
\begin{align*}
E(N,Z) \geq  E_{L}(N,Z) \qquad \qquad (N \geq N_0, Z \geq Z_0)\,.
\end{align*}
\end{theorem}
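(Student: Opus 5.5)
The plan is to control how $E(N,Z)$ varies with $Z$ via a Hellmann--Feynman type argument. The key quantity is the derivative of $\sqrt{|E(N,Z)|}$ in $Z$.

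\textbf{Concavity.} For fixed $N$, $Z\mapsto E(N,Z)$ is an infimum of affine functions $Z\mapsto \braket{\psi|T+W\psi} - Z\braket{\psi|V\psi}$, where $T$ is the kinetic energy, $V=\sum_i |x_i|^{-1}$ and $W$ is the repulsion. Hence it is concave and nonincreasing, and it has one-sided derivatives $\partial_Z^\pm E(N,Z)$ everywhere.

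\textbf{Derivative bound on bound states.} Let $Z$ be such that $N\le N_c(Z)$, so $E(N,Z)<E(N-1,Z)$ and $\mathrm{GS}_N(Z)\neq\emptyset$. For $\psi\in \mathrm{GS}_N(Z)$, concavity gives $-\partial_Z E \le \braket{\psi|V\psi}$ on the appropriate side. I then want to bound $\braket{\psi|V\psi}$ by $2\sqrt{\sigma_N|E(N,Z)|}$. Scaling $h_N$ gives, for every $\lambda>0$,
\[
\sum_i -\tfrac12\Delta_i - \lambda\sum_i |x_i|^{-1}\ \ge\ -\lambda^2\sigma_N .
\]
Therefore $\lambda\braket{V}\le \braket{T}+\lambda^2\sigma_N$. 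The virial theorem for the Coulomb ground state gives $\braket{T}=-E(N,Z)=|E|$. Optimizing over $\lambda$ yields $\braket{V}\le 2\sqrt{\sigma_N|E|}$. Consequently $\frac{d}{dZ}\sqrt{|E(N,Z)|}\le \sqrt{\sigma_N}$ in the sense of Dini derivatives.

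To be safe, I would derive the virial identity from the scaling argument: $\psi_\mu(x)=\mu^{3N/2}\psi(\mu x)$ gives the energy $\mu^2\braket{T}-\mu\braket{V_{\rm tot}}$, which is minimized at $\mu=1$. This step needs only $\psi\in\mathcal H^2$ and is fine. Since $\sqrt{|E|}$ is continuous and concavity makes it locally Lipschitz, the Dini bound integrates. Thus $\sqrt{|E(N,Z)|}-\sqrt{|E(N,Z')|}\le\sqrt{\sigma_N}(Z-Z')$ on any interval of $Z$-values where $N\le N_c$.

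\textbf{Handling all $Z$.} When $N>N_c(Z)$ there may be no ground state, but then $E(N,Z)=E(N-1,Z)$. More generally, let $M=\min(N,N_c(Z))$. Then $E(N,Z)=E(M,Z)$ and a ground state with $M$ electrons exists, giving the bound $2\sqrt{\sigma_M|E|}\le 2\sqrt{\sigma_N|E|}$, since $\sigma_N$ is nondecreasing in $N$. So the derivative bound holds for all $Z$ with $\sigma_N$. This alone would give $L(N_0,Z)$ for $N=N_0$.

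\textbf{Induction on $N$.} For $Z\le f^{-1}(N)$ we have $f(Z)\le N$, so $N_c(Z)<N$. Hence $N_c(Z)\le N-1$ and $E(N,Z)=E(N-1,Z)\ge -L(N-1,Z)^2$. For $Z>f^{-1}(N)$, start at $Z_1=f^{-1}(N)$, where $E(N,Z_1)=E(N-1,Z_1)\ge -L(N-1,Z_1)^2$. Then integrate the derivative bound to get $\sqrt{|E(N,Z)|}\le L(N-1,Z_1)+\sqrt{\sigma_N}(Z-Z_1)$.

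The base case works the same way. Starting at $Z_0$ with $|E(N_0,Z_0)|\le|E_0|$, integrate with $\sigma_{N_0}$; the hypothesis $Z_0\le f^{-1}(N_0+1)$ is what makes the case distinction at $N_0+1$ consistent. For $Z<f^{-1}(N)$ one also needs $Z\ge Z_0$, which is given.

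\textbf{Main obstacle.} The main obstacle is the rigorous justification of the derivative bound, namely the one-sided Hellmann--Feynman inequality via concavity together with the virial identity. Using the ground state at $Z$ only gives a supporting line, and that is exactly what concavity provides. So I would present it as follows: for $Z'>Z$, $E(N,Z')\ge$ ... is not available from a supporting line. Instead, I would run the argument downward. Concavity plus a ground state $\psi$ at $Z'$ gives
\[
E(N,Z)\le E(N,Z')+(Z'-Z)\braket{\psi|V\psi}\quad\text{for } Z<Z'.
\]
This yields $|E(N,Z')|\le |E(N,Z)|+(Z'-Z)\,2\sqrt{\sigma_N|E(N,Z')|}$, which is exactly the needed left-derivative estimate for integrating upward.
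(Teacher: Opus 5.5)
Your architecture matches the paper's: concavity, the one-sided Hellmann--Feynman inequality from the variational principle with a ground state at the point where the left derivative is taken (Lemma \ref{lm02}), the virial identity, the scaling bound $\langle V\rangle\le 2\sqrt{\sigma_N\langle T\rangle}$ (Proposition \ref{prop4}), integration of $\sqrt{|E|}$ (Proposition \ref{prop5}), and induction in $N$ using $N_c(Z)<f(Z)$. The gap is in the step that is meant to make the derivative bound hold at \emph{every} $Z$. You assert that $N\le N_c(Z)$ implies $E(N,Z)<E(N-1,Z)$, and hence that an $M$-electron ground state exists for $M=\min(N,N_c(Z))$. But $N_c(Z)$ is defined as the \emph{maximum} of $\{N: E(N,Z)<E(N-1,Z)\}$. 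This tells you only that $N_c(Z)$ electrons are bound and that every $N>N_c(Z)$ is unbound; it says nothing about $N<N_c(Z)$. That this set is an initial segment of $\N$ is not known in general. It would follow from the convexity of $N\mapsto E(N,Z)$ discussed in Appendix \ref{secionization}, which is open, and Zhislin's theorem only covers $N<Z+1$. So at points with $N<N_c(Z)$ but $E(N,Z)=E(N-1,Z)$, your argument has no ground state to use, and you cannot exclude such points from the interval $[f^{-1}(N),Z]$ you integrate over.

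The repair is short. At a given $Z'$, take $M$ to be the largest $M'\le N$ with $E(M',Z')<E(M'-1,Z')$; this exists since $E(1,Z')<0=E(0,Z')$. Then $E(N,Z')=E(M,Z')$ and $\mathrm{GS}_M(Z')\neq\emptyset$. For $\psi\in\mathrm{GS}_M(Z')$ and $Z<Z'$ you get $E(N,Z)\le E(M,Z)\le E(M,Z')+(Z'-Z)\langle\psi|V(M)\psi\rangle=E(N,Z')+(Z'-Z)\langle\psi|V(M)\psi\rangle$. State the first inequality, i.e.\ \eqref{Edecrease}, explicitly, since it is what transfers the $M$-electron bound to $E(N,\cdot)$. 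Combined with the virial identity and $\sigma_M\le\sigma_N$, this gives $\partial_{Z-}E(N,Z')\ge-2\sqrt{\sigma_N|E(N,Z')|}$ at every $Z'>0$, and the rest of your proof goes through. Corrected this way, your route is slightly more direct than the paper's proof of Proposition \ref{prop5}. The paper proves the same left-derivative bound by induction on $N$, but only on the dense set where either $N$ electrons are bound or $E(N,\cdot)=E(N-1,\cdot)$ on a neighbourhood. It then extends the bound to all $Z$ using the monotonicity of $\partial_{Z-}E(N,\cdot)$ and continuity.
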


Once one has an a priori lower bound for $E(N_0,Z_0)$ for some $N_0 \in \N$, $Z_0 \in (0,f^{-1}(N_0+1)]$, Theorem \ref{theo1} allows one to compute recursively the lower bound $E_L(N,Z)$ for all $N \geq N_0$ and $Z \geq Z_0$. 
In Section \ref{secdiscussion} we compute $E_L(N,N)$  for all atoms with atomic number $2 \leq N \leq 86$ for two choices of the initial values $N_0, Z_0, E_0$ and the ionization bound $f$, and we compare our results with the lower and upper bounds from references \cite{LiebHertel} and \cite{Seco}. 
The lower bounds in \cite{LiebHertel} and \cite{Seco} are based on a lower bound for the electron-electron repulsion in terms of an effective one-particle potential $V_{\textup{eff}}$. More precisely, for some constant $\Lambda>0$,
\begin{align*}
 \sum_{1 \leq i< j \leq N} \frac{1}{\left|x_i-x_j\right|} \geq 
 \sum_{i=1}^{N} V_{\textup{eff}}(x_i) - \Lambda \qquad \qquad (x_i \neq x_j\; \text{for}\; 1 \leq i < j \leq N)\,.
\end{align*}
This reduces the problem of finding   a lower bound for $E(N,N)$ to computing the eigenvalues of the one-particle operator $
h_{\textup{eff}}= -\frac{1}{2}\Delta - N/|x|  + V_{\textup{eff}}$
and summing them up appropriately afterwards. In spite of the fact that the eigenvalues of $h_{\textup{eff}}$ can be computed with numerical methods, one of the main advantages of our method is that it is \textit{completely analytical} since we do not rely on \textit{any}  numerical computations. 
Even more remarkably, we can derive from Theorem \ref{theo1} an explicit $N$- and $Z$-dependent lower bound for the ground-state energy of all atoms and negatively charged ions:

\begin{theorem}\label{theo2}
Consider (Case FER) with $q=2$. Then, for $\beta=1.1185$ and any $\eta \in [1/\beta, \infty)$, the ground-state energy $E(N,Z)$ with $N \in \N$ and $Z \in (0,\eta N]$ has the lower bound
\begin{align}
E(N,Z) \geq - \alpha(\eta, \beta)^2 \, N^{1/3} Z^2\,, \label{Elower}
\end{align}
where
\begin{align}
\alpha(\eta, \beta):= \dfrac{2^{5/6}\cdot 3^{7/6}}{7\eta \beta}    + \left(\dfrac{3}{2}\right)^{1/6}\left( 1 - \frac{1}{\eta \beta} \right)\,.  \label{Defalpha}
\end{align}
In particular, 
\begin{align}
E(N,Z) \geq -0.8710\,N^{1/3} Z^2 \qquad \qquad (Z \leq N \in \N) \label{IonEstimate}
\end{align}
and
\begin{align}
E(N,N) \geq -0.8710\,N^{7/3} \qquad \qquad (N \in \N)\,. \label{AtomEstimate}
\end{align}
\end{theorem}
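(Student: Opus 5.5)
We will deduce Theorem \ref{theo2} from Theorem \ref{theo1} with $Z_0$ near $0$, using the linear part of the ionization bound, $f(Z)=\beta Z+\ldots$ with $\beta=1.1185$. Because $f(Z)\ge \beta Z$ gives $f^{-1}(n)\le n/\beta$, we may, after the usual monotonicity argument, work with the cleaner bound $N_c(Z)<\beta Z$ up to lower order. The $Z^{1/3}$ corrections in \eqref{fchoice} should only help or be absorbed, and we will check this. The idea is that $\sqrt{|E|}$ grows in $Z$ with slope at most $\sqrt{\sigma_n}$ on each interval $[f^{-1}(n),f^{-1}(n+1)]$ and stays constant in $N$ once $N\ge f(Z)$. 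So $L(N,Z)$ is essentially a Riemann-type sum
\begin{align*}
L(N,Z)\approx \sum_{n< N} \sqrt{\sigma_n}\,\bigl(f^{-1}(n+1)-f^{-1}(n)\bigr) + \sqrt{\sigma_N}\,(Z-f^{-1}(N))_+ .
\end{align*}

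\textbf{Step 1: estimates for $\sigma_n$.} From Appendix \ref{secsigma_N}, with $q=2$, filling hydrogen shells gives $\sigma_n\le c\, n^{1/3}$ with $c=(3/2)^{1/3}$ at worst, and in fact $\sigma_n\le (3/2)^{1/3} n^{1/3}$ for all $n$. This is the known bound that yields the old constant $A$: neglecting repulsion gives $E(N,Z)\ge -\sigma_N Z^2$. Hence $\sqrt{\sigma_n}\le (3/2)^{1/6} n^{1/6}$.

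\textbf{Step 2: induction on the explicit bound.} Rather than summing exactly, we prove by induction on $N$ that $L(N,Z)\le \alpha(\eta,\beta) N^{1/6}Z$ for $Z\le \eta N$.

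On the region $Z\le N/\beta$ (the ``bound'' regime), the recursion only accumulates slopes. Comparing the sum with the integral $\int_0^{Z} (3/2)^{1/6}(\beta t)^{1/6}\,dt$ gives
\begin{align*}
\tfrac67 (3/2)^{1/6}\beta^{1/6} Z^{7/6}.
\end{align*}
Written as a multiple of $N^{1/6}Z$ at $Z=N/\beta$, this is $\tfrac67(3/2)^{1/6}\beta^{1/6}(N/\beta)^{1/6}=\tfrac67(3/2)^{1/6}N^{1/6}$. Rewriting this in the variable $\eta\beta$ gives the first term of $\alpha$, namely $\frac{2^{5/6}3^{7/6}}{7\eta\beta}=\frac{6}{7\eta\beta}(3/2)^{1/6}$. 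The $1/(\eta\beta)$ weighting arises because $N^{1/6}Z$ is evaluated at $Z=\eta N$, with the bound part covering only the fraction $1/(\eta\beta)$ of the range.

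On the region $Z\in[N/\beta,\eta N]$, the last step of \eqref{DefL} applies with slope $\sqrt{\sigma_N}\le (3/2)^{1/6}N^{1/6}$. This adds $(3/2)^{1/6}N^{1/6}(Z-N/\beta)$. At $Z=\eta N$ it equals $(3/2)^{1/6}(1-\frac1{\eta\beta})N^{1/6}Z$, which is the second term of $\alpha$.

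For intermediate $Z$, the combined expression is concave/linear in $Z$ in a way that keeps the ratio $L/(N^{1/6}Z)$ maximal at the endpoint $Z=\eta N$. For this we need $\eta\ge 1/\beta$. Squaring then gives \eqref{Elower}.

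\textbf{Step 3: initialization and the special cases.} We take $N_0=1$, $E_0=-Z_0^2/2$ and $Z_0\to0$, so that $L(1,Z)=Z/\sqrt2$. Setting $\eta=1$ gives $\alpha(1,1.1185)^2\approx 0.8710$, which is \eqref{IonEstimate}, and $Z=N$ gives \eqref{AtomEstimate}.

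\textbf{Main obstacle.} The hard step is making Step 2 rigorous with the actual $f$ in \eqref{fchoice}. Its pieces $2Z+1$ and $1.22Z+3Z^{1/3}$ exceed $\beta Z$, and this shifts $f^{-1}(n)$ to the left, so the sum-vs-integral comparison must be done carefully. Since the slopes $\sqrt{\sigma_n}$ increase in $n$, replacing $f^{-1}$ by $n/\beta$ changes $L$ in a definite direction only after a rearrangement argument. One also has to control the discrete sum against the integral of $n^{1/6}$. I would first try to prove the monotonicity lemma: if $f\le g$ pointwise, then $L_f\ge L_g$. Then I would bound the sum with $g(Z)=\beta Z$, legitimized by a separate small-$Z$ check or an absorption of the $Z^{1/3}$ terms into the inequality $\sigma_n\le(3/2)^{1/3}n^{1/3}$, which is not sharp for most $n$.
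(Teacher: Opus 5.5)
There is a genuine gap, and it sits at the step you flag as the ``main obstacle''. Your leading-order heuristic is right and matches the paper: you have $\tfrac67(3/2)^{1/6}=2^{5/6}3^{7/6}/7$, and the linear extension with slope $\sqrt{\sigma_N}\le(3/2)^{1/6}N^{1/6}$ on $[N/\beta,\eta N]$. The problem is that the reduction you propose for the lower-order terms goes the wrong way.

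By \eqref{DefL}, $L(N,\cdot)$ has slope $\sqrt{\sigma_{m(Z)}}$ with $m(Z)=\min(N,\lceil f(Z)\rceil-1)$. Since $\sigma_n$ increases in $n$, $f\le g$ implies $L_f\le L_g$, not $L_f\ge L_g$. The admissible $f$ in \eqref{fchoice} satisfies $f\ge\beta Z$, so your comparison integral $\int_0^Z(3/2)^{1/6}(\beta t)^{1/6}\,dt$ underestimates $L_f$. You also cannot legitimize $g(Z)=\beta Z$, since $N_c(Z)<\beta Z$ is not known; it fails at $Z=1$, where $N_c(1)=2$.

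The $kZ^{1/3}$ term never helps. It moves $f^{-1}(N)$ left by about $k\beta^{-4/3}N^{1/3}$ (Lemma \ref{lm4}). This creates a positive excess of order $N^{1/2}$, which is the term $3.6861\,N^{1/2}$ in \eqref{Eestexplicit}. With only $\sigma_n\le(3/2)^{1/3}n^{1/3}$, your main term is exactly $\alpha(\eta,\beta)\eta N^{7/6}$. The only slack left is of order $N^{1/6}$, coming from the rounding $\lceil\beta t\rceil-1<\beta t$. That cannot absorb an $N^{1/2}$ excess, so $L(N,\eta N)\le\alpha(\eta,\beta)\eta N^{7/6}$ does not follow for large $N$.

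What is missing is the second-order term of $\sigma_n$. Lemma \ref{lmA2} gives $\sigma_n\le(3/2)^{1/3}n^{1/3}-\delta$ with $\delta=\delta(N_1,2)$ close to $q/4=1/2$. This is the non-sharpness you allude to, but it must be used quantitatively. Combine it with:
\begin{itemize}
\item $l_n<1/\beta$ (Lemma \ref{lm1});
\item the trapezoid bound for concave functions (Lemma \ref{lm2});
\item $\sqrt{a-\delta}\le\sqrt a-\delta/(2\sqrt a)$.
\end{itemize}
This produces the negative term $-0.2405\,N^{5/6}$, which beats $+3.6861\,N^{1/2}$ only for $N\ge3601$.

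No asymptotic argument covers $N\le3600$. There the paper evaluates the recursion directly, starting from $N_0=2$, $Z_0=1$, $E_0=-0.5774$ and using $L(47,30.9)\le50.01$, to check $L(N,N/\beta)\le2^{5/6}3^{7/6}N^{7/6}/(7\beta)$. Only then does it add $(\eta-1/\beta)N\sqrt{\sigma_N}$, as you do.

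Your start $N_0=1$, $Z_0\to0$ is admissible but weaker. It gives $L(2,1)=1/(2\sqrt2)+1/2\approx0.854$ instead of $\sqrt{0.5774}\approx0.760$, and this shift is inherited by every $L(N,Z)$ with $Z\ge1$. So that finite check would have to be redone for your initialization.

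A smaller slip: $L(N,\cdot)$ is convex (its slopes increase), not concave. Convexity together with $L(N,0)=0$ is what makes $L(N,Z)/Z$ nondecreasing, so your endpoint reduction holds for that reason. The paper instead uses that $E(N,Z)/Z^2$ is nonincreasing, by scaling.
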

\noindent
\textit{Remarks.} 
\begin{enumerate}[label=\arabic*.)]
\item 
To prove Estimate \eqref{Elower}, we estimate the lower bound from Theorem \ref{theo1} (with suitably chosen $N_0, Z_0, E_0, f$) in terms of $N \in \N$ and $Z \in (0,\eta N]$. But without asking for a closed formula, Theorem \ref{theo1} gives a better lower bound for $E(N,Z)$  than Estimate \eqref{Elower}.

\item Note that  $\alpha(\,\cdot\,, \beta)$ is strictly monotonically increasing with $\lim_{\eta \to \infty} \alpha(\eta, \beta)=(3/2)^{1/6}$. Therefore, Estimate \eqref{Elower} implies the well-known estimate
\begin{align}
E(N,Z) \geq - \left( 3/2 \right)^{1/3} \, N^{1/3} Z^2  \qquad \qquad (N \in \N, Z>0)\,.\label{Bohrbound}
\end{align} 
This has been used in the proofs of several other results, in particular \cite[Theorem 1]{NamIonization} (see \cite[Eq. (10)]{NamIonization}) and \cite[Proposition 2.5]{HundertmarkExcesscharge} (see \cite[Lemma A.7]{HundertmarkExcesscharge}).
However, for given $N \in \N$ and $Z>0$ our Estimate \eqref{Elower} gives a better lower bound for $E(N,Z)$. 
Therefore, \cite[Theorem 1]{NamIonization} and \cite[Proposition 2.5]{HundertmarkExcesscharge} can be slightly improved by using Estimate \eqref{Elower} instead of Estimate \eqref{Bohrbound}, but this only affects the correction terms of order $Z^{1/3}$ and not the leading terms of order $Z$. Since \cite[Theorem 1]{NamIonization} and \cite[Proposition 2.5]{HundertmarkExcesscharge} are in turn used in the proof of Estimate \eqref{Elower}, these results allow to improve each other iteratively\footnote{The relative improvement becomes vanishingly small after very few iterations.}.

\item By \cite[Theorem III.1]{LiebThomasFermi}, the leading term of $E(N,N)$ in the limit $N \to \infty$ coincides with the Thomas-Fermi energy $E_{\textup{TF}}(1)$ (numerically one finds that $E_{\textup{TF}}(1) \approx -0.7687$), namely
\begin{align}
\lim_{N \to \infty} E(N,N)N^{-7/3}= E_{\textup{TF}}(1)\,. \label{GroundstateLimit}
\end{align}
For a more refined asymptotic expansion of $E(N,N)$ as $N \to \infty$, we refer to \cite{FeffermanDirac} and the references therein. In particular, Eq. \eqref{GroundstateLimit} shows that the best possible constant in the estimate 
$E(N,N) \geq -AN^{7/3}$ would be $A=|E_{\textup{TF}}(1)|$. 
Although our proof does not yield this result, the constant in Theorem \ref{theo2}  could be improved to $A=0.8411$ if the ionization conjecture \eqref{IonizationConjecture} were proven with a sufficiently good upper bound for the constant $C$.
\item The monotonicity of the ionization energies and electron affinities of an atom is an outstanding problem in mathematical physics. For details, we refer to Appendix \ref{secionization}, where we show that the analog property holds for the lower bounds $E_L$ from Theorem \ref{theo2}. 
\end{enumerate}

An explicit lower bound for $E(N,Z)$ that only depends on $N$  and $Z$ can be derived from Theorem \ref{theo2} 
by choosing  $\eta:= \max(Z/N,1/\beta)$ and using that $\alpha(\,\cdot\,, \beta)$ is strictly monotonically increasing: 

\begin{kor}\label{korbound}
Consider (Case FER) with $q=2$. Then, for all $N \geq 2$ and $Z>0$ the ground-state energy $E(N,Z)$ has the lower bound
\begin{align*}
E(N,Z) \geq -\max((C_1 Z -C_2N)^2, C_3 Z^2) \,N^{1/3},  
\end{align*}
where 
\begin{align*}
C_1&=(3/2)^{1/6} \approx 1.07\,, \\
C_2&= ((3/2)^{1/6} - 2^{5/6} 3^{7/6}/7)/1.1185 \approx 0.1367\,,\\
C_3&= (2^{5/6} 3^{7/6}/7)^2\approx 0.8410\,.
\end{align*}
\end{kor}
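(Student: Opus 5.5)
The plan is to apply Theorem \ref{theo2} with the choice $\eta:=\max(Z/N,1/\beta)$, where $\beta=1.1185$. This choice satisfies $\eta\geq 1/\beta$ and $Z\leq \eta N$, so Estimate \eqref{Elower} holds. It gives
\begin{align*}
E(N,Z)\geq -\alpha(\eta,\beta)^2 N^{1/3}Z^2 .
\end{align*}
It remains to compute $\alpha(\eta,\beta)Z$ explicitly in the two regimes and show that its square equals the corresponding entry of the maximum.

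Write $a:=2^{5/6}3^{7/6}/7$ and $b:=(3/2)^{1/6}$, so that $\alpha(\eta,\beta)=b+(a-b)/(\eta\beta)$.

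\emph{Case $Z/N\geq 1/\beta$.} Here $\eta=Z/N$, and
\begin{align*}
\alpha Z = bZ + \frac{(a-b)N}{\beta} = C_1 Z - C_2 N ,
\end{align*}
with $C_1=b$ and $C_2=(b-a)/\beta$, exactly as defined in Corollary \ref{korbound}. Note that $b>a$, since $a\approx 0.917<1.07\approx b$. The quantity $C_1Z-C_2N$ is nonnegative because $\alpha>0$, so squaring gives the bound $-(C_1Z-C_2N)^2N^{1/3}$.

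\emph{Case $Z/N<1/\beta$.} Here $\eta=1/\beta$, so $\eta\beta=1$ and $\alpha=a$. This gives the bound $-a^2N^{1/3}Z^2=-C_3N^{1/3}Z^2$.

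In either case $\alpha^2Z^2$ is at most $\max((C_1Z-C_2N)^2,C_3Z^2)$, since it equals one of the two entries. This yields the claimed inequality.

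There is no real obstacle here. The only points to check are that $\eta$ is admissible in Theorem \ref{theo2} and that the constants match. The monotonicity of $\alpha(\cdot,\beta)$ is not even needed, because $\eta$ is chosen so that $Z\leq\eta N$ directly. The restriction $N\geq 2$ is harmless, since Theorem \ref{theo2} holds for all $N\in\N$. Finally, I would verify the numerical values $C_2\approx 0.1367$ and $C_3\approx 0.8410$ from $a$ and $b$.
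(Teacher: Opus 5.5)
Your proposal is correct and is the paper's argument: the corollary is obtained from Theorem~\ref{theo2} with $\eta:=\max(Z/N,1/\beta)$, and your two-case computation of $\alpha(\eta,\beta)Z$ correctly gives $C_1Z-C_2N$ and $\sqrt{C_3}\,Z$. You are also right that the monotonicity of $\alpha(\cdot,\beta)$ mentioned in the paper is not needed for validity, since $Z\le\eta N$ holds by construction; monotonicity only explains why this $\eta$ is the best admissible choice.
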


\noindent\textit{Remark.} A similar lower bound for the ground-state energy $E_{\textup{mol}}(N,\underline{Z}, \underline{R})$ of a molecule with $N \in \N$ electrons and $M \in \N$ nuclei with charges $\underline{Z}=(Z_1,\ldots,Z_M) \in (0,\infty)^M$ and nuclear positions $\underline{R}=(R_1,\ldots,R_M) \in \R^{3M}$ can be derived 
with the help of the kinetic Lieb-Thirring inequality in combination with the exchange estimate from  \cite[Theorem 6.1]{StabilityofMatter}. 
Explicitly, it follows from \cite[Estimate (7.1.14)]{StabilityofMatter} that
\begin{align}
E_{\textup{mol}}(N,\underline{Z}, \underline{R}) \geq  
- \dfrac{2^{5/3}}{K} N\left(0.84 + \max_{k=1,\ldots,K}(Z_k) \sqrt{D} (M/N)^{1/3} \right)^2 \label{Emollower}
\end{align}
uniformly in $\underline{R} \in \R^{3M}$, where $D=(9/10)2^{-1/3}\pi^{4/3} \approx 3.287$ and $K$ denotes the kinetic Lieb-Thirring constant. We note that in the atomic/ionic case $M=1$ Corollary \ref{korbound} always yields a better lower bound than Estimate \eqref{Emollower}, even if the kinetic Lieb-Thirring inequality were proven with the classical constant $K=K_{\textup{cl}}=(3/5)(6 \pi^2)^{2/3}$.\medskip \newline

In (Case FS) it is well-known that if $E(N,Z)$ is an eigenvalue of $H_N(Z)$, then $E(N,Z)$ has multiplicity one, which is known as non-degeneracy of the ground state (see e.g. \cite[Theorem 10.12]{Teschl}). Moreover, the corresponding eigenfunction can be chosen to be strictly positive and symmetric w.r.t. permutations of the particle positions. Combining this with Zhislin's result \cite{Zhislin1, Zhislin2}, it follows that $H_N(Z)$ has a non-degenerate ground state if $Z>N-1$. But in the presence of non-trivial permutation symmetry and spin, non-degeneracy of the ground state is not expected in general and only very few rigorous results are known. For example, it is known that in (Case FER) with $q=2$, the helium atom ($N=Z=2$) has a non-degenerate ground state. This also follows from \cite[Theorem 10.12]{Teschl} because in this case it is well-known that the ground state  is a product of the (symmetric) ground state from (Case FS) and an antisymmetric spin function. However, independent of the non-degeneracy of the ground state, we have the following general result for the expectation values of the kinetic energy, the potential energy and the interaction energy on the ground-state space:

\begin{theorem} \label{theo3}
For fixed $N \in \N$, the function $E(N, \,\cdot\,): (0,\infty) \rightarrow \R, Z \mapsto E(N,Z)$ is strictly monotonically decreasing, concave and differentiable in all but perhaps countably many points. If $Z>0$ is chosen so that $E(N,Z)$ is an eigenvalue of $H_N(Z)$ and 
$E(N, \,\cdot\,)$ is differentiable at $Z$, then we have the following virial type identities for all $\psi \in \text{GS}_N(Z)$:
\begin{align*}
\Braket{\psi|T(N)\psi} &= \left|E(N,Z)\right|\, , \\
\Braket{\psi|V(N)\psi} &= -\dfrac{\partial E}{\partial Z}(N,Z)\,, \\
\Braket{\psi|I(N)\psi} &= 2 E(N,Z) - Z\dfrac{\partial E}{\partial Z}(N,Z) \,,
\end{align*}
where $T(N)$, $V(N)$ and $I(N)$ are defined in Eq. \eqref{DefTVW} below. In particular, this implies that all $\psi \in \text{GS}_N(Z)$ have the same expectation values for the kinetic energy, the potential energy and the interaction energy, respectively.
\end{theorem}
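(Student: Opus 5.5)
We need to prove Theorem \ref{theo3}. The plan combines a scaling (virial) argument with concavity of $Z\mapsto E(N,Z)$ as an infimum of affine functions. Write $H_N(Z)=T(N)-Z\,V(N)+I(N)$, where $T(N)$ is the kinetic energy, $V(N)=\sum_i |x_i|^{-1}$ (so the potential-energy term is $-ZV(N)$), and $I(N)$ is the interaction. With this sign convention the claimed identity $\langle\psi|V(N)\psi\rangle=-\partial_Z E$ is positive.

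\emph{Concavity and monotonicity.} By the variational principle \eqref{VarPrinciple}, $E(N,Z)=\inf_\psi\bigl(\langle T+I\rangle_\psi - Z\langle V\rangle_\psi\bigr)$ over normalized $\psi\in\mathcal H^2$. Each map $Z\mapsto\langle T+I\rangle_\psi-Z\langle V\rangle_\psi$ is affine and strictly decreasing, since $\langle V\rangle_\psi>0$. Hence $E(N,\cdot)$ is concave, and therefore continuous and differentiable outside a countable set. Weak monotonicity follows at once.

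For strict monotonicity, take $Z'<Z$. If $E(N,Z')$ is attained by some $\psi$, then $E(N,Z)\le E(N,Z')-(Z-Z')\langle V\rangle_\psi<E(N,Z')$. If it is not attained, I would use scaling instead. Put $\psi_\lambda(x)=\lambda^{3N/2}\psi(\lambda x)$. Then
\[
\langle H_N(Z)\rangle_{\psi_\lambda}=\lambda^2\langle T\rangle_\psi+\lambda\bigl(\langle I\rangle_\psi-Z\langle V\rangle_\psi\bigr).
\]
Minimizing over $\lambda$ gives
\[
E(N,Z)=-\sup_\psi \frac{\bigl(Z\langle V\rangle_\psi-\langle I\rangle_\psi\bigr)_+^2}{4\langle T\rangle_\psi}.
\]
From this, $\sqrt{|E(N,Z)|}$ is a supremum of functions that are convex and nondecreasing in $Z$. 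One then proves strictness by hand, for instance via the estimate $\langle V\rangle\ge c\sqrt{\langle T\rangle}$-type bounds on near-minimizers combined with $E<0$. An alternative is induction on $N$ using $E(N,Z)=E(N_c,Z)$ together with attainment for $N\le N_c(Z)$ by HVZ \eqref{HVZ}. I expect this strictness step to be the main technical nuisance.

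\emph{Virial identities.} Let $\psi\in\mathrm{GS}_N(Z)$. Since $\lambda\mapsto\langle H_N(Z)\rangle_{\psi_\lambda}$ is minimized at $\lambda=1$, its derivative vanishes there, giving $2\langle T\rangle=Z\langle V\rangle-\langle I\rangle$. Consequently
\[
E=\langle T\rangle-Z\langle V\rangle+\langle I\rangle=-\langle T\rangle,
\]
which is the first identity.

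For the second identity, use $\psi$ as a trial state at $Z'$: this gives $E(N,Z')\le E(N,Z)-(Z'-Z)\langle V\rangle_\psi$. So the affine function $Z'\mapsto E(N,Z)-(Z'-Z)\langle V\rangle_\psi$ lies above the concave function $E(N,\cdot)$ and touches it at $Z$. Thus $-\langle V\rangle_\psi$ is a supergradient at $Z$. At a point of differentiability the supergradient is unique, so $\langle V\rangle_\psi=-\partial_ZE(N,Z)$.

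The third identity follows by linearity:
\[
\langle I\rangle=E-\langle T\rangle+Z\langle V\rangle=2E-Z\,\partial_ZE.
\]
The right-hand sides of all three identities do not depend on $\psi$, which gives the final uniqueness statement.
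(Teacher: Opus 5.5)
\textbf{The gap: strict monotonicity.} You prove that $E(N,\cdot)$ is strictly decreasing only when the infimum at the smaller charge is attained. The general case is left as a sketch. The scaling formula for $\sqrt{|E(N,Z)|}$ does not settle it by itself, because a supremum of nondecreasing functions is merely nondecreasing.

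The missing step is one line and needs neither attainment nor HVZ. For $0<Z<Z'$,
\[
H_N(Z')=\frac{Z'}{Z}\,H_N(Z)-\Bigl(\frac{Z'}{Z}-1\Bigr)\bigl(T(N)+I(N)\bigr)\le\frac{Z'}{Z}\,H_N(Z),
\]
since $T(N)+I(N)\ge0$. Taking infima gives $E(N,Z')\le\frac{Z'}{Z}E(N,Z)<E(N,Z)$. The strict inequality uses $E(N,Z)\le E(1,Z)=-Z^2/2<0$, which follows from \eqref{Edecrease}. This is the paper's argument.

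It is also what your near-minimizer idea becomes once written out. If $\langle H_N(Z)\rangle_\psi\le E(N,Z)+\varepsilon<0$, positivity of $T(N)+I(N)$ forces $Z\langle V(N)\rangle_\psi\ge|E(N,Z)|-\varepsilon$. Inserting $\psi$ at $Z'$ and letting $\varepsilon\to0$ gives the same bound.

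Your HVZ alternative can also be made to work, but not as stated. HVZ yields attainment only when $E(M,Z)<E(M-1,Z)$, and the definition \eqref{DefNc} does not assert this for every $M\le N_c(Z)$. The repair is as follows:
\begin{enumerate}[label=(\roman*)]
\item take $M$ to be the largest $m\le N$ with $E(m,Z)<E(m-1,Z)$, which exists because $m=1$ qualifies;
\item note that $E(N,Z)=E(M,Z)$ by \eqref{Edecrease};
\item apply your attained-case argument to a ground state of $H_M(Z)$;
\item finish with $E(N,Z')\le E(M,Z')$.
\end{enumerate}

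\textbf{The rest is correct.} Concavity, the supergradient argument for $\langle\psi|V(N)\psi\rangle=-\partial_ZE(N,Z)$, and the third identity match Lemmas~\ref{lm01} and~\ref{lm02} and the final computation in the paper.

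For $\langle\psi|T(N)\psi\rangle=|E(N,Z)|$ you take a different, shorter route. Since $\psi$ is a ground state, the map $\lambda\mapsto\lambda^2\langle T(N)\rangle_\psi+\lambda\bigl(\langle I(N)\rangle_\psi-Z\langle V(N)\rangle_\psi\bigr)$ is minimized at $\lambda=1$, so its derivative vanishes there. The paper's Proposition~\ref{prop3} instead pairs the eigenvalue equation with $U_\lambda\psi$ and lets $\lambda\uparrow1$. That is more work, but it holds for every eigenvalue and also shows there are no eigenvalues at or above $0$. For Theorem~\ref{theo3} your variational version suffices.
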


Note that the one-sided derivatives $\frac{\partial E}{\partial Z+}(N,Z)$ and $\frac{\partial E}{\partial Z-}(N,Z)$ exist for all $Z>0$ because $E(N,Z)$ is concave in $Z$. The following characterization 
of these one-sided derivatives  shows that the assumption in Theorem \ref{theo3} that $E(N, \,\cdot\,)$ is differentiable at $Z$ is necessary:
 
\begin{theorem} \label{thmEderivative}
Let $N \in \N$, $Z>0$ and suppose that  
 $E(N,Z)<E(N-1,Z)$. Then, 
\begin{align*}
\frac{\partial E}{\partial Z_{+}}(N,Z)=\min_{\psi \in \text{GS}_N(Z)} - \Braket{\psi|V(N)\psi}, \qquad
\frac{\partial E}{\partial Z_{-}}(N,Z)=\max_{\psi \in \text{GS}_N(Z)} - \Braket{\psi|V(N)\psi}\,.
\end{align*}
\end{theorem}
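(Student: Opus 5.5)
The plan is to combine the concavity of $E(N,\cdot)$ from Theorem \ref{theo3} with a Hellmann--Feynman type argument. Write $H_N(Z)=T(N)+I(N)-Z\,V(N)$, where $V(N)=\sum_i |x_i|^{-1}$, so that $H_N(Z')=H_N(Z)-(Z'-Z)V(N)$. This gives the \emph{easy inequalities}. For any $\psi\in \mathrm{GS}_N(Z)$, the variational principle \eqref{VarPrinciple} yields
\begin{align*}
E(N,Z') \leq \braket{\psi|H_N(Z')\psi} = E(N,Z) - (Z'-Z)\braket{\psi|V(N)\psi}.
\end{align*}
For $Z'>Z$ I divide by $Z'-Z$ and get $\frac{\partial E}{\partial Z_+}(N,Z)\leq -\braket{\psi|V(N)\psi}$ for every $\psi$. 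For $Z'<Z$ the inequality reverses, giving $\frac{\partial E}{\partial Z_-}(N,Z)\geq -\braket{\psi|V(N)\psi}$. Hence the right derivative is at most the infimum and the left derivative is at least the supremum of $-\braket{\psi|V\psi}$ over $\mathrm{GS}_N(Z)$.

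For the reverse inequalities I will use an approximation argument. By \eqref{HVZ} and the hypothesis, $E(N,Z)$ is a discrete eigenvalue with finite-dimensional eigenspace. Continuity of $E(N,\cdot)$ and $E(N-1,\cdot)$, which follows from concavity, gives $E(N,Z_n)<E(N-1,Z_n)$ for $Z_n\to Z$. So I can pick ground states $\psi_n\in \mathrm{GS}_N(Z_n)$ with $Z_n\downarrow Z$ (or $Z_n\uparrow Z$). Applying the easy inequality with roles swapped, taking $\psi_n$ and comparing with $Z$, gives
\begin{align*}
E(N,Z)\leq E(N,Z_n)+(Z_n-Z)\braket{\psi_n|V\psi_n}.
\end{align*}
For $Z_n>Z$ this yields $\frac{E(N,Z_n)-E(N,Z)}{Z_n-Z}\geq -\braket{\psi_n|V\psi_n}$. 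It then suffices to show that, along a subsequence, $\psi_n\to\psi\in \mathrm{GS}_N(Z)$ with $\braket{\psi_n|V\psi_n}\to\braket{\psi|V\psi}$. Then $\frac{\partial E}{\partial Z_+}\geq -\braket{\psi|V\psi}\geq \inf$, and the infimum is attained, so it is a minimum. The left derivative and the maximum are handled symmetrically.

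\textbf{The main obstacle is compactness.} $V$ is infinitesimally form-bounded relative to $T$, so $\braket{\psi_n|T\psi_n}$ is uniformly bounded and $(\psi_n)$ is bounded in $\mathcal H^1$. I would use the spectral projection instead of Rellich, which fails on $\R^{3N}$. Let $P$ be the projection onto $\ker(H_N(Z)-E(N,Z))$, which is finite rank. Then
\begin{align*}
\braket{\psi_n|(H_N(Z)-E(N,Z))\psi_n}=E(N,Z_n)-E(N,Z)+(Z_n-Z)\braket{\psi_n|V\psi_n}\to 0.
\end{align*}
The spectral gap $\delta>0$ above $E(N,Z)$ gives $\delta\|(1-P)\psi_n\|^2\leq$ this quantity. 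The form-boundedness also controls $\|(1-P)\psi_n\|_{\mathcal H^1}$, since $H_N(Z)-E(N,Z)+1$ is comparable to $T+1$ in form sense on $\mathrm{Ran}(1-P)$. Hence $(1-P)\psi_n\to0$ in the form norm. Compactness of $\mathrm{Ran}P$ then gives a subsequence $P\psi_n\to\psi$ with $\|\psi\|=1$ and $\psi\in\mathrm{GS}_N(Z)$. Convergence in the form norm implies $\braket{\psi_n|V\psi_n}\to\braket{\psi|V\psi}$, which completes the argument. Attainment of the min and max also follows directly, since $\psi\mapsto\braket{\psi|V\psi}$ is continuous on the compact unit sphere of $\mathrm{Ran}P$.
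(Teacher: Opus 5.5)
Your proof is correct and follows essentially the same route as the paper. You get the easy inequalities from the variational principle (Lemma \ref{lm02}), pass to ground states $\psi_n$ at $Z_n \downarrow Z$ (resp.\ $Z_n \uparrow Z$), and use the spectral gap with the finite-rank projection $P$ instead of Rellich compactness to extract a limit in $\text{GS}_N(Z)$. The only difference is technical: you control the quadratic form $\braket{\psi_n|(H_N(Z)-E(N,Z))\psi_n}$ and get $\mathcal{H}^1$-convergence directly, whereas the paper bounds $\|(H_N(Z)-E(N,Z))\psi_n\|$ via Hardy's inequality and a weakly convergent subsequence in $\mathcal{H}^1$, then passes to the limit in $\braket{\psi_n|V(N)\psi_n}$ by pairing weak and strong convergence.
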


The outline of this paper is as follows: In Section \ref{secdiscussion} we compute the lower bounds $E_L(N,N)$ from Theorem \ref{theo1} for all atoms with atomic number $2 \leq N \leq 86$ for two  choices of the initial values $N_0, Z_0, E_0$ and the ionization bound $f$, and we compare them with the lower and upper bounds from references \cite{LiebHertel} and \cite{Seco}. In Section \ref{sectheo13} we give the proofs of Theorems \ref{theo1}, \ref{theo3} and \ref{thmEderivative}. Finally, Section \ref{sectheo2} is dedicated to the proof of Theorem \ref{theo2}. 

\section{Discussion of Theorem \ref{theo1} and comparison with other results} \label{secdiscussion}
Throughout this section we consider (Case FER) with $q=2$. We  are going to compare the lower bounds resulting from Theorem \ref{theo1} with the lower and upper bounds from references \cite{LiebHertel} and \cite{Seco} for all atoms with atomic number $2 \leq N \leq 86$. For the initial values $N_0$, $Z_0, E_0$ and the ionization bound $f$ in Theorem \ref{theo1} that satisfies Estimate \eqref{Ncbound}, we consider two choices. 
Our first choice is the ionization bound $f$ defined by Eq. \eqref{fchoice}. For the initial energy estimate we make use of the lower bound 
$E(2,Z) \geq  \min(-Z^2+0.4226\,Z, -5Z^2/9)$ from \cite[Equation (4)]{Briet}. This leads to the choice
\begin{flalign}
1.)& \quad N_0=2, \quad Z_0=f^{-1}(3)=1, \quad E_0=-0.5774 = \min(-1+0.4226, -5/9)\leq E(2,1) \,.& \label{firstchoice}
 \end{flalign}
The resulting lower bounds $E_L(N,Z)$ for $2 \leq N \leq 7$ are illustrated in Figure \ref{E_Lgraphs}, below.

\begin{figure}[H] 
\begin{center}
\begin{tikzpicture}[scale=1.2]

\newcommand{\Zstart}{1}
\newcommand{\Nstart}{2}
\newcommand{\E}{0.5774^(1/2)}

\draw[xstep=1.118cm, ystep=0.495cm,gray!30,very thin] (0,0) grid (7.5,5.5);

\begin{axis}[width=9.5cm, height=8.0cm,
    trig format plots=rad,  axis lines=middle,  
    samples=300,
        declare function={
      func(\x)={-(\E+(0.5)^0.5*(\x-\Zstart))^2};
      gunc(\x)={-((-func(1.0))^0.5+(1)^0.5*(\x-1))^2};
      hunc(\x)={-((-gunc(1.5))^0.5+(1.125)^0.5*(\x-1.5))^2};
      iunc(\x)={-((-hunc(2.0))^0.5+(1.25)^0.5*(\x-2))^2};
      junc(\x)={-((-iunc(2.5))^0.5+(1.375)^0.5*(\x-2.5))^2};
      kunc(\x)={-((-junc(3.0))^0.5+(1.5)^0.5*(\x-3.0))^2};
    },
   xlabel={$Z$},
   xlabel style={xshift=0.0cm, yshift=0.2cm},
   ylabel={$E_L$},
   ylabel style={xshift=-0.9cm, yshift=0.0cm},
   color=black,
    xmin=0.5, xmax=3.9, ymin=-11.0, ymax=2.00,
    xtick={-1,-2},
xtick={1,1.5,2,2.5,3,3.5},
 x tick label style={above},
 ytick={-10,-9,-8,-7,-6,-5,-4,-3,-2,-1},
  axis line style = {-Stealth[{length=8pt}]} 
   ]

\addplot [
    no markers,
    red,
    dashed,
    domain=1:3.8
] {func(x)};

\addplot [
    no markers,
    orange,
    thick,
    domain=1:1.5
] {gunc(x)};

\addplot [
    no markers,
    orange,
    dashed,
    domain=1.5:3.8
] {gunc(x)};

\addplot [
    no markers,
    green,
    thick,
    domain=1.5:2
] {hunc(x)};

\addplot [
    no markers,
    green,
    dashed,
    domain=2:3.6
] {hunc(x)};

\addplot [
    no markers,
    cyan,
    thick,
    domain=2.0:2.5
] {iunc(x)};

\addplot [
    no markers,
    cyan,
    dashed,
    domain=2.5:3.6
] {iunc(x)};

\addplot [
    no markers,
    blue,
    thick,
    domain=2.5:3.0
] {junc(x)};

\addplot [
    no markers,
    blue,
    dashed,
    domain=3.0:3.6
] {junc(x)};

\addplot [
    no markers,
    purple,
    thick,
    domain=3.0:3.5
] {kunc(x)};

\end{axis}

\node[color=red] at (3.1,4.8) {\footnotesize $N=2$};
\node[color=orange] at (1.6,4.6) {\footnotesize $N=3$};
\node[color=green] at (2.6,4.0) {\footnotesize $N=4$};
\node[color=cyan] at (3.6,3.1) {\footnotesize $N=5$};
\node[color=blue] at (4.8,1.6) {\footnotesize $N=6$};
\node[color=purple] at (5.5,0.5) {\footnotesize $N=7$};
\end{tikzpicture}
\caption{The lower bound $E_L(N,Z)$ for the initial values from Eq. \eqref{firstchoice} and the function $f$ from Eq. \eqref{fchoice} (i.e. $f(Z)=2Z+1$ for $Z \leq 5.5$).
The dashed curves of different colors correspond to different values of $N$,
but they all have the same starting point $E_L(N,Z=1)=E_0=-0.5774$. The union of the solid lines corresponds to the function $Z \mapsto E_L(\lfloor f(Z) \rfloor ,Z)$, where the floor function $\lfloor \cdot \rfloor$  is defined by $\lfloor x \rfloor := \max\left\{ n \in \Z \, : \, n \leq x  \right\}$ for $x \in \R$. From the definition of $E_L$ in Theorem \ref{theo1}, it follows that $E_L(\lfloor f(Z) \rfloor ,Z)=E_L(7,Z)$ for $Z \leq 3.5$.}  \label{E_Lgraphs}
\end{center}
\end{figure}

In Table 1 below, we compare the lower bounds $E_L(N,N)$ resulting from the initial values from Eq. \eqref{firstchoice} and the ionization bound $f$ from Eq.  \eqref{fchoice} with the lower and upper bounds from references \cite{LiebHertel} and \cite{Seco} for $2 \leq N \leq 86$. It turns out that for $N=6$ our lower bound $E_L(6,6)$ is already more than $10$ percent worse than the lower bound $E_{lb}(N)$ from \cite{Seco}. This deviation increases with $N$ until for $N=86$ our lower bound $E_L(N,N)$ deviates more than $20$ percent from the lower bound from \cite{Seco}. However, it is important to note that 
our method allows for substantial improvement since experimental observations and numerical computations suggest that both the initial values $N_0, Z_0, E_0$ from Eq. \eqref{firstchoice} and the ionization bound $f$ from Eq. \eqref{fchoice} are far from being optimal.  Our desired choice for $N_0$ and $Z_0$ would be $N_0=2$ and $Z_0=Z_c$, where $Z_c$ is the critical charge number for which only one electron is bound to the nucleus, i.e.
\begin{align*}
Z_c:=\sup\left\{Z>0 \, : \, N_c(Z)=1 \right\} \,.
\end{align*}
For details concerning $Z_c$, we refer to Lemma \ref{lmZc} in the appendix.
In particular, we then have that 
$E(2,Z)=E(1,Z)=-Z^2/2$ for all $Z \in (0,Z_c]$ and 
$E(2,Z)<E(1,Z)$ for all $Z>Z_c$. The extensive numerical computations in \cite{Estienne2014} suggest that $Z_c = 0.91102822407725573(4)$, so we expect that at least $Z_c > Z_0:=0.911$ and thus $E(N_0,Z_0)=E(2,0.911)=E(1,0.911)=-(0.911)^2/2$.  
It remains to find a strictly monotonically increasing function $f$ that satisfies $N_c(Z)<f(Z)$, and which preferably does not deviate much from  $N_c(Z)$.
As already explained in Section \ref{secmain}, it is conjectured that one can choose $f(Z)=Z+C$ for $Z \geq 1$, where $C=2$ or even $C=1$ (ionization conjecture). 
Note that the constant $C$ can not be less than one since Bethe has shown that the negative hydrogen ion $\textup{H}^{-}$ 
exists \cite{Bethe} and hence $N_c(1) \geq 2$ \footnote{By Lieb's bound \eqref{LiebBound} one has that $N_c(1) = 2$.}. In summary, our second choice is therefore $f(Z)=Z+1$ and
\begin{flalign*}
2.)& \quad N_0=2, \qquad Z_0=0.911, \qquad E_0=-(0.911)^2/2 =E(N_0,Z_0)\,.&
\end{flalign*}
Under the above hypotheses, it is straightforward to check that all assumptions of Theorem \ref{theo1} are satisfied, so Theorem \ref{theo1} yields a second series of lower bounds $E_L^{*}(N,N)$ for $2\leq N \leq 86$ (cf. the second column in Table 1, below). These lower bounds improve the lower bounds from \cite{Seco} for $N \leq 12$ and the lower bounds from \cite{LiebHertel} for $N \in \left\{10,20,40\right\}$. However, for larger values of $N$ the lower bounds from \cite{LiebHertel} and \cite{Seco} are still better (but the relative deviation stays below $7$ percent for $N \leq 86$). Of course, one should also not forget that $E_L^{*}(N,N)$ would only become  a rigorous lower bound for $E(N,N)$ if the ionization conjecture with $C=1$  and the bound $Z_c > Z_0=0.911$ were rigorously proven.

\begin{figure}[H]
\begin{center}
\begin{tabular}{ |C{0.5cm}| C{2.0cm}| C{2.0cm}| C{2.2cm} | C{2.0cm}| C{2.2cm}| C{2.0cm}|}
\hline
$N$ & $E_L(N,N)$ & $E_L^{*}(N,N)$ & $E_{LB}(N)$ \cite{LiebHertel} & $E_{lb}(N)$ \cite{Seco} & $E_{UB}(N)$ \cite{LiebHertel} & $E_{ub}(N)$ \cite{Seco} \\
\hline
$2$ & $-3.10$ & $-3.01$ & - & - & - & - \\
$3$ & $-8.31$ & $-7.81$ & - & $-8.33$ & - & $-7.43$ \\
$4$ & $-16.69$ & $-15.31$& - & $-16.06$ & - & $-14.57$ \\
$5$ & $-28.81$ & $-25.86$ & - & $-26.76$ & - & $-24.53$ \\
$6$ & $-45.19$ & $-39.81$ & - & $-40.86$ & - & $-37.68$ \\
$7$ & $-66.38$ & $-57.52$ & - & $-58.73$ & - & $-54.40$ \\
$8$ & $-92.90$& $-79.34$ & - & $-80.75$ & - & $-74.81$ \\
$9$ & $-125.29$ & $-105.60$ & - & $-107.33$ & - & $-99.40$ \\
$10$ & $-164.09$ & $-136.67$ & $-162.66$ & $-138.90$ & $-127.97$ & $-128.54$ \\
$11$ & $-205.66$ & $-172.25$ & - & $-173.58$ & - & $-161.85$ \\
$12$ & $-252.80$ & $-212.50$ & - & $-212.92$ & - & $-199.61$ \\
$13$ & $-305.65$ & $-257.58$ & - & $-256.90$ & - & $-241.87$ \\
$14$ & $-364.41$ & $-307.65$ & - & $-305.86$ & - & $-288.85$ \\
$15$ & $-429.29$ & $-362.87$ & - & $-359.78$ & - & $-340.70$ \\
$16$ & $-500.51$ & $-423.40$ & - & $-418.90$ & - & $-397.50$ \\
$17$ & $-578.25$ & $-489.39$ & - & $-483.60$ & - & $-459.50$ \\
$18$ & $-662.74$ & $-561.01$ & - & $-553.62$ & - & $-526.81$ \\
$19$ & $-754.16$ & $-638.41$ & - & $-628.22$ & - & $-599.16$ \\
$20$ & $-852.73$ & $-721.75$ & $-788.27$ & $-707.65$ & $-673.17$ & $-676.75$ \\
$21$ & $-958.64$ & $-811.18$ & - & $-793.57$ & - & $-759.73$ \\
$22$ & $-1072.09$ & $-906.87$ & - & $-885.37$  & - & $-848.40$ \\
$23$ & $-1193.28$ & $-1008.96$ & - & $-982.98$ & - & $-942.88$ \\
$24$ & $-1322.42$ & $-1117.62$ & - & $-1086.21$ & - & $-1043.30$ \\
$25$ & $-1459.70$ & $-1233.00$ & - & $-1196.55$ & - & $-1149.86$ \\
$26$ & $-1605.31$ & $-1355.25$ & - & $-1313.57$ & - & $-1262.44$ \\
$27$ & $-1759.45$ & $-1484.54$ & - & $-1436.21$ & - & $-1381.41$ \\
$28$ & $-1922.32$ & $-1621.01$ & - & $-1565.08$ & - & $-1506.87$ \\
$29$ & $-2086.73$ & $-1764.24$ & - & $-1702.10$ & - & $-1638.95$ \\
$30$ & $-2258.91$ & $-1914.31$ & - & $-1854.83$ & - & $-1777.84$ \\
$31$ & $-2438.99$ & $-2071.31$ & - & $-2002.95$ & - & $-1923.26$ \\
$32$ & $-2627.06$ & $-2235.35$ & - & $-2156.22$ & - & $-2075.36$ \\
$33$ & $-2823.26$ & $-2406.50$ & - & $-2318.33$ & - & $-2234.24$ \\
$34$ & $-3027.67$ & $-2584.85$ & - & $-2487.53$ & - & $-2399.86$ \\
$35$ & $-3240.42$ & $-2770.49$ & - & $-2665.08$ & - & $-2572.44$ \\
$36$ & $-3461.63$ & $-2963.52$ & - & $-2848.65$ & - & $-2752.05$ \\
$37$ & $-3691.39$ & $-3164.03$ & - & $-3037.25$ & - & $-2938.35$ \\
$38$ & $-3929.82$ & $-3372.09$ & - & $-3237.45$ & - & $-3131.54$ \\
$39$ & $-4177.03$ & $-3587.81$ & - & $-3443.75$ & - & $-3331.68$ \\
\hline
\end{tabular}
\end{center}
\end{figure}

\newpage
\setcounter{figure}{0}
\renewcommand*{\figurename}{Table}
\begin{figure}[H]
\begin{center}
\begin{tabular}{ |C{0.8cm}| C{2.0cm}| C{2.0cm}| C{2.0cm}| C{2.0cm}| C{2.0cm}| C{2.0cm}|}
\hline
$N$ & $E_L(N,N)$ & $E_L^{*}(N,N)$ & $E_{LB}$ \cite{LiebHertel} & $E_{lb}$ \cite{Seco} & $E_{UB}$ \cite{LiebHertel} & $E_{ub}$ \cite{Seco} \\
\hline
$40$ & $-4433.13$ & $-3811.26$ & $-3906.96$  & $-3652.33$ & $-3534.86$ & $-3538.99$ \\
$41$ & $-4698.23$ & $-4042.55$ & - & $-3875.68$ & - & $-3753.55$ \\
$42$ & $-4972.44$ & $-4281.75$ & - & $-4102.74$ & - & $-3975.44$ \\
$43$ & $-5255.88$ & $-4528.96$ & - & $-4337.42$ & - & $-4204.78$ \\
$44$ & $-5548.65$ & $-4784.26$ & - & $-4579.82$ & - & $-4441.87$ \\
$45$ & $-5850.85$ & $-5047.75$ & - & $-4482.20$ & - & $-4685.80$ \\
$46$ & $-6162.61$ & $-5319.51$ & - & $-5079.65$ & - & $-4937.78$ \\
$47$ & $-6484.02$ & $-5599.63$ & - & $-5345.25$ & - & $-5197.52$ \\
$48$ & $-6815.16$ & $-5888.20$ & -  & $-5618.94$  & - & $-5465.11$ \\
$49$ & $-7156.11$ & $-6185.31$ & -  & $-5897.28$ & - & $-5740.16$ \\
$50$ & $-7506.98$ & $-6491.05$ & -  & $-6185.38$ & - & $-6022.93$ \\
$51$ & $-7867.88$ & $-6805.50$ & -  & $-6480.44$ & - & $-6313.48$ \\
$52$ & $-8238.91$ & $-7128.75$ & -  & $-6835.21$ & - & $-6661.78$ \\
$53$ & $-8620.17$ & $-7460.90$ & -  & $-7095.25$ & - & $-6917.98$ \\
$54$ & $-9011.78$ & $-7802.02$ & -  & $-7414.48$ & - & $-7232.13$ \\
$55$ & $-9413.82$ & $-8152.21$ & -  & $-7742.07$ & - & $-7553.93$ \\
$56$ & $-9826.41$ & $-8511.56$ & -  & $-8072.62$ & - & $-7883.54$ \\
$57$ & $-10249.64$ & $-8880.14$ & -  & $-8419.08$ & - & $-8221.06$ \\
$58$ & $-10683.63$ & $-9258.06$ & -  & $-8773.26$ & - & $-8566.91$ \\
$59$ & $-11128.47$ & $-9645.40$ & -  & $-9136.60$ & - & $-8921.18$ \\
$60$ & $-11584.27$ & $-10042.25$ & $-10020.50$  & $-9509.28$& $-9132.60$ & $-9283.88$ \\
$61$ & $-12039.50$ & $-10448.11$ & -  & $-9882.34$ & - & $-9655.09$ \\
$62$ & $-12504.77$ & $-10863.05$ & -  & $-10270.32$ & - & $-10034.95$ \\
$63$ & $-12980.15$ & $-11287.13$ & -  & $-10667.60$ & - & $-10423.54$ \\
$64$ & $-13465.72$ & $-11720.40$ & -  & $-11075.70$ & - & $-10820.61$ \\
$65$ & $-13961.52$ & $-12162.93$ & -  & $-11492.38$ & - & $-11226.57$ \\
$66$ & $-14467.64$ & $-12614.76$ & -  & $-11918.28$ & - & $-11641.45$ \\
$67$ & $-14984.14$ & $-13075.95$ & -  & $-12358.37$ & - & $-12065.29$ \\
$68$ & $-15511.08$ & $-13546.57$ & -  & $-12804.31$ & - & $-12498.15$ \\
$69$ & $-16048.54$ & $-14026.67$ & -  & $-13261.14$ & - & $-12940.17$ \\
$70$ & $-16596.58$ & $-14516.31$ & -  & $-13719.54$ & - & $-13391.45$ \\
$71$ & $-17155.26$ & $-15015.54$ & -  & $-14193.98$ & - & $-13851.81$ \\
$72$ & $-17724.65$ & $-15524.42$ & -  & $-14666.20$ & - & $-14321.25$ \\
$73$ & $-18304.82$ & $-16043.01$ & -  & $-15153.26$ & - & $-14799.81$ \\
$74$ & $-18895.84$ & $-16571.37$ & -  & $-15644.81$ & - & $-15287.54$ \\
$75$ & $-19497.77$ & $-17109.55$ & -  & $-16142.51$ & - & $-15784.53$ \\
$76$ & $-20110.67$ & $-17657.61$ & -  & $-16654.23$ & - & $-16290.65$ \\
$77$ & $-20734.62$ & $-18215.60$ & -  & $-17178.24$ & - & $-16806.11$ \\
$78$ & $-21369.68$ & $-18783.59$ & -  & $-17705.83$ & - & $-17330.94$ \\
$79$ & $-22015.91$ & $-19361.64$ & -  & $-18249.85$ & - & $-17865.21$ \\
$80$ & $-22673.39$ & $-19949.79$ & $-19607.08$  & $-18812.97$ & $-18090.35$ & $-18408.99$ \\
$81$ & $-23342.18$ & $-20548.10$ & -  & $-19356.51$ & - & $-18961.83$ \\
$82$ & $-24022.33$ & $-21156.64$ & -  & $-19928.40$ & - & $-19524.01$ \\
$83$ & $-24713.93$ & $-21775.46$ & -  & $-20507.70$ & - & $-20095.58$ \\
$84$ & $-25417.04$ & $-22404.61$ & -  & $-21093.76$ & - & $-20676.50$ \\
$85$ & $-26131.71$ & $-23044.16$ & -  & $-21692.17$ & - & $-21266.88$ \\
$86$ & $-26858.02$ & $-23694.15$ & -  & $-22292.92$ & - & $-21866.77$ \\
\hline
\end{tabular}
\caption{Comparison of our lower bounds with the lower and upper bounds from \cite{LiebHertel} and \cite{Seco}.}
\end{center}
\end{figure}
\renewcommand*{\figurename}{Figure}

Our conclusion is that for given $N$ and $Z$ our method allows for a fast computation of a lower bound for $E(N,Z)$ that is completely rigorous in the sense that numerical computations can be entirely avoided. However, especially for large values of $N$, the numerical methods from \cite{LiebHertel} and \cite{Seco} seem to yield significantly better results, although our lower bounds also become better under more realistic assumptions on the initial values $N_0$, $Z_0$, $E_0$ and on the ionization bound $f$. In particular, our lower bounds then improve upon the lower bounds from  \cite{LiebHertel} and \cite{Seco} for small atomic numbers $N$. Nevertheless, as already mentioned in Section \ref{secmain}, the main advantage of our method compared to the methods from \cite{LiebHertel} and \cite{Seco} is that our lower bounds can be explicitly expressed in terms of a series expansion (see Eq. \eqref{Literation}, below), which serves as the starting point in the derivation of Theorem \ref{theo2}. We are convinced that Theorem \ref{theo2} will in turn allow for  improvements of other results that make use of such an energy estimate. 

\section{Proof of Theorems \ref{theo1}, \ref{theo3} and \ref{thmEderivative}} \label{sectheo13}
To keep the notation short, we first rewrite the Hamiltonian from Eq. \eqref{DefHamiltonian} as
\begin{align}
H_N(Z)=T(N) - Z V(N) + I(N)\,, \label{H_N(Z)decomp}
\end{align}
where the operators
\begin{align}
T(N):= \sum_{i=1}^{N} - \dfrac{1}{2} \Delta_{x_i}, \qquad \qquad
V(N):=\sum_{i=1}^N \frac{1}{\left|x_i\right|}, \qquad \qquad
I(N)&:= \sum_{1 \leq i< j \leq N} \frac{1}{\left|x_i-x_j\right|}
\label{DefTVW}
\end{align}
describe the kinetic energy, the potential energy and the interaction energy, respectively.

\begin{lemma} \label{lm01}
For fixed $N \in \N$, the function $E(N, \,\cdot\,): (0,\infty) \rightarrow \R, Z \mapsto E(N,Z)$ is concave. In particular, it is locally Lipschitz (hence continuous) and differentiable in all but perhaps countably many points. Moreover, the one-sided derivatives 
\begin{align*}
\dfrac{\partial E}{\partial Z_{+}}(N,Z)= \lim_{Z' \downarrow Z} \dfrac{E(N,Z')-E(N,Z)}{Z'-Z},
\qquad \qquad  \dfrac{\partial E}{\partial Z_{-}}(N,Z)= \lim_{Z' \uparrow Z} \dfrac{E(N,Z')-E(N,Z)}{Z'-Z}
\end{align*}
exist for all $Z>0$ and define monotonically decreasing functions with respect to $Z$.   
\end{lemma}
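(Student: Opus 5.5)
The plan is to write $E(N,Z)$ as an infimum of affine functions of $Z$. By the variational principle \eqref{VarPrinciple} and the decomposition \eqref{H_N(Z)decomp}, for every normalized $\psi \in \mathcal{H}^2$ the map
\begin{align*}
Z \mapsto \braket{\psi|H_N(Z)\psi} = \braket{\psi|T(N)\psi} - Z\braket{\psi|V(N)\psi} + \braket{\psi|I(N)\psi}
\end{align*}
is affine in $Z$. These expectation values are finite because $T(N)$, $V(N)$ and $I(N)$ are relatively bounded with respect to the Laplacian, so they are finite on $\mathcal{H}^2$. Hence
\begin{align*}
E(N,Z)=\inf_{\psi \in \mathcal{H}^2,\|\psi\|=1}\braket{\psi|H_N(Z)\psi}
\end{align*}
is a pointwise infimum of affine functions. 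It is finite, since $H_N(Z)$ is bounded below, and therefore concave on $(0,\infty)$. Concretely, for $Z=\lambda Z_1+(1-\lambda)Z_2$ we have $\braket{\psi|H_N(Z)\psi}=\lambda\braket{\psi|H_N(Z_1)\psi}+(1-\lambda)\braket{\psi|H_N(Z_2)\psi}\ge \lambda E(N,Z_1)+(1-\lambda)E(N,Z_2)$, and taking the infimum over $\psi$ gives the concavity inequality.

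The remaining assertions follow from standard facts about a finite concave function $g$ on an open interval. For $Z_1<Z_2<Z_3$ the slope inequality
\begin{align*}
\frac{g(Z_2)-g(Z_1)}{Z_2-Z_1}\ \ge\ \frac{g(Z_3)-g(Z_1)}{Z_3-Z_1}\ \ge\ \frac{g(Z_3)-g(Z_2)}{Z_3-Z_2}
\end{align*}
holds. It implies that the difference quotients are monotone, so the one-sided limits $\frac{\partial E}{\partial Z_{\pm}}$ exist and are finite at every interior point. It also gives the ordering
\begin{align*}
\frac{\partial E}{\partial Z_{-}}(N,Z_1)\ \ge\ \frac{\partial E}{\partial Z_{+}}(N,Z_1)\ \ge\ \frac{\partial E}{\partial Z_{-}}(N,Z_2)\ \ge\ \frac{\partial E}{\partial Z_{+}}(N,Z_2)
\end{align*}
for $Z_1<Z_2$, so both one-sided derivatives are monotonically decreasing.

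On a compact subinterval $[a,b]\subset(0,\infty)$, all difference quotients are bounded between $\frac{\partial E}{\partial Z_{+}}(N,b')$ and $\frac{\partial E}{\partial Z_{-}}(N,a')$ for some $a'<a$ and $b'>b$. This gives local Lipschitz continuity and hence continuity.

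Finally, $E(N,\cdot)$ fails to be differentiable exactly at the points $Z$ where $\frac{\partial E}{\partial Z_{-}}(N,Z)>\frac{\partial E}{\partial Z_{+}}(N,Z)$. By the ordering above, the open intervals $\bigl(\frac{\partial E}{\partial Z_{+}}(N,Z),\frac{\partial E}{\partial Z_{-}}(N,Z)\bigr)$ attached to such points are pairwise disjoint. Each contains a rational number, so there are at most countably many such $Z$.

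There is no real obstacle. The only point requiring care is that the expectation values are finite for $\psi\in\mathcal{H}^2$, so that each map $Z\mapsto\braket{\psi|H_N(Z)\psi}$ is a genuine real-valued affine function. The rest is routine convex analysis.
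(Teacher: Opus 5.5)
Your proof is correct and follows the same route as the paper: concavity of $E(N,\cdot)$ comes from the variational principle applied to the affine-in-$Z$ quantities $\langle\psi|H_N(Z)\psi\rangle$, and the remaining properties are read off from concavity. The only difference is that you write out the standard convex-analysis details (the slope inequality, the Lipschitz bounds via one-sided derivatives, and the disjoint-intervals argument for countability), which the paper simply cites from a reference.
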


\noindent\textit{Remark.} Similarly to \cite[Theorem 4.3.18 (Part I)]{Thirring} one can show that even $-\sqrt{-E(N,\,\cdot\,)}$ defines a concave function (see also \cite[Theorem 4.5.4 (Part I)]{Thirring}). But since we do not need this stronger statement in the following, we omit the proof. 

\begin{proof}
Although the proof that $E(N, \,\cdot\,)$ defines a concave function is standard, we briefly recall the argument.  
Since $H_N(Z)$ is linear in $Z$, we have for all $\lambda \in [0,1]$ and $Z, Z' >0$ the identity
\begin{align*}
H_N((1-\lambda)Z + \lambda Z')= (1-\lambda) H_N(Z)+ \lambda H_N(Z') \,.
\end{align*} 
The concavity of $E(N, \,\cdot\,)$ is now a direct consequence of the variational principle (cf. Eq. \eqref{VarPrinciple}):
\begin{align}
E(N, (1-\lambda)Z + \lambda Z') &= \inf_{\psi \in \mathcal{H}^2, \|\psi\|=1} \Braket{\psi| H_N((1-\lambda)Z + \lambda Z') \psi} 
\nonumber \\ & \geq (1-\lambda)  \inf_{\psi \in \mathcal{H}^2, \|\psi\|=1} \Braket{\psi| H_N(Z) \psi} + \lambda  \inf_{\psi \in \mathcal{H}^2, \|\psi\|=1} \Braket{\psi| H_N(Z') \psi} \nonumber \\
&= (1-\lambda) E(N,Z) + \lambda E(N,Z')\,. \nonumber 
\end{align}
The other properties  of $E(N, \,\cdot\,)$ are immediate consequences of its concavity (note that $-E(N, \,\cdot\,)$ is  then a convex function and for convex functions the desired properties are well-known, see e.g. \cite{ConvexAnalysis}). 
\end{proof}

\begin{lemma} \label{lm02}
Let $N \in \N$, $0<Z_0<Z<Z_1$ and suppose that  
$E(N,Z)$ is an eigenvalue of $H_N(Z)$ (for this each of the conditions $Z>N-1$ and $E(N,Z)<E(N-1,Z)$ is sufficient). Then, for all $\psi \in \text{GS}_N(Z)$ (cf. Eq. \eqref{DefGS}),
\begin{align}
\dfrac{E(N,Z)-E(N,Z_1)}{Z-Z_1} \leq  -\Braket{\psi 
 | V(N) \psi } \leq \dfrac{E(N,Z)-E(N,Z_0)}{Z-Z_0}\,. \label{DiffQuotEst}
\end{align}
Moreover, if the classical partial derivative $\dfrac{\partial E}{\partial Z}(N,Z)$ exists, then for all $\psi \in \text{GS}_N(Z)$,
\begin{align}
\dfrac{\partial E}{\partial Z}(N,Z) = -\Braket{\psi 
 |V(N) \psi }\,. \label{DiffE}
\end{align}
\end{lemma}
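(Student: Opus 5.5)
The argument is a Hellmann--Feynman type comparison that uses only the variational principle \eqref{VarPrinciple} and the linear decomposition \eqref{H_N(Z)decomp}. Fix $\psi \in \text{GS}_N(Z)$; this set is non-empty because $E(N,Z)$ is assumed to be an eigenvalue. Then $\psi \in \mathcal{H}^2$, $\|\psi\|=1$ and $\braket{\psi|H_N(Z)\psi}=E(N,Z)$. For any $Z'>0$, Eq. \eqref{H_N(Z)decomp} gives $H_N(Z') = H_N(Z) - (Z'-Z)V(N)$ on $\mathcal{H}^2$. The expectation $\braket{\psi|V(N)\psi}$ is finite, since $V(N)$ is relatively bounded with respect to $T(N)$ by Hardy's inequality. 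Using $\psi$ as a trial state in \eqref{VarPrinciple} for $H_N(Z')$ therefore yields
\begin{align*}
E(N,Z') \leq E(N,Z) - (Z'-Z)\braket{\psi|V(N)\psi} \qquad (Z'>0)\,.
\end{align*}

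Next I apply this inequality with $Z'=Z_1>Z$ and with $Z'=Z_0<Z$. For $Z'=Z_1$, dividing by $Z_1-Z>0$ gives $\frac{E(N,Z_1)-E(N,Z)}{Z_1-Z} \leq -\braket{\psi|V(N)\psi}$, which is the left inequality in \eqref{DiffQuotEst}. For $Z'=Z_0$, dividing by $Z_0-Z<0$ flips the inequality and gives $\frac{E(N,Z_0)-E(N,Z)}{Z_0-Z}\geq -\braket{\psi|V(N)\psi}$, which is the right inequality. For the parenthetical sufficient conditions, I would cite the HVZ theorem \eqref{HVZ}, under which $E(N,Z)<E(N-1,Z)$ forces $E(N,Z)$ to be a discrete eigenvalue, and Zhislin's result for $Z>N-1$, both recalled in Section \ref{secmain}.

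For \eqref{DiffE}, let $Z_1\downarrow Z$ and $Z_0\uparrow Z$ in \eqref{DiffQuotEst}. By Lemma \ref{lm01} the difference quotients converge to the one-sided derivatives, so
\begin{align*}
\frac{\partial E}{\partial Z_+}(N,Z)\leq -\braket{\psi|V(N)\psi}\leq \frac{\partial E}{\partial Z_-}(N,Z)\,.
\end{align*}
If $E(N,\cdot)$ is differentiable at $Z$, the two one-sided derivatives coincide, and the identity follows for every $\psi\in\text{GS}_N(Z)$.

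No real obstacle appears. The only point needing care is justifying that $\psi$ is an admissible trial state for $H_N(Z')$ and that the splitting of the expectation value is legitimate. This holds because the domain $\mathcal{H}^2$ does not depend on $Z'$ and $V(N)\psi\in L^2$ for $\psi \in \mathcal{H}^2$.
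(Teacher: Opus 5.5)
Your proof is correct and follows the paper's argument essentially verbatim: use a ground state $\psi$ of $H_N(Z)$ as a trial state in the variational principle for $H_N(Z')$, exploit linearity of $H_N(Z')$ in $Z'$, and specialize to $Z'=Z_1$ and $Z'=Z_0$. Your intermediate inequality $\frac{\partial E}{\partial Z_{+}}(N,Z)\leq -\braket{\psi|V(N)\psi}\leq \frac{\partial E}{\partial Z_{-}}(N,Z)$ is a slightly more informative form of the paper's final limit step, and it is exactly the form the paper later invokes in Proposition \ref{prop5} and in the proof of Theorem \ref{thmEderivative}.
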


\begin{proof}
The fact that each of the conditions  $Z>N-1$ and $E(N,Z)<E(N-1,Z)$ implies that $E(N,Z)$ is an eigenvalue of $H_N(Z)$ follows from Eq. \eqref{HVZ} and \cite{Zhislin1, Zhislin2}. Now, suppose that  $E(N,Z)$ is an eigenvalue of $H_N(Z)$ and let
 $\psi \in \text{GS}_N(Z)$ , i.e. $\psi \in \mathcal{H}^2$, $H_N(Z)\psi = E(N,Z)\psi$ and $\|\psi\|=1$. Then it follows from the variational principle (cf. Eq. \eqref{VarPrinciple}) that for all $Z' > 0$,
\begin{align*}
E(N,Z') &\leq \Braket{\psi| H_N(Z') \psi}  \nonumber \\
& = \Braket{\psi| H_N(Z) \psi} + (Z-Z') \Braket{\psi  | V(N) \psi }\nonumber \\
&= E(N,Z) + (Z-Z') \Braket{\psi  |V(N)\psi }\,. \nonumber
\end{align*}  
The two inequalities in Estimate \eqref{DiffQuotEst} now follow by choosing $Z'=Z_0 < Z$ and $Z'=Z_1 > Z$, respectively. Finally, if the classical partial derivative $\frac{\partial E}{\partial Z}(N,Z)$ exists, then the left- and the right-hand side in Estimate \eqref{DiffQuotEst} both converge to $\frac{\partial E}{\partial Z}(N,Z)$ as $Z_1 \downarrow Z$ and $Z_0 \uparrow Z$, respectively. This implies the second part of the lemma.
\end{proof}

The following theorem is well-known (see e.g. \cite[Theorem 4.5.5 (Part I)]{Thirring}), but for the convenience of the reader we recall its proof:

\begin{proposition}[Virial theorem] \label{prop3}
Let $N \in \N$, $Z>0$ and $\psi \in \mathcal{H}^2$ with $\|\psi\|=1$. Then $H_N(Z)\psi= E \psi$ for some $E \in \R$ implies that
\begin{align*}
E = -  \Braket{\psi | T(N) \psi} \,.
\end{align*}
In particular, $H_N(Z)$ has no eigenvalues $E \geq 0$.
\end{proposition}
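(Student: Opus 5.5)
The plan is to prove the virial theorem by the standard dilation argument. For $\lambda>0$ let $U_\lambda$ be the unitary dilation $(U_\lambda\psi)(x_1,s_1,\ldots,x_N,s_N)=\lambda^{3N/2}\psi(\lambda x_1,s_1,\ldots,\lambda x_N,s_N)$. It preserves $\mathcal{H}$ in both (Case FER) and (Case FS), since it commutes with permutations and acts trivially on the spins, and it maps $\mathcal{H}^2$ onto itself. Under this scaling, by homogeneity, $U_\lambda^{*}T(N)U_\lambda=\lambda^2T(N)$, while the Coulomb terms scale linearly, $U_\lambda^{*}V(N)U_\lambda=\lambda V(N)$ and $U_\lambda^{*}I(N)U_\lambda=\lambda I(N)$. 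Consequently, using the decomposition \eqref{H_N(Z)decomp},
\begin{align*}
g(\lambda):=\Braket{U_\lambda\psi|H_N(Z)U_\lambda\psi}=\lambda^2\Braket{\psi|T(N)\psi}+\lambda\Braket{\psi|(-ZV(N)+I(N))\psi}\,,
\end{align*}
and all of these expectation values are finite because $\psi\in\mathcal{H}^2$.

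The key step, and the only delicate one, is to show $g'(1)=0$ when $H_N(Z)\psi=E\psi$. We must not argue via the variational principle, since $E$ need not be the ground-state energy. Instead, write $\psi_\lambda:=U_\lambda\psi$ and compute
\begin{align*}
g(\lambda)-E=\Braket{\psi_\lambda|(H_N(Z)-E)\psi_\lambda}=\Braket{\psi_\lambda-\psi|(H_N(Z)-E)(\psi_\lambda-\psi)}\,.
\end{align*}
The cross terms vanish because $(H_N(Z)-E)\psi=0$, $H_N(Z)$ is self-adjoint, and $\|\psi_\lambda\|=\|\psi\|=1$. The remaining term is $O(|\lambda-1|^2)$: the form of $H_N(Z)$ is bounded by a constant times the $H^1$-norm squared (by the Hardy inequality for the Coulomb terms), and $\|\psi_\lambda-\psi\|_{H^1}=O(|\lambda-1|)$, since $\psi\in H^2$ implies $x\cdot\nabla\psi\in H^1_{\mathrm{loc}}$.

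I expect this last point to be the main obstacle, because $x\cdot\nabla\psi$ need not lie in $L^2$ globally. If it does not, I would circumvent the issue by working directly with the explicit polynomial $g$. From the identity above we only need $g(\lambda)-E=o(|\lambda-1|)$. By Cauchy--Schwarz this quantity is bounded by $\|(H_N(Z)-E)^{1/2}_{\text{form}}(\psi_\lambda-\psi)\|^2$, and strong continuity of the dilation group on $H^1$ gives $\|\psi_\lambda-\psi\|_{H^1}\to0$. That already yields $o(1)$ but not $o(|\lambda-1|)$. A cleaner fallback is therefore to first approximate $\psi$ by $\chi_R\psi$ with a cutoff $\chi_R$, use the commutator identity $\Braket{\psi|[H_N(Z),A_R]\psi}=0$ with $A_R$ a cut-off generator of dilations, and let $R\to\infty$. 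Here $\psi\in\mathcal{H}^2$ controls the error terms, giving $2\Braket{\psi|T\psi}+\Braket{\psi|(-ZV+I)\psi}=0$.

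Once $g'(1)=0$ is established, it reads $2\Braket{\psi|T\psi}+\Braket{\psi|(-ZV+I)\psi}=0$. Combined with $E=\Braket{\psi|H_N(Z)\psi}=\Braket{\psi|T\psi}+\Braket{\psi|(-ZV+I)\psi}$, this gives $E=-\Braket{\psi|T(N)\psi}$. Finally, $\Braket{\psi|T(N)\psi}=\tfrac12\|\nabla\psi\|^2>0$ for $\psi\neq0$, because no nonzero $L^2$ function has vanishing gradient. Hence $E<0$, so $H_N(Z)$ has no eigenvalues $E\ge0$.
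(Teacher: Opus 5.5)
Your setup is correct: the dilation $U_\lambda$, the scaling relations, the polynomial $g(\lambda)$, and the final deduction from $g'(1)=0$ together with $\Braket{\psi|T(N)\psi}=\tfrac12\|\nabla\psi\|^2>0$. The gap is the key step, which is not established.

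Your main route needs $\Braket{\psi_\lambda-\psi|(H_N(Z)-E)(\psi_\lambda-\psi)}=O(|\lambda-1|^2)$, hence $\|\psi_\lambda-\psi\|_{H^1}=O(|\lambda-1|)$. That amounts to $\psi$ lying in the domain of the dilation generator on $H^1$, roughly $x\cdot\nabla\psi\in H^1$. This is a decay condition at infinity that $\psi\in\mathcal{H}^2$ does not give. The statement also allows eigenvalues embedded in the essential spectrum, where no elementary decay estimate for $\psi$ is available. Your first fallback only yields $o(1)$, as you note. Moreover, $(H_N(Z)-E)^{1/2}$ is not defined, since $H_N(Z)-E$ need not be nonnegative. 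The cut-off commutator argument with $A_R$ is a legitimate standard technique, but you only assert that $\psi\in\mathcal{H}^2$ ``controls the error terms''. The domain questions and the form computation of $[H_N(Z),A_R]$ with the Coulomb singularities are exactly the work left undone.

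The missing idea is simple. You need no regularity of $\lambda\mapsto\psi_\lambda$ beyond $L^2$-continuity if you use the eigenvalue equation on the unscaled vector to factor out $\lambda-1$ exactly. Since $E\psi=(T(N)-ZV(N)+I(N))\psi$, you have
\begin{align*}
(H_N(Z)-E)\psi_\lambda=U_\lambda\left(\lambda^2T(N)-\lambda ZV(N)+\lambda I(N)-E\right)\psi=(\lambda-1)\,U_\lambda\phi_\lambda\,,\qquad \phi_\lambda:=\left((\lambda+1)T(N)-ZV(N)+I(N)\right)\psi\,.
\end{align*}
Here $\|\phi_\lambda\|$ stays bounded near $\lambda=1$, because $T(N)\psi$, $V(N)\psi$ and $I(N)\psi$ all lie in $\mathcal{H}$ for $\psi\in\mathcal{H}^2$.

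In your own identity, $(H_N(Z)-E)(\psi_\lambda-\psi)=(H_N(Z)-E)\psi_\lambda$. Therefore
\begin{align*}
|g(\lambda)-E|\leq|\lambda-1|\,\|\psi_\lambda-\psi\|\,\|\phi_\lambda\|=o(|\lambda-1|)\,,
\end{align*}
by strong continuity of the dilations in $\mathcal{H}$. This is all you need, because $g(\lambda)-E=(\lambda-1)\bigl((\lambda+1)\Braket{\psi|T(N)\psi}+\Braket{\psi|(-ZV(N)+I(N))\psi}\bigr)$.

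The paper's proof is this argument in its most direct form. It pairs $(H_N(Z)-E)\psi=0$ with $U_\lambda\psi$ and uses self-adjointness to get $\Braket{U_\lambda^{*}\psi|\phi_\lambda}=0$ for all $\lambda\neq1$. It then lets $\lambda\to1$, using only $U_\lambda^{*}\psi\to\psi$ strongly in $\mathcal{H}$.
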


\begin{proof}
For $\lambda>0$, we consider the unitary scaling $U_\lambda \in \LL(\mathcal{H})$ defined by
\begin{align}
(U_\lambda \psi)(x_1,s_1,x_2,s_2,\ldots, x_N, s_N):= \lambda^{3N/2} \psi(\lambda x_1,s_1,\lambda x_2,s_2,\ldots, \lambda x_N, s_N)\,. \label{DefU_Z}
\end{align}
Then a straightforward computation shows that
\begin{align*}
(U_{\lambda})^{*} H_N(Z) U_{\lambda} = \lambda^2 T(N) - Z \lambda V(N) + \lambda I(N). 
\end{align*}
Now, on the one hand $H_N(Z)\psi= E \psi$ implies that $\Braket{(U_\lambda)^{*} \psi | \left(T(N) - Z V(N) + I(N) - E \right)\psi }=0$, and  on the other hand
\begin{align*}
0 &= \Braket{(H_N(Z)-E) \psi | U_\lambda \psi  } \\
 &= \Braket{ \psi | (H_N(Z)-E) U_\lambda \psi  }\\
&= \Braket{ (U_\lambda)^{*} \psi | (U_\lambda)^{*} (H_N(Z)-E) U_\lambda \psi  }\\
&= \Braket{ (U_\lambda)^{*} \psi | \left(\lambda^2 T(N) - Z \lambda V(N) + \lambda I(N) - E \right)\psi }\,.
\end{align*}
Combining these two equations leads to
\begin{align*}
\Braket{ (U_\lambda)^{*} \psi | \left((\lambda^2-1)\,T(N)- Z(\lambda-1)\, V(N) + (\lambda-1)\, I(N) \right)\psi }=0\qquad \quad (\lambda>0).
\end{align*}
Since $(U_\lambda)^{*} \psi \rightarrow \psi$ strongly as $\lambda \uparrow 1$, after division  by $(\lambda-1)$ we find that in the limit $\lambda \uparrow 1$,
\begin{align*}
0 = \Braket{ \psi | \left(2T(N)- Z V(N) + I(N) \right)\psi  }
= E + \Braket{\psi | T(N) \psi}\,.
\end{align*}
In particular, $H_N(Z)$ has no eigenvalues $E \geq 0$ because $\Braket{\psi | T(N) \psi}>0$.
\end{proof}

\begin{proposition}\label{prop4}
Let $N \in \N$. Then, for all $\psi \in \mathcal{H}^2$ with $\|\psi\|=1$,
\begin{align*}
\Braket{\psi | V(N) \psi} \leq 2\sqrt{\sigma_N} \sqrt{ \Braket{\psi | T(N) \psi}}\,.
\end{align*}
\end{proposition}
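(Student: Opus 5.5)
The idea is a scaling (optimization) argument applied to the non-interacting operator $h_N$ from Eq. \eqref{Defh_N}. The definition of $\sigma_N$ in Eq. \eqref{DefSigmaN} gives the operator inequality
\begin{align*}
T(N) - V(N) = h_N \geq -\sigma_N
\end{align*}
on $\mathcal{H}$, in either (Case FS) or (Case FER). Here $h_N$ acts on the same symmetry sector as $H_N(Z)$, so $\sigma_N$ is the correct constant in each case. Replacing the nuclear charge $1$ by a general charge $\mu>0$, we introduce a free parameter and then optimize over it.

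\textbf{Step 1.} Fix $\psi \in \mathcal{H}^2$ with $\|\psi\|=1$ and set $t:=\braket{\psi|T(N)\psi}>0$ and $v:=\braket{\psi|V(N)\psi}$; both are finite since $\psi \in \mathcal{H}^2$. Conjugating with the unitary dilation $U_\lambda$ from Eq. \eqref{DefU_Z} gives $U_\lambda^* T(N) U_\lambda = \lambda^2 T(N)$ and $U_\lambda^* V(N) U_\lambda = \lambda V(N)$. Hence
\begin{align*}
\inf \sigma\bigl(T(N)-\mu V(N)\bigr) = \mu^2 \inf\sigma(h_N) = -\mu^2 \sigma_N \qquad (\mu>0).
\end{align*}
This is seen by taking $\lambda=\mu$ and noting that $U_\lambda$ preserves the (anti)symmetry and spin structure. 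By the variational principle,
\begin{align*}
t - \mu v \geq -\mu^2\sigma_N
\end{align*}
for all $\mu>0$.

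\textbf{Step 2.} Rewrite Step 1 as $v \leq t/\mu + \mu\sigma_N$ and minimize the right-hand side over $\mu>0$. If $\sigma_N>0$, the choice $\mu=\sqrt{t/\sigma_N}$ gives $v \leq 2\sqrt{\sigma_N}\sqrt{t}$, which is the claim. One could equivalently view $\mu^2\sigma_N - \mu v + t \geq 0$ as a quadratic in $\mu$ that is nonnegative for all $\mu>0$. Since $v>0$, its vertex lies at positive $\mu$, so the discriminant condition $v^2 \leq 4\sigma_N t$ follows directly.

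\textbf{Main check.} The only step needing care is justifying $\inf\sigma(h_N)=-\sigma_N$ with $\sigma_N$ finite and positive, together with the exact scaling identity on the correct Hilbert space. Positivity holds because hydrogen has negative eigenvalues. Finiteness holds because $V(N)$ is form-bounded relative to $T(N)$ with relative bound less than $1$, by Hardy's inequality. These facts are standard and are presumably the content of Appendix \ref{secsigma_N}, so I would simply cite the definition together with the scaling identity. Form domains are no issue, because $\psi \in \mathcal{H}^2 \subset \mathcal{H}^1$.
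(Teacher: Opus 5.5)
Your proof is correct and is essentially the paper's argument. The paper applies $h_N \geq -\sigma_N$ to the dilated state $U_\lambda\psi$ to get $v \leq \lambda t + \sigma_N/\lambda$ and then optimizes over $\lambda$. This is exactly your inequality $v \leq t/\mu + \mu\sigma_N$ with $\mu = 1/\lambda$: you scale the nuclear charge where the paper scales the state.
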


\begin{proof}
First, we recall from Eqs. \eqref{DefSigmaN} and \eqref{Defh_N} that $-\sigma_N$ is the ground-state energy of $h_N=T(N)-V(N)$.
With the unitary scaling $U_\lambda \in \LL(\mathcal{H})$ defined in Eq. \eqref{DefU_Z}, it follows that for all $\psi \in \mathcal{H}^2$ with $\|\psi\|=1$,
\begin{align*}
-\sigma_N \leq \Braket{\psi| (U_{\lambda})^{*} h_N U_{\lambda} \psi}
= \lambda^2 \Braket{\psi | T(N) \psi} - \lambda \Braket{\psi | V(N) \psi} \qquad \qquad (\lambda>0)\,,
\end{align*}
which is equivalent to
\begin{align*}
\Braket{\psi | V(N) \psi} \leq  \lambda \Braket{\psi | T(N) \psi} + \dfrac{\sigma_N}{\lambda} \qquad \qquad (\lambda>0)\,.
\end{align*}
The desired estimate follows by optimizing in $\lambda$ (the optimal value is $\lambda=\sqrt{\sigma_N / \Braket{\psi | T(N) \psi}}$).
\end{proof}

\begin{proposition}\label{prop5}
For all $N \in \N$ and $Z_0,Z_1>0$ with $Z_0 \leq Z_1$,
\begin{align*}
\sqrt{\left| E(N,Z_1) \right|} \leq \sqrt{\left| E(N,Z_0) \right|} 
+ \sqrt{\sigma_N} (Z_1-Z_0)\,.
\end{align*}
\end{proposition}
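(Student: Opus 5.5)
The idea is to show that $g(Z):=\sqrt{|E(N,Z)|}$ is locally Lipschitz with derivative at most $\sqrt{\sigma_N}$ wherever it is differentiable, and then integrate from $Z_0$ to $Z_1$. By Lemma \ref{lm01}, $E(N,\cdot)$ is concave and hence locally Lipschitz, so the same holds for $g$ on $(0,\infty)$. Indeed, $E(N,Z)<0$ for all $Z>0$ (e.g. $E(N,Z)\le E(1,Z)=-Z^2/2$ by \eqref{Edecrease}), so $\sqrt{\cdot}$ is smooth near $|E(N,Z)|$. It therefore suffices to prove
\begin{align*}
-\frac{\partial E}{\partial Z}(N,Z)\le 2\sqrt{\sigma_N}\sqrt{|E(N,Z)|}
\end{align*}
at almost every $Z$. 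Then $g'(Z)=-\frac{\partial E}{\partial Z}(N,Z)/(2\sqrt{|E(N,Z)|})\le\sqrt{\sigma_N}$ a.e., and integration (absolute continuity of $g$) gives the claim.

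\textbf{Case 1: $E(N,Z)$ is an eigenvalue.} Take $\psi\in\mathrm{GS}_N(Z)$. Lemma \ref{lm02} gives $-\frac{\partial E}{\partial Z}(N,Z)=\langle\psi|V(N)\psi\rangle$. Proposition \ref{prop4} bounds this by $2\sqrt{\sigma_N}\sqrt{\langle\psi|T(N)\psi\rangle}$. The virial theorem (Proposition \ref{prop3}) gives $\langle\psi|T(N)\psi\rangle=|E(N,Z)|$. This is exactly the needed inequality.

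\textbf{Case 2: $E(N,Z)$ is not an eigenvalue.} Here I would argue by induction on $N$, the case $N=0$ being trivial. Since $E(N,Z)=E(N-1,Z)$ fails to be an eigenvalue, I want to reduce to $N-1$ electrons. Let $n(Z)\le N$ be the largest $n$ with $E(n,Z)<E(n-1,Z)$; then $E(N,Z)=E(n(Z),Z)$. The difficulty is that $n(Z)$ depends on $Z$.

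A cleaner route avoids derivatives of $E(N,\cdot)$ at such points. Set $F(Z):=\min_{n\le N}$ of the bound in the proposition for $n$. Alternatively, proceed directly as follows. Induction gives the bound for every $n<N$. On any interval where $E(N,\cdot)=E(n,\cdot)$ with $E(n,\cdot)$ an eigenvalue (the set $\{Z: E(n,Z)<E(n-1,Z)\}$ is open by continuity), Case 1 applies to $n$. It also holds that $\sigma_n\le\sigma_N$, since $\sigma_n$ is a partial sum of nonnegative hydrogen eigenvalue magnitudes (or $n/2$ in (Case FS)).

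Thus for a.e. $Z$, the derivative of $g$ equals the derivative of $\sqrt{|E(n,\cdot)|}$ for the appropriate $n$. This holds because $E(N,\cdot)=E(n,\cdot)$ on a neighbourhood, or else on a set where both are differentiable with equal derivatives a.e. (points of density of the coincidence set). In either case $g'(Z)\le\sqrt{\sigma_n}\le\sqrt{\sigma_N}$ a.e.

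\textbf{Main obstacle.} The measure-theoretic bookkeeping in Case 2 is the main obstacle. The sets $A_n=\{Z:E(N,Z)=E(n,Z)<E(n-1,Z)\}$ partition $(0,\infty)$. Each $A_n$ is relatively well-behaved: on $A_n$, $E(n,Z)$ is an eigenvalue, so Case 1 gives the bound on $E(n,\cdot)'$ at differentiability points. At a.e. point of $A_n$, both $E(N,\cdot)$ and $E(n,\cdot)$ are differentiable and, being equal on $A_n$ (Lebesgue density points), have equal derivatives. This yields $g'\le\sqrt{\sigma_N}$ a.e., and absolute continuity finishes the proof.
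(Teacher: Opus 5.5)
Your proof is correct, but it deals with the points where $E(N,Z)$ is not an eigenvalue differently from the paper.

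\textbf{The paper's route.} The paper proves $\frac{\partial E}{\partial Z_{-}}(N,Z)\ge -2\sqrt{\sigma_N}\sqrt{|E(N,Z)|}$ at \emph{every} $Z>0$, by induction on $N$. On the open set where $E(N,Z)<E(N-1,Z)$ it uses the difference-quotient bound of Lemma \ref{lm02}, the virial theorem and Proposition \ref{prop4}. On the interior of the set where $E(N,\cdot)=E(N-1,\cdot)$ it uses the induction hypothesis and $\sigma_{N-1}\le\sigma_N$. These two sets together are dense, so the remaining boundary points follow from two facts: the left derivative of a concave function is monotone, and $E(N,\cdot)$ is continuous.

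\textbf{Your route.} You only aim for an almost-everywhere bound, which is all the integration step needs. You go directly to the particle number $n(Z)$ at which the energy is a genuine eigenvalue, and you transfer derivatives along the coincidence set $A_n$.

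\textbf{Simplifications.} Since Case 1 applies verbatim to any particle number $n$, the induction is superfluous. Your announced induction hypothesis and the abandoned function $F$ play no role and can be deleted. The density-point argument can also be simplified. A subset of $\R$ has at most countably many isolated points. At a non-isolated point of $A_n$ where $E(N,\cdot)$ and $E(n,\cdot)$ are both differentiable, their difference quotients along a sequence in $A_n$ coincide, so their derivatives agree. Hence your inequality actually holds off a countable set, with no measure theory needed.

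\textbf{Comparison.} The paper's route gives a pointwise bound on the one-sided derivative everywhere, and it only ever compares $N$ with $N-1$. Yours replaces the density-plus-monotonicity extension by this short coincidence-set observation. It avoids induction and is somewhat more direct.
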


\begin{proof}
We shall prove below that the left-hand derivative $\frac{\partial E}{\partial Z_{-}}(N,Z)$ defined in Lemma \ref{lm01} satisfies the differential inequality
\begin{align}
\frac{\partial E}{\partial Z_{-}}(N,Z) \geq -2 \sqrt{\sigma_{N}} \sqrt{\left| E(N,Z) \right|}. \label{E'est}
\end{align}
In particular, the same differential inequality then holds with $\frac{\partial E}{\partial Z_{-}}(N,Z)$ replaced by $\frac{\partial E}{\partial Z}(N,Z)$, provided that $Z>0$ is chosen so that the classical partial derivative $\frac{\partial E}{\partial Z}(N,Z)$ exists. By Lemma \ref{lm01},  $E(N, \, \cdot\,)$ is locally Lipschitz and hence absolutely continuous. The same is also true for $\sqrt{\left|E(N, \, \cdot\,)\right|}$. 
Therefore, the fundamental theorem of calculus holds for $\sqrt{\left|E(N, \, \cdot\,)\right|}$, which in combination with 
Estimate \eqref{E'est} allows us to derive the desired estimate
\begin{align*}
\sqrt{\left| E(N,Z_1) \right|} - \sqrt{\left| E(N,Z_{0}) \right|} = -\int\limits_{Z_0}^{Z_1} 
\dfrac{\dfrac{\partial E}{\partial Z}(N,z)}{2 \sqrt{\left| E(N,z) \right|}}\, dz \leq \int\limits_{Z_0}^{Z_1} \sqrt{\sigma_{N}} \, dz = \sqrt{\sigma_{N}} (Z_1-Z_{0})\,.
\end{align*}

It remains to establish Estimate \eqref{E'est} for arbitrary $Z>0$. For this purpose, we use induction over $N \in \N$. For $N=1$, Estimate \eqref{E'est} follows from the explicit formula $E(1,Z)=-Z^2/2$ and $\sigma_1=1/2$. 
For the induction step, we now suppose that for some $N \geq 2$, 
\begin{align}
\frac{\partial E}{\partial Z_{-}}(N-1,Z) \geq -2 \sqrt{\sigma_{N-1}} \sqrt{\left| E(N-1,Z) \right|} \qquad \qquad (Z>0)\,, \label{indV}
\end{align}
so we have to prove that Estimate \eqref{E'est} holds for all $Z>0$. 
As a first step, we prove Estimate \eqref{E'est} in two special cases. The first case is that $Z>0$ is chosen so that $E(N,Z)<E(N-1,Z)$. Then, it follows from Lemma \ref{lm02} that for any $\psi \in \text{GS}_N(Z)$,
\begin{align*}
\frac{\partial E}{\partial Z_{-}}(N,Z) \geq  -\Braket{\psi 
 | V(N) \psi }\,.
\end{align*}
Hence, using Propositions \ref{prop3} and \ref{prop4} to estimate the right-hand side, we obtain Estimate \eqref{E'est} for this particular choice of $Z>0$. The second case is that $Z>0$ is chosen so that $E(N,Z')=E(N-1,Z')$ holds for all $Z'$ in some open neighborhood of $Z$. But then it follows from the induction hypothesis \eqref{indV} together with the estimates $\sigma_{N-1} \leq \sigma_N$ and $E(N,Z)  =  E(N-1,Z) < 0$ that
\begin{align*}
\frac{\partial E}{\partial Z_{-}}(N,Z) = \frac{\partial E}{\partial Z_{-}}(N-1,Z) &\geq -2 \sqrt{\sigma_{N-1}} \sqrt{\left| E(N-1,Z) \right|} \\ 
& \geq -2 \sqrt{\sigma_{N}} \sqrt{\left| E(N,Z) \right|}\,.
\end{align*}
This means that we have now established Estimate \eqref{E'est} for two particular choices of $Z>0$, which cover some dense subset of $(0,\infty)$. Note that this implies that for \textit{any} fixed $Z>0$ there exists a sequence $(Z_n)_{n \in \N}$ so that Estimate \eqref{E'est} holds for each $Z_n$ and $Z_n \downarrow Z$ as $n \to \infty$. Hence, using that $\frac{\partial E}{\partial Z_{-}}(N,\,\cdot\,)$ is monotonically decreasing by Lemma \ref{lm01}, we find that
\begin{align*}
\frac{\partial E}{\partial Z_{-}}(N,Z) \geq \frac{\partial E}{\partial Z_{-}}(N,Z_n) \geq -2 \sqrt{\sigma_{N}} \sqrt{\left| E(N,Z_n) \right|} \qquad \qquad (n \in \N)\,.
\end{align*}
Estimate \eqref{E'est} for the given value of $Z>0$ now emerges in the limit $n \to \infty$ since $E(N,\,\cdot\,)$ is a continuous function.
\end{proof}

\noindent\textit{Proof of Theorem \ref{theo1}.}
First, we note that the desired estimate  $E(N,Z) \geq  -L(N,Z)^2$ is equivalent to 
\begin{align} \label{Toprove}
 \sqrt{\left| E(N,Z) \right|} \leq L(N,Z) \qquad \qquad (N \geq N_0, Z \geq Z_0)\,.
\end{align} 
To prove Estimate \eqref{Toprove}, we use induction over $N $. For $N=N_0$, Estimate \eqref{Toprove} follows from the Definition \eqref{DefL0} of $L(N_0,Z)$, Proposition \ref{prop5} and the assumption that    $E_0 \in (-\infty, E(N_0,Z_0)]$. Suppose now that for some $N \in \N$, $N \geq N_0+1$,
\begin{align} \label{IndHyp}
\sqrt{\left| E(N-1,Z) \right|} \leq L(N-1,Z) \qquad \qquad (Z \geq Z_0)\,.
\end{align} 
To prove that Estimate \eqref{Toprove} then also holds for $N$, we distinguish between two cases. The first case is that $Z_0 \leq Z \leq f^{-1}(N)$ (note that this case is not void because the assumption that $Z_0 \in (0,f^{-1}(N_0+1)]$ and the strict monotonicity of $f$ imply that $f^{-1}(N) \geq Z_0$ for all $N \geq N_0+1$). 
Now, Estimate \eqref{Ncbound} and the strict monotonicity of $f$ imply that $N>N_c(Z)$, and hence $E(N,Z)=E(N-1,Z)$. Therefore, we arrive at
\begin{align}
\sqrt{\left| E(N,Z) \right|} = \sqrt{\left| E(N-1,Z) \right|} \leq L(N-1,Z) = L(N,Z)\,,\qquad \quad (Z_0 \leq Z \leq f^{-1}(N)),  \label{firstcase}
\end{align}
where the induction hypothesis \eqref{IndHyp} was used for the inequality.

In the second case, where $Z > f^{-1}(N)$, Proposition \ref{prop5} and Estimate \eqref{firstcase} with $Z=f^{-1}(N)$ imply that
\begin{align*}
\sqrt{\left| E(N,Z) \right|} &\leq \sqrt{\left| E(N,f^{-1}(N)) \right|} + \sqrt{\sigma_N} (Z-f^{-1}(N)) \\
&\leq L(N-1,f^{-1}(N)) + \sqrt{\sigma_N} (Z-f^{-1}(N)) \\
&= L(N,Z)\,.
\end{align*}
This completes the proof of Estimate \eqref{Toprove}. \hfill$\Box$

\bigskip

\noindent\textit{Proof of Theorem \ref{theo3}.}
The fact that $E(N, \,\cdot\,): (0,\infty) \rightarrow \R, Z \mapsto E(N,Z)$ is concave and hence differentiable in all but perhaps countably many points has already been verified in Lemma \ref{lm01}. To show that $E(N, \,\cdot\,)$ is strictly monotonically decreasing, we have to prove that $Z'>Z$ implies that $E(N,Z')<E(N,Z)$.
In view of the variational principle (cf. Eq. \eqref{VarPrinciple}), there exists $\psi \in \mathcal{H}^2$ with $\|\psi\|=1$ and 
\begin{align*}
\braket{\psi| H_N(Z)\psi} < \dfrac{Z}{Z'}  E(N,Z)\,. 
\end{align*}
But then, using again the variational principle, we find that
\begin{align*}
E(N,Z') &\leq \braket{\psi| H_N(Z')\psi} = \dfrac{Z'}{Z} \braket{\psi| H_N(Z)\psi} - \left( \dfrac{Z'}{Z} - 1 \right) \Braket{\psi| (T(N)+ I(N)) \psi}  < E(N,Z)\,,
\end{align*}
where we also used that $T(N)$ and $I(N)$ are positive operators. 
It remains to show that if $E(N,Z)$ is an eigenvalue of $H_N(Z)$ and $E(N, \,\cdot\,)$ is differentiable at this particular value of $Z$, then for all $\psi \in \text{GS}_N(Z)$,
\begin{align}
\Braket{\psi|T(N)\psi} &= \left|E(N,Z)\right|\,, \label{TId} \\
\Braket{\psi|V(N)\psi} &= -\dfrac{\partial E}{\partial Z}(N,Z)\,, \label{VId}\\
\Braket{\psi|I(N)\psi} &= 2 E(N,Z) - Z\dfrac{\partial E}{\partial Z}(N,Z) \,. \label{IId}
\end{align}
Identities \eqref{TId} and \eqref{VId} for the expectation values of 
the kinetic energy and the potential energy immediately follow from the Virial theorem (Proposition \ref{prop3}) and Lemma \ref{lm02}, respectively. Identity \eqref{IId} for the expectation value of the interaction energy then follows by inserting these identities in 
\begin{align*}
E(N,Z)=\Braket{\psi | H_N(Z)\psi}=  \Braket{\psi|T(N)\psi} - Z\Braket{\psi | V(N)\psi}+\Braket{\psi|I(N)\psi}\,.
\end{align*} \hfill$\Box$

\bigskip

\noindent\textit{Proof of Theorem \ref{thmEderivative}.}
From Lemma \ref{lm02} it immediately follows that
\begin{align}
\frac{\partial E}{\partial Z_{+}}(N,Z)&\leq \min_{\psi \in \text{GS}_N(Z)} - \Braket{\psi|V(N)\psi}\,, \label{righthandupper} \\
\frac{\partial E}{\partial Z_{-}}(N,Z) &\geq \max_{\psi \in \text{GS}_N(Z)} - \Braket{\psi|V(N)\psi}\,,
\end{align}
where $\text{GS}_N(Z)$ denotes the set of all normalized ground states (cf. Eq. \eqref{DefGS}). 
For the reverse inequalities, we only consider the right-hand derivative because the proof for the left-hand derivative is very similar. Since the ground-state energy $E(N,Z)$ is continuous w.r.t $Z$ and 
$E(N,Z)<E(N-1,Z)$
by assumption, there exists a sequence $(Z_n)_{n \in \N}$ with $Z_n \downarrow Z$ as $n \to \infty$ and $E(N,Z_n)<E(N-1,Z_n)$ for all $n \in \N$. In view of the HVZ theorem (cf. Eq. \eqref{HVZ}), this implies that 
$E(N,Z)$ and $E(N,Z_n)$ are discrete eigenvalues of $H_N(Z)$, respectively, $H_N(Z_n)$ that lie  below the bottom of the essential spectrum. For later purposes, we note that this also implies that there exists $\delta>0$ such that
\begin{align}
 \sigma(H_N(Z)) \cap (E(N,Z)-\delta, E(N,Z)+\delta)=\left\{E(N,Z) \right\}\,. \label{spectralgap} 
\end{align}
Moreover, $\text{GS}_N(Z_n)$ is non-empty for every $n \in \N$, i.e. there exists $\psi_n \in \mathcal{H}^2$ with $H_N(Z_n)\psi_n=E(N,Z_n)\psi_n$ and $\|\psi_n\|=1$.
From the Virial theorem (cf. Proposition \ref{prop3}), it now follows that $\| \nabla \psi_n\|^2 = 2 \braket{\psi_n|T(N)\psi_n}=2\left|E(N,Z_n)\right|$ is also uniformly bounded in $n \in \N$, so the sequence $(\psi_n)_{n \in \N}$ is uniformly bounded w.r.t. the norm of $\mathcal{H}^1$. Hence, the Banach-Alaoglu theorem guarantees the existence  of a weakly convergent subsequence, which for simplicity is also denoted by $(\psi_n)_{n \in \N}$, i.e. $\psi_n \rightharpoonup \psi_0$ in $\mathcal{H}^1$ as $n \to \infty$ for some $\psi_0 \in \mathcal{H}^1$. 
Our goal is to show that the weak limit $\psi_0$ is in fact a normalized ground state of $H_N(Z)$, i.e. $\psi_0 \in \text{GS}_N(Z)$, and that 
$\Braket{\psi_n|V(N)\psi_n} \rightarrow \Braket{\psi_0|V(N)\psi_0}$ as $n \to \infty$. First, we note that Hardy's inequality implies that $V(N)$ is relatively bounded w.r.t. the $\mathcal{H}^1$-norm, i.e. $V(N): \mathcal{H}^1 \rightarrow \mathcal{H}$ is a bounded operator. Hence, the weak convergence $\psi_n \rightharpoonup \psi_0$ in $\mathcal{H}^1$  implies weak convergence $V(N)\psi_n \rightharpoonup V(N)\psi_0$ in $\mathcal{H}$. However, to obtain convergence of $\Braket{\psi_n|V(N)\psi_n}$ towards $\Braket{\psi_0|V(N)\psi_0}$, we need strong convergence $\psi_n \to \psi_0$  in $\mathcal{H}$. For  this purpose, we note that the weak convergence of $(V(N)\psi_n)_{n \in \N}$ in $\mathcal{H}$ implies that $\|V(N)\psi_n\|$ is uniformly bounded in $n \in \N$ and hence
\begin{align*}
\| (H_N(Z) - E(N,Z)) \psi_n \| &\leq     
\| (H_N(Z) - E(N,Z_n)) \psi_n \| + |E(N,Z_n)-E(N,Z)| \nonumber \\
&= \| (H_N(Z) - H_N(Z_n)) \psi_n \| + |E(N,Z_n)-E(N,Z)| \nonumber \\
&= |Z-Z_n| \|V(N)\psi_n\| + |E(N,Z_n)-E(N,Z)| \rightarrow 0 \qquad (n \to \infty)\,. 
\end{align*}
Let now $P$ denote the orthogonal projection onto the finite-dimensional ground-state subspace
\begin{align*}
\left\{\psi \in \mathcal{H}^2 \; : \; H_N(Z)\psi=E(N,Z)\psi \right\}\,.  
\end{align*}
Obviously, $(H_N(Z)-E(N,Z))P\psi_n=0$, so with $P^{\perp}:=1-P$ we obtain that also $\| (H_N(Z) - E(N,Z))P^{\perp}\psi_n \| \to 0$ as $n \to \infty$.
Using this together with Eq. \eqref{spectralgap} and the spectral theorem, we conclude that
\begin{align*}
\delta \|P^{\perp}\psi_n\| \leq \| (H_N(Z) - E(N,Z))P^{\perp}\psi_n \|   \rightarrow 0 \qquad \quad (n \to \infty)\,.   
\end{align*}
On the one hand, we thus obtain that $P^{\perp}\psi_n \to 0$ as $n \to \infty$. On the other hand, since $P$ is compact as a finite-rank operator, it improves the weak convergence $\psi_n \rightharpoonup \psi_0$ to strong convergence $P\psi_n \to P\psi_0$ as $n \to \infty$. 
Altogether, we thus obtain that $\psi_n \to P\psi_0$ strongly as $n \to \infty$. But since strong convergence implies weak convergence and $\psi_n$ also converges weakly to  $\psi_0$, this implies that $P\psi_0=\psi_0$ because weak limits are unique, and therefore $\psi_n \to \psi_0$ strongly as $n \to \infty$. In particular, it follows  that $\|\psi_0\|= \lim_{n \to \infty} \|\psi_n\|=1$ and thus $\psi_0 \in \text{GS}_N(Z)$ because $P\psi_0=\psi_0$. Furthermore, as already explained above, this also implies that
\begin{align*}
\Braket{\psi_n|V(N)\psi_n} \to \Braket{\psi_0|V(N)\psi_0} \qquad \quad (n \to \infty)\,.  
\end{align*}
Finally, we are now in the position to derive the reverse of Estimate \eqref{righthandupper}:
\begin{align*}
\dfrac{E(N,Z_n)-E(N,Z)}{Z_n-Z} \geq  
\dfrac{\Braket{\psi_n|H_N(Z_n)\psi_n}-\Braket{\psi_n|H_N(Z)\psi_n}}{Z_n-Z} 
= -\Braket{\psi_n| V(N) \psi_n}\,,
\end{align*}
which in the limit $n \to \infty$ results in the desired lower bound
\begin{align*}
\frac{\partial E}{\partial Z_{+}}(N,Z)\geq -\Braket{\psi_0| V(N) \psi_0} \geq  \min_{\psi \in \text{GS}_N(Z)} - \Braket{\psi|V(N)\psi}\,.
\end{align*}
\hfill$\Box$

\section{Proof of Theorem \ref{theo2}} \label{sectheo2}
The goal of this section is to prove Theorem \ref{theo2}, which is dedicated to (Case FER) with $q=2$.
As we will see below, the proof of Estimate \eqref{Elower} with $\beta=1.1185$, $\eta \in [1/\beta,\infty)$  and $Z \in (0,\eta N]$  can be reduced to the endpoint case
$Z=\eta N$, so we only have to show that for all $N \in \N$ and $\eta \in [1/\beta, \infty)$, 
\begin{align}
E(N,\eta N) \geq - \alpha(\eta, \beta)^2 \, \eta^2  N^{7/3} \,, \label{Elowergoal2}
\end{align}
where $\alpha(\eta, \beta)$ is defined by Eq. \eqref{Defalpha}.
To this end,  we are going to apply Theorem \ref{theo1} with the same parameters as in Eq. \eqref{firstchoice}, i.e. $N_0=2$, $Z_0=1$ and $E_0=-0.5774$. 
Recall from Eq. \eqref{fchoice} that in (Case FER) with $q=2$, Estimate \eqref{Ncbound} is satisfied for the function $f: (0,\infty) \rightarrow [1,\infty)$ defined by
\begin{align}
f(Z):=  \begin{cases}
2Z+1 & \qquad \textup{if} \; Z \leq 5.5\,, \\
1.22Z +3 Z^{1/3} & \qquad \textup{if} \; 5.5 < Z < Z_1\,, \\
\beta Z +k Z^{1/3} & \qquad \textup{if} \; Z \geq Z_1\,,
\end{cases} \label{currentf}
\end{align}
where $k:=4$ and $Z_1:=30.9$. 
We are going to compute 
$E_L(N_1,Z_1)=-L(N_1,Z_1)^2$ for $N_1:=\lfloor f(Z_1) \rfloor=47$.
Using the recursive Definition \eqref{DefL} of $L$ starting at $N_0=2$, $Z_0=1$ and $E_0=-0.5774$, a straightforward but lengthy computation yields that  $E_L(N_1,Z_1)\geq -2501$.
Using again Definition \eqref{DefL} together with the fact that $f^{-1}(N) > Z_1$ for all $N \geq N_1+1$, this implies that $E_L(N,Z_1)= E_L(N_1,Z_1) \geq  -2501$ for all $N \geq N_1$.
Next, we note that Definition \eqref{currentf} and the assumption that $\eta \in [1/\beta, \infty)$ imply that $f(\eta N) > N$ and hence $\eta N > f^{-1}(N)$ for all $N \in \N$. 
By iterating the recursive Definition \eqref{DefL} of $L$, we thus arrive at 
\begin{align}
L(N,\eta N) = L(N_1,Z_1) 
+ \sum_{n=N_1}^{N-1} l_n \sqrt{\sigma_n} +  \left( \eta N - f^{-1}(N) \right) \sqrt{\sigma_N} \qquad \qquad (N \geq N_1+1), \label{Literation}
\end{align}  
where $L(N_1,Z_1)= \sqrt{\left|E_L(N_1,Z_1)\right|} \leq 50.01$, $l_{N_1}:=f^{-1}(N_1+1)-Z_1$, and
\begin{align*}
l_n := f^{-1}(n+1) - f^{-1}(n)  \qquad \qquad (n \geq N_1+1)\,.
\end{align*}
To prove Estimate \eqref{Elowergoal2}, we are going to bound all contributions to the right-hand side of Eq. \eqref{Literation}. We start by deriving an upper bound for the length $l_n$ of the intervals, where the number of bound electrons remains constant.

\begin{lemma} \label{lm1}
For $N_1=47$ and $Z_1=30.9$, the sequence $(l_n)_{n \geq N_1}$ is strictly monotonically increasing with $0 < l_n < 1/\beta$ for all $n \geq N_1$.
\end{lemma}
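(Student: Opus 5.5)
The plan is to pass to the inverse function on the last branch of $f$ and use its convexity. The first step is to locate all the relevant values $f^{-1}(n)$, $n \geq N_1+1=48$, in the region $Z \geq Z_1$, where $f(Z)=\beta Z + kZ^{1/3}$. Numerically, $f(Z_1)=1.1185\cdot 30.9 + 4\cdot 30.9^{1/3}\approx 34.56+12.55 = 47.11$, so $47<f(Z_1)<48$; this is also consistent with $N_1=\lfloor f(Z_1)\rfloor = 47$. On the middle branch, $f(Z)$ stays below $\lim_{Z\uparrow Z_1}(1.22Z+3Z^{1/3}) \approx 37.70+9.41 = 47.11 < 48$. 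On the first branch, $f(Z)\le 2\cdot 5.5+1=12$. Since $f$ is strictly increasing, $f^{-1}(n)>Z_1$ for every $n\ge 48$. Hence, setting $g:=\bigl(f|_{[Z_1,\infty)}\bigr)^{-1}\colon [f(Z_1),\infty)\to[Z_1,\infty)$, we have $f^{-1}(n)=g(n)$ for $n\ge 48$ and $Z_1=g(f(Z_1))$. Consequently
\begin{align*}
l_{N_1}=\int_{f(Z_1)}^{48} g'(y)\,dy, \qquad l_n=\int_n^{n+1} g'(y)\,dy \quad (n\ge 48).
\end{align*}

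The second step is to study $g'$. On $[Z_1,\infty)$ the function $f$ is smooth with $f'(Z)=\beta+\tfrac{k}{3}Z^{-2/3}>\beta>0$, so
\begin{align*}
g'(y)=\frac{1}{\beta+\tfrac{k}{3}\,g(y)^{-2/3}}.
\end{align*}
Since $g$ is strictly increasing and $Z\mapsto Z^{-2/3}$ is strictly decreasing, $g'$ is strictly increasing, i.e.\ $g$ is strictly convex (equivalently, the inverse of the strictly concave increasing $f$). Moreover, $0<g'(y)<1/\beta$ for all $y$.

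The third step reads off the claims. For $n\ge 48$, the integral representation immediately gives $0<l_n<1/\beta$. Monotonicity follows from
\begin{align*}
l_{n+1}-l_n=\int_n^{n+1}\bigl(g'(y+1)-g'(y)\bigr)\,dy>0,
\end{align*}
since $g'$ is strictly increasing. For $n=N_1$, the integration interval $[f(Z_1),48]$ has length in $(0,1)$, so $0<l_{N_1}<(48-f(Z_1))/\beta<1/\beta$. For $l_{N_1}<l_{48}$, note that $g'\le g'(48)$ on $[f(Z_1),48]$ and $g'>g'(48)$ on $(48,49]$. Therefore
\begin{align*}
l_{N_1}\le (48-f(Z_1))\,g'(48)<g'(48)<l_{48}.
\end{align*}

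The only genuinely delicate point is the numerical bookkeeping in the first step. One must verify $47<f(Z_1)<48$ with enough margin, and check that the earlier branches of $f$ never reach $48$, so that every $f^{-1}(n)$ with $n\ge 48$ is governed by the convex inverse $g$. Everything else is elementary calculus.
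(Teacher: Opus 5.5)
Your proof is correct and follows essentially the same route as the paper: you write $l_n$ as the integral of $(f^{-1})' = 1/f'(f^{-1}(\cdot))$ over a unit interval on the last branch of $f$, bound $f' > \beta$ to get $0<l_n<1/\beta$, and obtain strict monotonicity from $f'$ being decreasing (equivalently, convexity of the inverse). The only difference is that you handle the boundary term $l_{N_1}$ analytically, using $47 < f(Z_1) < 48$ and the monotonicity of $g'$ to get $0<l_{N_1}<1/\beta$ and $l_{N_1} < g'(48) < l_{48}$, and you check explicitly that $f^{-1}(n)$ lies on the last branch for $n \geq 48$; the paper instead verifies $l_{47}<l_{48}$ by direct computation and leaves this branch bookkeeping implicit.
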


\begin{proof}
The inequality $l_{47} < l_{48}$ can be verified by direct computation. For $n \geq 48$, the fundamental theorem of calculus yields that 
\begin{align*}
l_n = f^{-1}(n+1) - f^{-1}(n) = \int\limits_n^{n+1} (f^{-1})'(x) \, dx =  \int\limits_n^{n+1} \dfrac{1}{f'(f^{-1}(x))} \, dx\,.
\end{align*}
Using now that $f'(x)=\beta + k x^{-2/3}/3>\beta$ for all $x>Z_1$, we conclude that 
\begin{align*}
0< l_n =  \int\limits_n^{n+1} \dfrac{1}{f'(f^{-1}(x))} \, dx\,
< \dfrac{1}{\beta}\,.
\end{align*}
It remains to show that 
 $(l_n)_{n \geq 48}$ is strictly monotonically increasing. For this purpose, we first note that $f$, and hence $f^{-1}$, is strictly monotonically increasing, whereas $f'$ is strictly monotonically decreasing for $x > Z_1$. This implies that $f'(f^{-1}(x)) > f'(f^{-1}(x+1))$
 for $x \geq 48$, and hence we obtain for all $n \geq 48$ that
 \begin{align*}
 l_n=  \int\limits_n^{n+1} \dfrac{1}{f'(f^{-1}(x))} \, dx
 <  \int\limits_{n}^{n+1} \dfrac{1}{f'(f^{-1}(x+1))} \, dx
 = \int\limits_{n+1}^{n+2} \dfrac{1}{f'(f^{-1}(x))} \, dx = l_{n+1}\,.
 \end{align*}
\end{proof}
Using Lemma \ref{lm1} to estimate the right-hand side of Eq. \eqref{Literation}, we arrive at
\begin{align}
L(N,\eta N) &\leq L(N_1,Z_1) + \frac{1}{\beta} \sum_{n=N_1}^{N-1} \sqrt{\sigma_n} + \left( \eta N - f^{-1}(N) \right) \sqrt{\sigma_N} \nonumber \\
&=  L(N_1,Z_1) + \frac{1}{\beta} \sum_{n=N_1}^{N} \sqrt{\sigma_n} + \left( \eta N - \frac{1}{\beta} - f^{-1}(N) \right) \sqrt{\sigma_N}\,. \label{Ineq1}
\end{align} 
For a further estimate of the right-hand side, we need the following basic observation:

\begin{lemma}\label{lm2}
Let $n_0, n_1 \in \Z$ with $n_0 \leq n_1$ and let $g: [n_0, n_1] \rightarrow \R$ be a concave function. Then,
\begin{align*}
\sum\limits_{n=n_0}^{n_1} g(n) \leq \int\limits_{n_0}^{n_1} g(x) \,  dx + \dfrac{g(n_0) + g(n_1)}{2}\,.
\end{align*}
\end{lemma}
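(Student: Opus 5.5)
The plan is to compare the sum with the integral one unit interval at a time, using that a concave function lies above its chords. If $n_0=n_1$, the integral is zero and the right-hand side equals $g(n_0)$, so there is nothing to prove. Assume therefore $n_0<n_1$ and split
\begin{align*}
\int\limits_{n_0}^{n_1} g(x)\,dx=\sum_{n=n_0}^{n_1-1}\int\limits_{n}^{n+1} g(x)\,dx\,.
\end{align*}

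Next, estimate each piece separately. On $[n,n+1]$, write $x=(1-t)n+t(n+1)$ with $t\in[0,1]$. Concavity of $g$ gives $g(x)\geq (1-t)g(n)+t\,g(n+1)$. Integrating over $t\in[0,1]$ yields the trapezoidal bound
\begin{align*}
\int\limits_n^{n+1} g(x)\,dx\geq \frac{g(n)+g(n+1)}{2}\,.
\end{align*}
For this to make sense, $g$ must be integrable. A concave function on a compact interval is continuous in the interior and bounded below by the chord through its endpoints. It is also bounded above, for instance by a supporting line at an interior point. It is therefore Lebesgue (indeed Riemann) integrable, since it is continuous except possibly at the two endpoints.

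Finally, sum the trapezoidal bounds over $n=n_0,\dots,n_1-1$. Each interior point $n_0<n<n_1$ appears twice with weight $1/2$, and each endpoint appears once. This gives
\begin{align*}
\int\limits_{n_0}^{n_1} g(x)\,dx\geq \sum_{n=n_0}^{n_1} g(n)-\frac{g(n_0)+g(n_1)}{2}\,,
\end{align*}
which rearranges to the claim. No step here is a genuine obstacle. The only point needing a remark is the integrability of $g$ near the endpoints, which the chord and supporting-line bounds above settle.
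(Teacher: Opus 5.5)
Your proof is correct and follows the paper's argument exactly. Like the paper, you split the integral into unit intervals, bound each piece below by the trapezoid value via concavity using the parametrization $x=(1-t)n+t(n+1)$, and sum so that interior points get total weight one. Your extra remark on the integrability of $g$ (continuity in the interior, and boundedness via the chord and a supporting line) is correct, and the paper leaves that detail implicit.
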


\begin{figure}[H] 
\begin{center}
\begin{tikzpicture}[scale=0.8, even odd rule, x=3.5cm, y=2cm, declare function={
      func(\x)={0.3+2.5*ln(\x)};
   }]
 
\draw[-{Stealth[length=10pt, width=8pt]}] (-0.5,0) -- (4.5,0) node[below left=3pt and 4pt]{\Large$x$};
\draw[-{Stealth[length=10pt, width=8pt]}] (0,-0.5) -- (0,4.2) node[below left= 3pt and 3pt]{\Large$y$};

   \draw[color=black] ({1},-.2) -- ({1},.0) node[below left =8pt and -9pt, color=black] {\Large $n_0$};
     \draw[color=black] ({4},-.2) -- ({4},.0) node[below left =8pt and -9pt, color=black] {\Large $n_1$};

\foreach \a in {1,2,3}  
\fill[red!40] plot[domain=(\a+1):\a] (\x,{0})%
                      -- plot[domain=\a:(\a+1)] (\x,{func(\a) +(func(\a+1)-func(\a))*(\x-\a)});   
 \foreach \a in {1,2,3}
 \draw[red] plot[domain=(\a+1):\a] (\x,{0})%
                      -- plot[domain=\a:(\a+1)] (\x,{func(\a) +(func(\a+1)-func(\a))*(\x-\a)});                       
\draw[red] (4,0)--(4,{func(4)});
\draw[blue] plot[domain=1:4] (\x,{func(\x)});

\node[color=blue] at (2.4,2.7) {\Large $g$};
\end{tikzpicture}
\caption{The red area is obtained by applying the composite trapezoidal rule with step width $1$ to the concave function $g: [n_0,n_1] \rightarrow \R$. This yields a lower bound for  $\int\limits_{n_0}^{n_1} g(x) \,  dx$.}
\label{Fig:IntegralConcave}
\end{center}
\end{figure}

\begin{proof}
For $n_0=n_1$, both sides are equal to $g(n_0)$. For $n_0<n_1$,
the desired estimate essentially follows from the fact that for concave functions the composite trapezoidal rule yields a lower bound for the integral (as illustrated in Figure \ref{Fig:IntegralConcave}).
For the sake of completeness, we also give the explicit computation:
\begin{align*}
\int\limits_{n_0}^{n_1} g(x) \,  dx 
= \sum\limits_{n=n_0}^{n_1-1} \int\limits_{n}^{n+1} g(x) \,  dx 
&= \sum\limits_{n=n_0}^{n_1-1} \int\limits_{0}^{1} g((1-\lambda)n+ \lambda (n+1))  \,  d\lambda \nonumber \\
& \geq \sum\limits_{n=n_0}^{n_1-1} \int\limits_{0}^{1} (1-\lambda) g(n) + \lambda g(n+1)  \,  d\lambda \nonumber \\
&= \sum\limits_{n=n_0}^{n_1-1}  \dfrac{g(n+1) + g(n)}{2} 
= \sum\limits_{n=n_0}^{n_1} g(n) - \dfrac{g(n_0) + g(n_1)}{2}\,.
\end{align*}
\end{proof}

To bound the summands $\sqrt{\sigma_n}$ in Estimate \eqref{Ineq1}, we use Lemma \ref{lmA2} in the appendix, which implies that
\begin{align}
\sigma_n \leq 2^{-1} 3^{1/3} q^{2/3} n^{1/3} - \delta(N_1,q) \qquad \qquad (n \geq N_1), \label{sigmaest}
\end{align}
where $$\delta(N_1,q) = \frac{q}{4} - \frac{q^{4/3}}{3^{4/3} \cdot 8}  \left( 1+r(N_1)^{1/3}\right) N_1^{-1/3}$$
with $r(N_1)$ defined by Eq. \eqref{Defr} below.

\begin{lemma}\label{lm3}
Suppose that $N_1 \in \N$ and $\delta >0$ are chosen so that $\sigma_n \leq 2^{-1}3^{1/3} q^{2/3} n^{1/3} - \delta$ holds for all $n \geq N_1$. Then, for all $N \geq N_1+1$,
\begin{align*}
\sum\limits_{n=N_1}^{N} \sqrt{\sigma_n} \leq \; & \dfrac{2^{1/2}\cdot 3^{7/6}}{7}  q^{1/3} (N^{7/6}- N_1^{7/6} ) - \frac{2^{1/2} \cdot 3^{5/6}}{5} \delta q^{-1/3} 
(N^{5/6}- N_1^{5/6} ) \nonumber \\ & + \frac{3^{1/6}}{2^{3/2}} q^{1/3} (N^{1/6} + N_1^{1/6} ) - \frac{1}{2^{3/2} \cdot 3^{1/6}} \delta q^{-1/3} (N^{-1/6} + N_1^{-1/6} )\,. 
\end{align*}
\end{lemma}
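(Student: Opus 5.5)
Set $a:=2^{-1}3^{1/3}q^{2/3}$, so the hypothesis reads $\sigma_n\le a n^{1/3}-\delta$ for $n\ge N_1$. The first step is a pointwise bound for $\sqrt{\sigma_n}$ that is concave in $n$. Writing $\sqrt{a n^{1/3}-\delta}=\sqrt a\, n^{1/6}\sqrt{1-\delta a^{-1}n^{-1/3}}$ and using $\sqrt{1-t}\le 1-t/2$ for $t\le 1$ gives
\begin{align*}
\sqrt{\sigma_n}\le g(n):=\sqrt a\, n^{1/6}-\dfrac{\delta}{2\sqrt a}\, n^{-1/6}\qquad (n\ge N_1).
\end{align*}
Here $t=\delta a^{-1}n^{-1/3}\in[0,1]$, because $\sigma_n\ge 0$ forces $a n^{1/3}-\delta\ge 0$. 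If $t$ exceeded $1$ we would have $\sigma_n<0$, which is impossible.

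Next I check that $g$ is concave on $[N_1,N]$. The term $x^{1/6}$ is concave. The term $-x^{-1/6}$ is also concave on $(0,\infty)$, since its second derivative is $-\tfrac{7}{36}x^{-13/6}<0$. As $\delta/(2\sqrt a)>0$, the sum $g$ is concave. Lemma \ref{lm2} with $n_0=N_1$ and $n_1=N$ then yields
\begin{align*}
\sum_{n=N_1}^N\sqrt{\sigma_n}\le\int_{N_1}^N g(x)\,dx+\dfrac{g(N_1)+g(N)}{2}.
\end{align*}

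It remains to compute the constants. The integral is $\tfrac67\sqrt a\,(N^{7/6}-N_1^{7/6})-\tfrac{\delta}{2\sqrt a}\cdot\tfrac65(N^{5/6}-N_1^{5/6})$. The boundary terms are $\tfrac{\sqrt a}{2}(N^{1/6}+N_1^{1/6})-\tfrac{\delta}{4\sqrt a}(N^{-1/6}+N_1^{-1/6})$.

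With $\sqrt a=2^{-1/2}3^{1/6}q^{1/3}$ these coefficients become:
\begin{align*}
\tfrac67\sqrt a&=\tfrac{2^{1/2}3^{7/6}}{7}q^{1/3},&
\tfrac{3}{5\sqrt a}&=\tfrac{2^{1/2}3^{5/6}}{5}q^{-1/3},\\
\tfrac{\sqrt a}{2}&=\tfrac{3^{1/6}}{2^{3/2}}q^{1/3},&
\tfrac{1}{4\sqrt a}&=\tfrac{1}{2^{3/2}3^{1/6}}q^{-1/3}.
\end{align*}
These are exactly the coefficients in Lemma \ref{lm3}.

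The only real obstacle is getting a concave majorant of $\sqrt{\sigma_n}$, because $\sqrt{an^{1/3}-\delta}$ itself need not be handled directly by Lemma \ref{lm2}. The tangent-line bound for the square root resolves this. The rest is bookkeeping of constants.
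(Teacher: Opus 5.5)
Your proof is correct and uses the same two ingredients as the paper, the tangent-line bound $\sqrt{A-\delta}\le\sqrt{A}-\delta/(2\sqrt{A})$ and the trapezoidal estimate of Lemma \ref{lm2}, only in the opposite order. The paper applies Lemma \ref{lm2} to the exact function $g(x)=\sqrt{2^{-1}3^{1/3}q^{2/3}x^{1/3}-\delta}$ and linearizes the integral and boundary terms afterwards, whereas you linearize first and apply Lemma \ref{lm2} to the power-function majorant; both give exactly the stated constants, and the paper's $g$ is itself concave (a concave nondecreasing function of the concave $x^{1/3}$), so the obstacle you describe at the end does not actually arise.
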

\begin{proof}
The function $g: [N_1, N] \rightarrow \R$ with $g(x):=\sqrt{2^{-1}3^{1/3}q^{2/3}x^{1/3}-\delta}$ is concave because
$$g'(x)= \frac{q^{2/3}}{4} (3x)^{-2/3} \left(2^{-1}3^{1/3} q^{2/3}  x^{1/3}- \delta \right)^{-1/2} $$
is (strictly) monotonically decreasing. Hence, the assumption that $\sigma_n \leq g(n)^2$ for $n \geq N_1$ and Lemma \ref{lm2} yield the estimate
\begin{align}
\sum\limits_{n=N_1}^{N} \sqrt{\sigma_n} \leq \sum\limits_{n=N_1}^{N} g(n) \leq  \int\limits_{N_1}^{N} g(x) \,  dx + \dfrac{g(N_1) + g(N)}{2},  \label{est1}
\end{align}
where
\begin{align}
\int\limits_{N_1}^{N} g(x) \,  dx &= \int\limits_{N_1}^{N} \sqrt{2^{-1}3^{1/3}q^{2/3}x^{1/3}-\delta} \,  dx \leq 
\int\limits_{N_1}^{N} 2^{-1/2} 3^{1/6}q^{1/3} x^{1/6} - \frac{1}{2^{1/2} \cdot 3^{1/6}}  \delta q^{-1/3}  x^{-1/6} \,  dx \nonumber \\
&= \left[ \frac{2^{1/2} \cdot 3^{7/6} }{7} q^{1/3} x^{7/6} - \frac{2^{1/2} \cdot 3^{5/6}}{5} \delta q^{-1/3} x^{5/6} \right]_{x=N_1}^{x=N}\nonumber \\
&= \frac{2^{1/2} \cdot 3^{7/6}}{7} q^{1/3} (N^{7/6} - N_1^{7/6})
- \frac{2^{1/2} \cdot 3^{5/6}}{5}   \delta q^{-1/3} (N^{5/6} - N_1^{5/6}).
\label{est2}
\end{align}
Here, the inequality in the first line is a consequence of the elementary estimate
\begin{align}
\sqrt{a-\delta} \leq \sqrt{a} - \frac{\delta}{2 \sqrt{a}} \qquad \qquad (0 \leq \delta \leq a)\,. \label{elementaryest}
\end{align}
Furthermore, Estimate \eqref{elementaryest} also implies that
\begin{align}
 \dfrac{g(N_1) + g(N)}{2} &= \frac{1}{2} \left( \sqrt{2^{-1}3^{1/3}q^{2/3}N_1^{1/3}-\delta} + \sqrt{2^{-1}3^{1/3}q^{2/3}N^{1/3}-\delta} \right)  \nonumber \\
 & \leq \frac{3^{1/6}}{2^{3/2}} q^{1/3} (N^{1/6} + N_1^{1/6}) 
 - \frac{1}{2^{3/2} \cdot 3^{1/6}} \delta q^{-1/3} (N^{-1/6} + N_1^{-1/6})\,. \label{est3}
\end{align}
The combination of Estimates \eqref{est1}, \eqref{est2} and \eqref{est3} proves the lemma.
\end{proof}

\begin{lemma} \label{lm4}
For all $N \geq \lfloor f(Z_1)\rfloor+1=N_1+1=48$, we have that $f^{-1}(N) \geq \dfrac{N}{\beta} - k\beta^{-4/3} N^{1/3}$. 
\end{lemma}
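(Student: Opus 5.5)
The plan is to read off the inequality directly from the defining equation of $f$ on its last branch, using the elementary bound $f^{-1}(N) \le N/\beta$ to control the cube-root term. No induction or iteration should be needed.

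\textbf{Step 1: locate $f^{-1}(N)$ on the last branch.} Let $N \geq N_1+1 = 48$ and set $Z := f^{-1}(N)$. Evaluating Definition \eqref{currentf} near $Z_1 = 30.9$ gives
\begin{align*}
1.22 \cdot 30.9 + 3 \cdot 30.9^{1/3} \approx 47.1\,, \qquad \beta \cdot 30.9 + 4 \cdot 30.9^{1/3} \approx 47.1\,.
\end{align*}
This is consistent with $N_1 = \lfloor f(Z_1) \rfloor = 47$. Since $f$ is strictly monotonically increasing, every $Z' < Z_1$ satisfies $f(Z') < 48 \leq N$. Hence $Z \geq Z_1$, so $Z$ lies on the third branch and satisfies
\begin{align*}
\beta Z + k Z^{1/3} = N\,.
\end{align*}
This is the only place where the explicit numbers enter, and it is the one step I would double-check numerically. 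In particular, I would confirm that no value of the middle branch reaches $48$ below $Z_1$.

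\textbf{Step 2: a crude upper bound.} Since $kZ^{1/3} > 0$, the equation gives $\beta Z < N$, that is, $Z < N/\beta$. Therefore
\begin{align*}
Z^{1/3} < (N/\beta)^{1/3} = \beta^{-1/3} N^{1/3}\,.
\end{align*}

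\textbf{Step 3: bootstrap into the lower bound.} Solving the equation for $Z$ gives
\begin{align*}
Z = \frac{N - kZ^{1/3}}{\beta}\,.
\end{align*}
Inserting the bound from Step 2 yields
\begin{align*}
f^{-1}(N) = Z \geq \frac{N}{\beta} - \frac{k}{\beta}\,\beta^{-1/3}N^{1/3} = \frac{N}{\beta} - k\beta^{-4/3}N^{1/3}\,,
\end{align*}
which is exactly the claim.

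The only potential obstacle is the verification in Step 1 that $f^{-1}(N)$ falls in the region $Z \geq Z_1$. The algebra in Steps 2 and 3 is immediate.
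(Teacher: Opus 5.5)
Your proof is correct and is essentially the paper's argument: the paper substitutes $N=f(Z)$ with $Z>Z_1$ and uses $f(Z)^{1/3}\geq(\beta Z)^{1/3}$. That is exactly your bound $Z\leq N/\beta$ inserted into $\beta Z+kZ^{1/3}=N$, and your Step 1 check that $f^{-1}(N)$ lies on the last branch (since $f(Z_1)\approx 47.11<48$ and $f$ is increasing) is a detail the paper leaves implicit.
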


\begin{proof}
After setting $N=f(Z)$ for some $Z > Z_1$, the assertion of the lemma is equivalent to 
$$ 
Z \geq \dfrac{f(Z)}{\beta} - k\beta^{-4/3} f(Z)^{1/3}\,.
$$
But this immediately follows from the Definition \eqref{currentf} of $f$ by using the obvious estimate $f(Z)^{1/3} \geq (\beta Z)^{1/3}$. 
\end{proof}

Now, we are in the position to further estimate the right-hand side of Estimate \eqref{Ineq1}. 
Using Estimates \eqref{sigmaest} and \eqref{elementaryest} together with Lemma \ref{lm4}, we obtain that for all $N \geq N_1+1=48$,
\begin{align}
&\left(\eta N - \frac{1}{\beta} - f^{-1}(N) \right) \sqrt{\sigma_N}
 \leq \left( \left( \eta - \frac{1}{\beta} \right) N + k \beta^{-4/3} N^{1/3} - \frac{1}{\beta} \right) \sqrt{2^{-1}3^{1/3} q^{2/3} N^{1/3} - \delta(N_1,q)} \nonumber \\
& \qquad \qquad \leq \left( \left( \eta - \frac{1}{\beta} \right) N + k \beta^{-4/3} N^{1/3} - \frac{1}{\beta} \right) 2^{-1/2} \left(  3^{1/6} q^{1/3} N^{1/6} -  3^{-1/6}  q^{-1/3} \delta(N_1,q)  N^{-1/6} \right) \nonumber \\
& \qquad \qquad = 2^{-1/2} 3^{1/6} q^{1/3} \left( \eta - \frac{1}{\beta} \right)
N^{7/6} -  2^{-1/2} 3^{-1/6}  q^{-1/3}  \left( \eta - \frac{1}{\beta} \right) \delta(N_1,q)  N^{5/6} \nonumber\\
& \qquad \qquad \quad + 2^{-1/2} 3^{1/6} k \beta^{-4/3} q^{1/3} N^{1/2} -  
 2^{-1/2} \left( 3^{-1/6}  k \beta^{-4/3} q^{-1/3} \delta(N_1,q) + 3^{1/6} q^{1/3} \beta^{-1}  \right)  N^{1/6} \nonumber\\& \qquad \qquad \quad +  2^{-1/2} 3^{-1/6} q^{-1/3} \beta^{-1} \delta(N_1,q) N^{-1/6}\,. \label{est4}
\end{align}
Using this together with Lemma \ref{lm3} and Estimate \eqref{sigmaest} to bound the right-hand side of Estimate \eqref{Ineq1}, we find that
\begin{align}
L(N,\eta N) 
 &\leq \left[2^{-1/2} 3^{1/6} q^{1/3} \left( \eta - \frac{1}{\beta} \right) + \dfrac{2^{1/2}\cdot 3^{7/6}}{7\beta}  q^{1/3}  \right] N^{7/6} 
 \nonumber\\ & \quad \; - \left[ \frac{2^{1/2}\cdot 3^{5/6}}{5\beta} + \frac{1}{2^{1/2} \cdot 3^{1/6}}   \left( \eta - \frac{1}{\beta} \right) \right] q^{-1/3}  \delta(N_1,q) \, N^{5/6} \nonumber\\
& \quad \;  + 2^{-1/2} 3^{1/6} k \beta^{-4/3} q^{1/3}\, N^{1/2} 
 \nonumber\\ & \quad \; - \left[  \frac{1}{2^{1/2} \cdot 3^{1/6}}  k \beta^{-4/3} q^{-1/3} \delta(N_1,q) + \frac{3^{1/6}}{2^{3/2}} q^{1/3} \beta^{-1}  \right] \, N^{1/6}  \nonumber\\ & \quad \; + \left[  L(N_1,Z_1) - \frac{1}{\beta} \left(
  \dfrac{2^{1/2}\cdot 3^{7/6}}{7}  q^{1/3} N_1^{7/6} - \frac{2^{1/2}\cdot 3^{5/6}}{5} \delta(N_1,q) q^{-1/3} N_1^{5/6} \right.  \right. \nonumber\\ & \quad \; \left. \left. -\frac{3^{1/6}}{2^{3/2}} q^{1/3}  N_1^{1/6}  + \frac{1}{2^{3/2} \cdot 3^{1/6}} \delta(N_1,q) q^{-1/3} N_1^{-1/6} \right) \right]   \nonumber\\ & \quad \; + \frac{1}{2^{3/2} \cdot 3^{1/6}} \beta^{-1}  \delta(N_1,q) q^{-1/3} \, N^{-1/6} \qquad \qquad (N \geq N_1+1=48) \,. \label{Eest}
\end{align}
After inserting the explicit choice of the parameters from the beginning of this section, namely $q=2$, $N_1=47$, $Z_1=30.9$, $\beta=1.1185$, $k=4$, and using the bounds $L(N_1,Z_1)\leq 50.01$ and $\eta \geq 1/\beta$, this becomes
\begin{align}
L(N,\eta N)  &\leq \alpha(\eta, \beta)\eta \, N^{7/6} - 0.2405\, N^{5/6} + 3.6861 \,N^{1/2} - 1.2506\, N^{1/6} \nonumber \\&\quad\; -16.3892 + 0.1003\, N^{-1/6} \qquad \qquad (N \geq N_1+1=48) \,. \label{Eestexplicit}
\end{align}

\noindent\textit{Proof of Theorem \ref{theo2}.}
\noindent We have to show that Estimate \eqref{Elower}
holds for all $\eta \in [1/\beta,\infty)$ and all $Z \in (0,\eta N]$, where $\alpha(\eta, \beta)$ is defined by Eq. \eqref{Defalpha}. 
First, we show that it suffices to consider the endpoint case $Z=\eta N$. For this purpose, we  again consider the unitary scaling $U_Z \in \LL(\mathcal{H})$ defined by Eq. \eqref{DefU_Z}.
Then a straightforward computation, using the decomposition of $H_N(Z)$  from Eq. \eqref{H_N(Z)decomp} with $T(N)$, $V(N)$ and $I(N)$ defined by Eq. \eqref{DefTVW}, shows that $(U_Z)^{*}H_N(Z) U_Z = Z^2 \widetilde{H}_N(Z)$, where
\begin{align}
\widetilde{H}_N(Z):= T(N) - V(N) + \frac{1}{Z}\, I(N)\,. \label{DefHTilde}
\end{align}
It follows that the ground-state energy of $\widetilde{H}_N(Z)$ is given by
\begin{align*}
 \inf \sigma\left(\widetilde{H}_N(Z) \right) = \frac{E(N,Z)}{Z^2} \,. 
\end{align*}
As the interaction energy $I(N)$ is positive and has a prefactor of $1/Z$ in the Definition \eqref{DefHTilde} of $\widetilde{H}_N(Z)$, we conclude that $ \inf \sigma(\widetilde{H}_N(Z))$ is monotonically decreasing as a function of $Z>0$. This implies that $E(N,Z)/Z^2 \geq  E(N,\eta N)/(\eta N)^2$ for all $Z \in (0, \eta N]$, so it is indeed sufficient to consider the endpoint case $Z=\eta N$ in Estimate \eqref{Elower}.

To prove Estimate \eqref{Elower} for $Z=\eta N$ with arbitrary $\eta \in [1/\beta,\infty)$, we are going to apply Theorem \ref{theo1}. Note that it is then sufficient to show 
that for all $N \in \N$ and all $\eta \in [1/\beta,\infty)$,
\begin{align}
L(N, \eta N) \leq \alpha(\eta, \beta)\, \eta\, N^{7/6}\,. \label{Lboundtoshow}
\end{align}
For $N \geq 3601$, an elementary computation shows that
 \begin{align*}
&- 0.2405\, N^{5/6} + 3.6861 \,N^{1/2} - 1.2506\, N^{1/6} -16.3892 + 0.1003\, N^{-1/6} \\
&\qquad\leq N^{1/2}(- 0.2405\,N^{1/3} + 3.6861) < 0\,,
 \end{align*}
which means that Estimate \eqref{Lboundtoshow} then immediately follows from Estimate \eqref{Eestexplicit}. 
To prove that Estimate \eqref{Lboundtoshow} also holds for $N \leq 3600$, we first consider the left endpoint case $\eta=1/\beta$, where $\alpha(1/\beta, \beta)=2^{5/6}3^{7/6}/7$. 
By computing $L(N, N/\beta )$ explicitly for all $N \leq 3600$ with the help of a computer, we see that the desired bound
$L(N, N/\beta ) \leq 2^{5/6}3^{7/6} N^{7/6} /(7 \beta)$ indeed holds for any $N \leq 3600$. 
For general $\eta \in [1/\beta,\infty)$, it follows from the 
Definition \eqref{DefL} of $L(N, \eta N)$ with $\eta N \geq N/\beta > f^{-1}(N)$ that
\begin{align*}
L(N, \eta N) &= L(N, N/\beta) + \sqrt{\sigma_N} \left( \eta - \dfrac{1}{\beta} \right) N\,.
\end{align*}
Now, note that it follows from Lemma \ref{lmA2} that $\sigma_N < (3/2)^{1/3} N^{1/3}$, so we conclude that for all $N \leq 3600$,
\begin{align*}
L(N, \eta N) & \leq L(N, N/\beta) + \left( \dfrac{3}{2} \right)^{1/6} \left( \eta - \dfrac{1}{\beta} \right) N^{7/6} 
\\
& \leq \left(\dfrac{2^{5/6}\cdot 3^{7/6}}{7\eta \beta}    + \left(\dfrac{3}{2}\right)^{1/6}\left( 1 - \frac{1}{\eta \beta} \right)  \right)\, \eta\, N^{7/6}  \\
&= \alpha(\eta, \beta) \, \eta\, N^{7/6}\,,
\end{align*}
where the last line just follows from the Definition \eqref{Defalpha} of $ \alpha(\eta, \beta)$. 
This completes the proof of Estimate \eqref{Lboundtoshow}, and hence also the proof of Estimate \eqref{Elower}. Finally, the bound \eqref{IonEstimate}  in the statement of Theorem \ref{theo2} follows from Estimate \eqref{Elower} by choosing $\eta=1$ and the bound \eqref{AtomEstimate} is just \eqref{IonEstimate} in the special case $Z=N \in \N$.  \hfill$\Box$\bigskip\newline

\appendix

\section*{Appendix}

\section{Estimate of \texorpdfstring{$\sigma_N$}{sigmaN}} \label{secsigma_N}

In this section we derive an upper bound for $\sigma_N$ defined by Eqs. \eqref{DefSigmaN} and \eqref{Defh_N} in (Case FER) with arbitrary spin number $q \in \N$ (see Lemma \ref{lmA2} below). In this case it is well-known that $\sigma_N$ is the sum of the absolute values of the first $N$ eigenvalues of the one-particle operator $h_1=-\frac{1}{2}\Delta - 1/|x|$ with domain $D(h_1)=H^2(\R^3) \otimes \C^q$.
This means that
\begin{align}
\sigma_N = \sum_{k=1}^N |\lambda_k|,  \label{SigmaFer}
\end{align}
where $\lambda_k$ denotes the $k$-th eigenvalue of $h_1$ (counted with multiplicity). Since $h_1$ has precisely the eigenvalues $-1/(2n^2)\, (n \in \N)$ with corresponding multiplicity $qn^2$, it follows that $\lambda_k=-1/(2n^2)$ if and only if
$k \in (S(n-1), S(n)]$, where 
\begin{align*}
S(n):= q \sum_{k=0}^{n} k^2  \qquad \qquad (n \in \N_0). \end{align*}
In particular, it follows that for all $n_0 \in \N$,
\begin{align*}
 \sum_{k=1}^{S(n_0)} |\lambda_k| =  \sum_{n=1}^{n_0} \dfrac{1}{2n^2} \cdot qn^2 = \dfrac{q}{2}n_0\,.
\end{align*}
Now, note that $S$ can be continued to a homeomorphism $S: [0,\infty) \rightarrow [0,\infty)$ by setting
\begin{align*}
S(t):= q  \left( \frac{1}{3} t^3 + \frac{1}{2}t^2 + \frac{1}{6}t \right) \qquad \qquad (t \in [0,\infty))\,. 
\end{align*}
Then, we obtain that
\begin{align*}
 \sigma_N &= \sum_{k=1}^N |\lambda_k| \\
 &= \sum_{k=1}^{S(\lfloor S^{-1}(N) \rfloor)} |\lambda_k| + \sum_{k=S(\lfloor S^{-1}(N) \rfloor)+1}^N |\lambda_k| \\
 &= \dfrac{q}{2} \lfloor S^{-1}(N) \rfloor + \sum_{k=S(\lfloor S^{-1}(N) \rfloor)+1}^N \dfrac{1}{2(\lfloor S^{-1}(N) \rfloor +1)^2}\,,
\end{align*}
which yields the explicit formula
\begin{align}
\sigma_N = \dfrac{q}{2} \lfloor S^{-1}(N) \rfloor + \dfrac{N-S(\lfloor S^{-1}(N) \rfloor)}{2(\lfloor S^{-1}(N) \rfloor +1)^2}\,. \label{SigmaFormula}
\end{align}
Moreover, by using Cardano's formula to solve the equation $S(t)=N$, we obtain the explicit formula $S^{-1}(N)= \alpha(N) - \frac{1}{2}$ for $N \geq q/(36 \sqrt{3})$,  where 
\begin{align}
\alpha(N):= \sqrt[3]{\dfrac{3N}{2q} + \sqrt{\dfrac{9N^2}{4q^2} - \dfrac{1}{1728}} } + \sqrt[3]{\dfrac{3N}{2q} - \sqrt{\dfrac{9N^2}{4q^2} - \dfrac{1}{1728}} }\,. \label{DefS^-1}
\end{align}

\begin{lemma} \label{lmA1}
For $N, N_1 \in \N$ with $N \geq N_1 \geq  q/(36 \sqrt{3})$, we have
\begin{align*}
\alpha(N) < \left( \dfrac{3}{q} \right)^{1/3} N^{1/3} + \dfrac{1}{3^{4/3} \cdot 4} q^{1/3} \left( 1+r(N_1)^{1/3}\right) N^{-1/3},
\end{align*}
where
\begin{align}
r(N):= \dfrac{1}{2} - \dfrac{1}{2} \sqrt{1- \dfrac{q^2}{3888N^2}}\,. \label{Defr}
\end{align}
\end{lemma}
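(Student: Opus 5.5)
Write $a:=\frac{3N}{2q}$ and $b:=\sqrt{a^2-\frac{1}{1728}}$; this square root is real because $N \geq q/(36\sqrt3)$. Then
\begin{align*}
\alpha(N)=(a+b)^{1/3}+(a-b)^{1/3}, \qquad (a+b)(a-b)=\frac{1}{1728}=12^{-3},
\end{align*}
so $(a-b)^{1/3}=\frac{1}{12\,(a+b)^{1/3}}$. The plan is to bound the two cube roots separately. The large root $(a+b)^{1/3}$ should be bounded by $(2a)^{1/3}=(3/q)^{1/3}N^{1/3}$, which is the leading term. The small root $(a-b)^{1/3}$ should produce the correction term involving $r(N_1)$.

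For the large root, note that $b<a$ gives $(a+b)^{1/3}<(2a)^{1/3}=(3N/q)^{1/3}$ directly. For the small root, factor $a$ out:
\begin{align*}
a-b=a\Bigl(1-\sqrt{1-\tfrac{1}{1728a^2}}\Bigr), \qquad \frac{1}{1728a^2}=\frac{4q^2}{9\cdot 1728 N^2}=\frac{q^2}{3888N^2}.
\end{align*}
Hence $a-b=2a\,r(N)$, with $r$ exactly as in Eq.~\eqref{Defr}. Since $r$ is decreasing in $N$, we have $r(N)\le r(N_1)$ for $N\ge N_1$.

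It remains to extract the factor $N^{-1/3}$ from $(2a\,r(N))^{1/3}$. For this I use the identity $r(N)(1-r(N))=\frac{1}{4}\cdot\frac{q^2}{3888N^2}$, which follows from $\frac12\pm\frac12\sqrt{1-x}$ being the roots of $t^2-t+x/4$. Because $r\le\frac12$, this gives $r(N)\le \frac{q^2}{7776N^2}$. Combining,
\begin{align*}
(a-b)^{1/3}=\bigl(2a\,r(N)\bigr)^{1/3}\le \Bigl(\frac{3N}{q}\cdot\frac{q^2}{7776N^2}\Bigr)^{1/3}=\Bigl(\frac{q}{2592}\Bigr)^{1/3}N^{-1/3}.
\end{align*}
Since $2592=2^5\cdot3^4$ and $(3^{4/3}\cdot4)^3=3^4\cdot 2^6$, this is strictly below the constant $\frac{1}{3^{4/3}\cdot4}q^{1/3}\cdot 2^{1/3}$. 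That is at most the target coefficient $\frac{1}{3^{4/3}\cdot4}q^{1/3}(1+r(N_1)^{1/3})$ only if $r(N_1)^{1/3}\ge 2^{1/3}-1$, which does not hold in general.

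So the crude bound above is too weak, and the honest route is a finer split. Use the exact product relation $(a-b)^{1/3}=\frac{1}{12(a+b)^{1/3}}$ together with the first-order expansion of the large root:
\begin{align*}
(a+b)^{1/3}=(2a)^{1/3}\bigl(1-r(N)\bigr)^{1/3}\le (2a)^{1/3}\bigl(1-\tfrac{r(N)}{3}\bigr).
\end{align*}
The negative term $-(2a)^{1/3}r(N)/3$ partially compensates the small root. The small root itself is written as $(2a\,r)^{1/3}=(2a)^{1/3}r^{1/3}$, where $r^{1/3}$ is of size $N^{-2/3}$ times a constant by the identity $r(1-r)=q^2/(4\cdot3888N^2)$. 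Expressing $r(N)^{1/3}$ through $N^{-2/3}(1-r(N))^{-1/3}$ and bounding $(1-r)^{-1/3}$ by $1+r^{1/3}$ produces the factor $(1+r(N_1)^{1/3})$ after using monotonicity of $r$.

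The main obstacle is exactly this constant bookkeeping: it must match $\frac{1}{3^{4/3}\cdot 4}$, i.e.\ $\frac{1}{12}(3/q)^{-1/3}$ after simplification. Note that $\frac{1}{12(2a)^{1/3}}=\frac{1}{12}(q/3)^{1/3}N^{-1/3}=\frac{q^{1/3}}{3^{4/3}\cdot4}N^{-1/3}$, which identifies the main coefficient exactly. The point is therefore to show
\begin{align*}
\frac{1}{12(a+b)^{1/3}}\le \frac{1+r^{1/3}}{12(2a)^{1/3}},
\qquad\text{equivalently}\qquad
(1-r)^{-1/3}\le 1+r^{1/3} \quad\text{for } r\in[0,\tfrac12],
\end{align*}
which I would verify by elementary calculus, with strictness coming from $b<a$.
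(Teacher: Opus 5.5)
Your final argument is correct, but it follows a different route from the paper's. The paper never computes the small Cardano root $v=(a-b)^{1/3}$ to order $N^{-1/3}$. Instead it combines $u^3+v^3=3N/q$ and $3uv=\tfrac{1}{4}$ into the cubic relation $\alpha^3=3N/q+\alpha/4$. It then proves the crude bound $\alpha(N)<(3N/q)^{1/3}\bigl(1+r(N_1)^{1/3}\bigr)$ from $u<(2a)^{1/3}$ and $a-b=(3N/q)\,r(N)$, which is where your abandoned first attempt started. This crude bound is fed back into $\alpha=(3N/q+\alpha/4)^{1/3}$ and linearized via $(A+\delta)^{1/3}\le A^{1/3}+\delta/(3A^{2/3})$, so the factor $1+r(N_1)^{1/3}$ is simply inherited from the crude bound.

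You instead keep $u<(2a)^{1/3}$ and compute $v$ exactly from $uv=\tfrac{1}{12}$ and $a+b=2a(1-r)$. This gives $v=(1-r(N))^{-1/3}q^{1/3}N^{-1/3}/(3^{4/3}\cdot 4)$, so everything reduces to $(1-r)^{-1/3}\le 1+r^{1/3}$ on $(0,\tfrac{1}{2}]$. You leave that inequality to ``elementary calculus'', and it is indeed true. With $s=r^{1/3}$ one has $(1+s)^3(1-s^3)-1=3s+3s^2-3s^4-3s^5-s^6$, and $s^3\le\tfrac{1}{2}$, $s\le 1$ give $3s^4+3s^5+s^6\le 2s+\tfrac{3}{2}s^2<3s+3s^2$.

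Your route is shorter and sharper. It yields the lemma with $(1-r(N_1))^{-1/3}=1+O(N_1^{-2})$ in place of $1+r(N_1)^{1/3}=1+O(N_1^{-2/3})$; the weakening is needed only to match the stated form. The paper's bootstrap, in turn, needs no auxiliary inequality and explains where the stated factor comes from.

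The expansion $(a+b)^{1/3}\le(2a)^{1/3}(1-r/3)$ and the ``partial compensation'' remark play no role in your final step and can be dropped.
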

\begin{proof}
With $\xi(N):= \sqrt{\dfrac{9N^2}{4q^2}- \dfrac{1}{1728}}$, a direct computation yields
\begin{align}
\alpha(N)^3 &= \dfrac{3N}{2q} + \xi(N)
+ 3  \sqrt[3]{\left( \dfrac{3N}{2q} + \xi(N) \right) \left( \dfrac{3N}{2q} - \xi(N) \right) }\, \alpha(N) + \dfrac{3N}{2q} - \xi(N)
\nonumber \\ &= \dfrac{3N}{q} + \dfrac{1}{4} \alpha(N) \,. \label{alphapower}
\end{align}
Note that
\begin{align*}
\sqrt[3]{\dfrac{3N}{2q} +  \sqrt{\dfrac{9N^2}{4q^2}- \dfrac{1}{1728}}  } < \sqrt[3]{\dfrac{3N}{q}} \qquad \qquad (N \geq q/(36 \sqrt{3})
\end{align*}
and
\begin{align*}
\dfrac{3N}{2q} -  \sqrt{\dfrac{9N^2}{4q^2}- \dfrac{1}{1728}} 
= \dfrac{3N}{q} r(N)\,. 
\end{align*}
Hence, it follows from Eq. \eqref{DefS^-1} and the fact that $r$ is strictly monotonically decreasing in $N$ that
\begin{align*}
\alpha(N) < \sqrt[3]{\dfrac{3N}{q}} \left( 1+ r(N_1)^{1/3} \right) \qquad \qquad (N \geq N_1)\,.
\end{align*}
Using this to estimate the right-hand side of Eq. \eqref{alphapower}, we obtain for $N \geq N_1$ the upper bound
\begin{align*}
\alpha(N) &= \sqrt[3]{\dfrac{3N}{q} + \dfrac{1}{4} \alpha(N)}\nonumber \\
&< \sqrt[3]{\dfrac{3N}{q} + \dfrac{1}{4} \left( \dfrac{3}{q} \right)^{1/3}  \left( 1+ r(N_1)^{1/3} \right) N^{1/3} }  \nonumber \\
& \leq \left( \dfrac{3}{q} \right)^{1/3}  N^{1/3} + \dfrac{ \dfrac{1}{4} \left( \dfrac{3}{q} \right)^{1/3}  \left( 1+ r(N_1)^{1/3} \right) N^{1/3}}{3\left( \dfrac{3N}{q} \right)^{2/3}  } \nonumber \\
& = \left( \dfrac{3}{q} \right)^{1/3} N^{1/3} + \dfrac{1}{3^{4/3} \cdot 4} q^{1/3} \left( 1+r(N_1)^{1/3}\right) N^{-1/3}, \nonumber
\end{align*}
where the elementary estimate
\begin{align*}
(a+\delta)^{1/3} \leq  a^{1/3} + \dfrac{\delta}{3 a^{2/3}}  \qquad \qquad (a > 0, \delta \geq 0)  
\end{align*}
was used for the third line. 
\end{proof}

\begin{lemma} \label{lmA2}
For all $N, N_1 \in \N$ with $N \geq N_1 \geq  q/(36 \sqrt{3})$, we have
\begin{align*}
\sigma_N < \dfrac{3^{1/3} q^{2/3}}{2} N^{1/3} - \dfrac{q}{4} + \dfrac{1}{3^{4/3} \cdot 8} q^{4/3}  \left( 1+r(N_1)^{1/3}\right) N^{-1/3}\,,
\end{align*}
where $r$ is defined by Eq. \eqref{Defr}.
\end{lemma}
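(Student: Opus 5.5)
We start from the explicit formula \eqref{SigmaFormula} with $m:=\lfloor S^{-1}(N)\rfloor$ and $t:=S^{-1}(N)=\alpha(N)-\tfrac12$, so that $m\le t<m+1$. The plan is to bound $\sigma_N$ from above by a smooth function of $t$, namely $\tfrac q2 t$ plus a correction, and then insert Lemma \ref{lmA1}.

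\textbf{Step 1: interpolate between shells.} Since $N=S(t)$ and $S(m+1)-S(m)=q(m+1)^2$, the second term in \eqref{SigmaFormula} equals
\begin{align*}
\frac{S(t)-S(m)}{2(m+1)^2}=\frac q2\cdot\frac{S(t)-S(m)}{S(m+1)-S(m)}.
\end{align*}
Because $S$ is convex on $[0,\infty)$, the secant ratio satisfies $\frac{S(t)-S(m)}{S(m+1)-S(m)}\le t-m$. Hence
\begin{align*}
\sigma_N\le \frac q2\, t=\frac q2\,S^{-1}(N),
\end{align*}
which is simply the statement that the piecewise linear interpolation lies above the concave function $S^{-1}$.

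\textbf{Step 2: apply Lemma \ref{lmA1}.} Substituting $S^{-1}(N)=\alpha(N)-\tfrac12$ into Step 1 gives
\begin{align*}
\sigma_N\le \frac q2\alpha(N)-\frac q4.
\end{align*}
Lemma \ref{lmA1} then yields
\begin{align*}
\frac q2\alpha(N)<\frac q2\Bigl(\frac3q\Bigr)^{1/3}N^{1/3}+\frac{q^{4/3}}{3^{4/3}\cdot 8}\bigl(1+r(N_1)^{1/3}\bigr)N^{-1/3},
\end{align*}
and $\frac q2(3/q)^{1/3}=\frac{3^{1/3}q^{2/3}}{2}$. The strict inequality comes from Lemma \ref{lmA1}, and the hypothesis $N\ge N_1\ge q/(36\sqrt3)$ is exactly what is needed both for Lemma \ref{lmA1} and for the Cardano formula \eqref{DefS^-1}.

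\textbf{Main obstacle.} The only subtle point is the convexity step in Step 1. One must check that $S''(t)=q(2t+1)>0$ on $[0,\infty)$. One must also confirm that the identity $S(m+1)-S(m)=q(m+1)^2$, which holds for integers by definition, agrees with the polynomial continuation of $S$; it does, since the polynomial matches $q\sum_{k\le n}k^2$ at integers. Finally, the edge case $t=m$ is harmless, because then both sides of the secant inequality equal $0$.
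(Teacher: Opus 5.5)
Your proof is correct and follows the paper's route. Like the paper, you first use Eq.~\eqref{SigmaFormula} to show $\sigma_N \le \frac{q}{2}S^{-1}(N)$, then substitute $S^{-1}(N)=\alpha(N)-\frac12$ and apply Lemma~\ref{lmA1}. The one difference is in how the key bound $N-S(m)\le q(m+1)^2\,\bigl(S^{-1}(N)-m\bigr)$ is obtained: the paper expands the cubic in $\kappa(N)=S^{-1}(N)-m$ and estimates it term by term, whereas you get the same secant inequality directly from the convexity of $S$, which is shorter.
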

\noindent\textit{Remark.} The asymptotic behavior of $\sigma_N$ as $N \to \infty$ is well-known (see e.g. \cite{LiebStability} after Eq. (40), p. 560). However, we are not aware of a reference for the remainder estimate. 

\begin{proof}
After writing $\lfloor S^{-1}(N) \rfloor = S^{-1}(N) - \kappa(N) $ with $\kappa(N) \in [0,1)$, Eq. \eqref{SigmaFormula} becomes
\begin{align}
\sigma_N = \dfrac{q}{2} (S^{-1}(N) - \kappa(N)) + \dfrac{N-S(S^{-1}(N) - \kappa(N))}{2(S^{-1}(N) - \kappa(N) +1)^2}\,,  \label{SigmaFormula2}
\end{align}
where
\begin{align*}
N-S(S^{-1}(N) - \kappa(N)) &= S(S^{-1}(N))-S(S^{-1}(N) - \kappa(N)) \\
&= q \left( \dfrac{S^{-1}(N)^3 - (S^{-1}(N) - \kappa(N))^3}{3}  
+ \dfrac{S^{-1}(N)^2 - (S^{-1}(N) - \kappa(N))^2}{2} \right.\\
&\quad + \dfrac{S^{-1}(N) - (S^{-1}(N) - \kappa(N))}{6}\Bigg)\ \\
&= q \left( \left( S^{-1}(N) + \frac{1}{2}\right)^2 - \left( S^{-1}(N) + \frac{1}{2}\right) \kappa(N) + \frac{\kappa(N)^2}{3} - \frac{1}{12} \right) \kappa(N) \nonumber \\
&\leq q \left( \left( S^{-1}(N) + \frac{1}{2}\right)^2 - \left( S^{-1}(N) + \frac{1}{2}\right) \kappa(N) + \frac{\kappa(N)}{4}^2 \right) \kappa(N) \nonumber \\
&= q \left(S^{-1}(N) + \frac{1}{2} - \frac{\kappa(N)}{2}  \right)^2 \kappa(N)\\
& \leq q \left(S^{-1}(N) - \kappa(N) +1  \right)^2 \kappa(N)\,.
\end{align*}
Here, both inequalities used that $\kappa(N) \in [0,1)$. Using this to estimate the right-hand side of Eq. \eqref{SigmaFormula2}, we obtain that 
\begin{align*}
\sigma_N \leq \dfrac{q}{2} S^{-1}(N) \qquad \qquad (N \in \N).
\end{align*}
Now, the desired estimate follows from Lemma \ref{lmA1} since $S^{-1}(N)=\alpha(N)-  \frac{1}{2}$  for $N \geq q/(36 \sqrt{3})$.
\end{proof}

\section{Critical charge number for one bound electron}
 Although from a physical point of view it seems to be obvious  that the number of electrons that a nucleus can bind is monotonically increasing in its charge number $Z>0$, we are not aware of a rigorous proof for this. But the following lemma shows that there exists a critical charge number $Z_c \in [1/2,1)$, so that for $Z \in (0,Z_c]$ exactly one electron can be bound to  the nucleus, while for $Z>Z_c$ at least two electrons can be bound.

\begin{lemma} \label{lmZc}
Consider (Case FER) with $q=2$ and define 
\begin{align*}
Z_c:=\sup\left\{Z>0 \, : \, N_c(Z)=1 \right\} \,,
\end{align*}
where the maximum number of bound electrons $N_c(Z)$ is defined by Eq. \eqref{DefNc}. Then $Z_c \in [1/2,1)$, $N_c(Z)=1$ if  $Z \in  (0,Z_c]$ and $N_c(Z) \geq 2$ if $Z>Z_c$. In particular, $E(2,Z)=E(1,Z)=-Z^2/2$ for all $Z \in (0,Z_c]$ and $E(2,Z)<E(1,Z)$ for all $Z>Z_c$.
\end{lemma}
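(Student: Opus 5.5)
The plan is to show that the set $A:=\{Z>0 : N_c(Z)=1\}$ is an interval $(0,Z_c]$, with $Z_c\in[1/2,1)$. The argument works with the two-electron energy and the one-electron threshold $E(1,Z)=-Z^2/2$. By Eq. \eqref{Edecrease}, $N_c(Z)=1$ holds iff $E(2,Z)=E(1,Z)$; the implication $N_c(Z)\geq 2 \Rightarrow E(2,Z)<E(1,Z)$ needs a separate argument, addressed at the end. The key tool is the scaled Hamiltonian $\widetilde H_2(Z)=T(2)-V(2)+\frac1Z I(2)$ from the proof of Theorem \ref{theo2}. Its ground-state energy is $E(2,Z)/Z^2$, which is monotonically decreasing in $Z$, while $E(1,Z)/Z^2=-1/2$ is constant.

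\textbf{Monotonicity.} Define $\epsilon(Z):=E(2,Z)/Z^2+1/2\le 0$. By the scaling observation, $\epsilon$ is monotonically decreasing, and by Lemma \ref{lm01} it is continuous. The set $\{Z:\epsilon(Z)=0\}$ is therefore an interval of the form $(0,Z^*]$, closed by continuity, and $\epsilon<0$ for $Z>Z^*$. Since $\epsilon(Z)<0$ means $E(2,Z)<E(1,Z)$, which is exactly $N_c(Z)\ge 2$, we need to identify $Z^*=Z_c$. Suppose $Z>Z^*$, so that $E(2,Z)<E(1,Z)$ and hence $N_c(Z)\geq 2$. Conversely, suppose $Z\le Z^*$ and assume, for contradiction, that some $N\ge3$ has $E(N,Z)<E(N-1,Z)$. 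I want to rule this out using Lieb's bound \eqref{LiebBound}. For $Z<1$ it gives $N_c(Z)<2Z+1<3$ directly, so $N_c(Z)\le 2$. Combined with $E(2,Z)=E(1,Z)$ this yields $N_c(Z)=1$. Hence $A=(0,Z^*]$ provided $Z^*<1$, and then $Z_c=Z^*$.

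\textbf{Bounds on $Z_c$.} For $Z_c<1$: Bethe's result \cite{Bethe} gives $E(2,1)<E(1,1)$, so $\epsilon(1)<0$. By continuity $\epsilon<0$ on a neighbourhood of $1$, whence $Z^*<1$. For $Z_c\ge 1/2$: I need $E(2,Z)=-Z^2/2$ for $Z\le 1/2$. Here I would invoke Lieb's bound in its sharper form: the standard proof of \eqref{LiebBound} shows that a bound state of $N$ electrons requires $N<2Z+1$. For $Z\le1/2$ this gives $N<2$, so $E(2,Z)=E(1,Z)$. This is the step I expect to be delicate, namely whether \eqref{LiebBound} as stated (strict inequality, applied to $N_c$ as defined in \eqref{DefNc}) already covers the endpoint $Z=1/2$. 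For $Z=1/2$ we get $N_c<2$ directly from \eqref{LiebBound}, since $2Z+1=2$. So $N_c(1/2)=1$, which gives $1/2\in A$ and hence $Z_c\ge1/2$.

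\textbf{Final statements.} Finally, $E(2,Z)=E(1,Z)=-Z^2/2$ on $(0,Z_c]$ follows because $\epsilon=0$ there. The strict inequality $E(2,Z)<E(1,Z)$ for $Z>Z_c$ is $\epsilon<0$ on $(Z^*,\infty)$. The main subtlety is ensuring that $\epsilon$ cannot return to $0$ after becoming negative, and this is exactly what the monotone scaling argument provides.
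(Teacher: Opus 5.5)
Your proof is correct and follows essentially the same argument as the paper. Both use the monotonicity of $E(2,Z)/Z^2$ via the scaled Hamiltonian together with continuity from Lemma \ref{lm01}. Both use Lieb's bound \eqref{LiebBound}, with integrality of $N_c$, to get $N_c=1$ on $(0,1/2]$ and $N_c\le 2$ for $Z<1$, and Bethe's result with continuity to get $Z_c<1$. Your worry about needing a ``sharper form'' of Lieb's bound is unfounded: $N_c(Z)<2Z+1\le 2$ already forces $N_c(Z)=1$ for every $Z\le 1/2$, which is exactly how the paper argues.
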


\noindent\textit{Remarks.} 
\begin{enumerate}[label=\arabic*.)]
\item The lemma is well-known in the literature, see for example the discussion in \cite{Hundertmark2021}.
\item Since $Z_c$ plays an important role in quantum chemistry, great effort has been made to determine the value of $Z_c$ to very high precision with the help of  numerical methods
(see e.g. \cite{Baker1990, Estienne2014, Stillinger1966}). 
The extensive numerical computations in \cite{Estienne2014} suggest that $Z_c = 0.91102822407725573(4)$.
\end{enumerate}

\begin{proof}
We consider $h(Z):=E(2,Z)/Z^2$ for $Z>0$.
Then it follows from Lemma \ref{lm01} that $h$ is continuous and it is also not difficult to show that $h$ is monotonically decreasing. Indeed, applying the unitary scaling $U_{\lambda}$ from Eq. \eqref{DefU_Z} with  $\lambda=Z$ to the decomposition of $H_2(Z)$ from Eq. \eqref{H_N(Z)decomp}, we find that
\begin{align*}
(U_{Z})^{*} H_2(Z) U_{Z} = Z^2 \left( T(2) - V(2) + \dfrac{1}{Z} I(2)\right)\,, 
\end{align*}
and hence
\begin{align*}
h(Z)= \dfrac{1}{Z^2} \inf \sigma\left(H_2(Z) \right) 
=\dfrac{1}{Z^2} \inf \sigma\left( (U_{Z})^{*} H_2(Z) U_{Z} \right) =\inf \sigma\left(T(2) - V(2) + \dfrac{1}{Z} I(2) \right)\,, \end{align*}
where the right-hand side is monotonically decreasing in $Z>0$ since $I(2) \geq 0$. 
Next, we note that Lieb's bound \eqref{LiebBound} implies that 
$N_c(Z) = 1$ for all $Z \in (0,1/2]$ because $N_c(Z) \in \N$ for all $Z>0$. In view of the Definition \eqref{DefNc} of $N_c(Z)$, this means that $h(Z)=E(2,Z)/Z^2=E(1,Z)/Z^2=-1/2$ for all $Z \in (0,1/2]$. 
Moreover, Bethe \cite{Bethe} has shown that $\textup{H}^{-}$ exists and hence $h(1)=E(2,1)<E(1,1)=-1/2$. All these properties of $h$ now motivate the definition 
\begin{align*}
Z_0:=\sup\left\{Z>0 \, : \, h(Z) = -\dfrac{1}{2} \right\}\,.  
\end{align*}
From the above, it then follows that $Z_0 \in [1/2,1)$ and we are going to show that $Z_0=Z_c$. Indeed, if $Z>Z_0$, then we have that $E(2,Z)/Z^2 = h(Z) < h(Z_0) = -1/2 = E(1,1)$ due to the definition of $Z_0$ and the properties of $h$. This is equivalent to $E(2,Z)<E(1,1)Z^2=E(1,Z)$, which means that $N_c(Z) \geq 2$ by the Definition \eqref{DefNc} of $N_c(Z)$.
If $Z \leq Z_0$, then, on the one hand, Lieb's bound and $Z_0<1$ imply that $N_c(Z) \leq 2$. On the other hand,  
it follows from the definition of $Z_0$ and the properties of $h$ that $h(Z)=-1/2$ and hence $E(2,Z)=-Z^2/2=E(1,Z)$.
Altogether, we conclude that $N_c(Z)=1$ if and only if $Z \leq Z_0$ and hence
\begin{align*}
Z_0=\sup\left\{Z>0 \, : \, N_c(Z)=1 \right\}=Z_c \,,
\end{align*}
which completes the proof.
\end{proof}

\section{Monotonicity of the Ionization Energies} \label{secionization}
Experiments and numerical calculations suggest that the energy that is required to remove an electron from an atom or ion increases as more and more electrons have already been removed. More precisely, this means that the ionization energies and electron affinities are monotonic in the sense that for all $n \leq N-1$,
\begin{align}
E(N-n,N) - E(N-n+1,N) \leq E(N-n-1,N) - E(N-n,N)\,, 
\label{IonizationIneqClaim}
\end{align}
where equality holds if and only if the right-hand side is zero. Here, for $n \in \N$, the difference $E(N-n,N) - E(N-n+1,N)$  corresponds to the $n$-th ionization energy of the atom with charge number $N$, and for  $n \leq 0$ it corresponds to the $(|n|+1)$-th electron affinity.
Note that Estimate \eqref{IonizationIneqClaim} can be interpreted as a convexity condition for $E(\,\cdot\,,N)$ for any fixed $N \in \N$. Physicists justify Estimate  \eqref{IonizationIneqClaim} by arguing that the screening of the nuclear charge decreases as more and more electrons are removed \cite{Briet}, but a rigorous proof  of Estimate \eqref{IonizationIneqClaim} within the Schrödinger theory of quantum mechanics still remains an open problem \cite{SimonProblems}. However, for the lower bound $E_L(\,\cdot\,,Z)$ from Theorem \ref{theo1} even a slightly stronger version of Estimate \eqref{IonizationIneqClaim} can be proven:
\begin{theorem}\label{theo4}
Let the hypotheses of Theorem \ref{theo1} be satisfied in either (Case FER) or (Case FS). Then, for all $N \geq N_0+2$ and all $Z \geq Z_0$,
\begin{align}
E_L(N-1,Z) - E_L(N,Z)  \leq E_L(N-2,Z) - E_L(N-1,Z)\,.\label{IonizationIneq}
\end{align}
\end{theorem}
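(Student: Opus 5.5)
The plan is to reduce \eqref{IonizationIneq} to an inequality between the increments of $L$ in $N$, using the explicit piecewise-linear structure of $L(N,\cdot\,)$ from Definition \eqref{DefL}. Since $E_L=-L^2$, \eqref{IonizationIneq} is equivalent to
\begin{align*}
L(N,Z)^2-L(N-1,Z)^2 \leq L(N-1,Z)^2-L(N-2,Z)^2 .
\end{align*}
Write $a_n:=f^{-1}(n)$, which is strictly increasing in $n$. Also set $D_n(Z):=L(n,Z)-L(n-1,Z)$ and $d_n:=\sqrt{\sigma_n}-\sqrt{\sigma_{n-1}}\geq 0$. Both $D_N$ and $D_{N-1}$ are needed, which is why $N\geq N_0+2$ is assumed.

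\textbf{Step 1: formula for the increments.} By induction over $n$, $L(n,\cdot\,)$ is continuous and piecewise linear. It equals $L(n-1,\cdot\,)$ on $[Z_0,a_n]$ and has slope $\sqrt{\sigma_n}$ on $(a_n,\infty)$. Because $a_n>a_{n-1}$, the function $L(n-1,\cdot\,)$ has slope $\sqrt{\sigma_{n-1}}$ on $(a_n,\infty)$. Hence
\begin{align*}
D_n(Z)=d_n\,(Z-a_n)_+ \qquad (n\geq N_0+1,\ Z\geq Z_0).
\end{align*}
At the bottom level $n=N_0+1$ the same holds, since $L(N_0,\cdot\,)$ has slope $\sqrt{\sigma_{N_0}}$ everywhere by \eqref{DefL0}.

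\textbf{Step 2: reduction.} If $Z\leq a_N$, then $D_N(Z)=0$ and the claim reduces to $D_{N-1}(Z)\,(L(N-1,Z)+L(N-2,Z))\geq 0$, which is trivial. For $Z>a_N$, put $s:=Z-a_N$ and $t:=Z-a_{N-1}$, so $t>s>0$. Using $L(N)=L(N-1)+D_N$ and $L(N-2)=L(N-1)-D_{N-1}$, the claim becomes
\begin{align*}
2L(N-1,Z)\,(d_{N-1}t-d_Ns)\;\geq\; d_N^2s^2+d_{N-1}^2t^2 .
\end{align*}

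\textbf{Step 3: ingredients.} There are three.
\begin{itemize}[label={}]
\item (a) $n\mapsto\sigma_n$ is concave, since its increments $|\lambda_n|$ are nonincreasing in (Case FER), and are constant $1/2$ in (Case FS). Hence $\sqrt{\sigma_n}$ is concave and $d_N\leq d_{N-1}$. Because $s<t$, this gives $d_{N-1}t-d_Ns\geq d_{N-1}(t-s)+(d_{N-1}-d_N)s\geq 0$.
\item (b) The increments are small relative to $\sqrt{\sigma}$. Since $\sigma_{n-1}\geq (n-1)|\lambda_n|$, we get $d_n\leq |\lambda_n|/(2\sqrt{\sigma_{n-1}})\leq \sqrt{\sigma_{n-1}}/(2(n-1))$.
\item (c) $L(N-1,\cdot\,)$ has a lower bound that grows at least like slope $\sqrt{\sigma_{N-1}}$. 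Explicitly,
\begin{align*}
L(N-1,Z)\geq L(N-1,a_{N-1})+\sqrt{\sigma_{N-1}}\,t ,
\end{align*}
where $L(N-1,a_{N-1})\geq\sqrt{|E_0|}+\sqrt{\sigma_{N_0}}(a_{N-1}-Z_0)$ is at least the initial value. Inserting (c) and (b) into the right-hand side bounds it by a small multiple of $L(N-1,Z)\,(d_{N-1}t+d_Ns)$.
\end{itemize}

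\textbf{Main obstacle.} The expected difficulty is the degenerate regime $d_N\approx d_{N-1}$ with $s\approx t$, that is, $a_N$ close to $a_{N-1}$. There the left-hand side of the reduced inequality is only of size $d(t-s)$ plus the concavity gap, while the right-hand side is of order $d^2t^2$. To handle this I would try to strengthen (c): relate $t-s=a_N-a_{N-1}$ and $L(N-1,Z)$ through the strict concavity of $\sqrt{\sigma_n}$, i.e. $d_{N-1}-d_N\gtrsim d_{N-1}^2/\sqrt{\sigma_{N-1}}$, which is elementary for $\sqrt{\,\cdot\,}$ of a concave sequence. The goal is to show $2L(N-1,Z)(d_{N-1}-d_N)s\geq d_N^2s^2$ and $2L(N-1,Z)d_{N-1}(t-s)\geq d_{N-1}^2(t^2-s^2)$ separately; these add up to the required bound, since $d_{N-1}^2t^2=d_{N-1}^2s^2+d_{N-1}^2(t^2-s^2)$ and $d_N\leq d_{N-1}$. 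Both use $L(N-1,Z)\geq\sqrt{\sigma_{N-1}}\,t$. I expect the first to be the delicate one. If it fails in general, I would instead look for a direct argument that $n\mapsto L(n,Z)^2$ is concave, via induction over $n$ on the piecewise-quadratic functions $Z\mapsto L(n,Z)^2$.
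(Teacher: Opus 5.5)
\paragraph{Verdict: the proof has a genuine gap in its decisive step.}
Your Steps 1 and 2 are correct and coincide with the paper's setup. With $L_0:=L(N-2,f^{-1}(N-1))$ one has $L(N-1,Z)=L_0+\sqrt{\sigma_{N-1}}\,t$ exactly, so (c) is an identity. The paper's expansion of $2L(N-1,Z)^2-L(N,Z)^2-L(N-2,Z)^2$ is exactly your reduced inequality written around $L_0$.

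The gap is the final step, which is the whole substance of the proof and which you leave open. The split you propose does not work as stated. Your two target inequalities add up to $2L(N-1,Z)(d_{N-1}t-d_Ns)\geq d_N^2s^2+d_{N-1}^2(t^2-s^2)$. The right-hand side you actually need is $d_N^2s^2+d_{N-1}^2s^2+d_{N-1}^2(t^2-s^2)$. So the term $d_{N-1}^2s^2$ is simply lost, and $d_N\leq d_{N-1}$ points the wrong way to recover it.

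This loss cannot be absorbed elsewhere. Suppose $2\sigma_{N-1}=\sigma_{N-2}+\sigma_N$, which always holds in (Case FS) and holds inside a shell in (Case FER). Then the total slack in your reduced inequality is
$2L_0(d_{N-1}t-d_Ns)+(t-s)\,d_N\bigl(t\sqrt{\sigma_N}+t\sqrt{\sigma_{N-1}}+s\,d_N\bigr)$.
This grows only linearly in $Z$, because $t-s=f^{-1}(N)-f^{-1}(N-1)$ is fixed, whereas $d_{N-1}^2s^2$ grows quadratically. For the same reason, a strict-concavity bound $d_{N-1}-d_N\gtrsim d_{N-1}^2/\sqrt{\sigma_{N-1}}$ with an unspecified constant is not enough: in this regime the needed inequality holds with equality, so only the sharp constant works. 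Ingredient (b) plays no role.

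\paragraph{The missing idea.}
Use concavity of $\sigma_n$ itself, which you already have in (a), through the exact identity
\begin{align*}
2\sqrt{\sigma_{N-1}}\,(d_{N-1}-d_N)-d_{N-1}^2-d_N^2=2\sigma_{N-1}-\sigma_{N-2}-\sigma_N\geq 0\,.
\end{align*}
Combined with $L(N-1,Z)\geq\sqrt{\sigma_{N-1}}\,t\geq\sqrt{\sigma_{N-1}}\,s$, this gives the correct first piece
$2L(N-1,Z)(d_{N-1}-d_N)s\geq(d_{N-1}^2+d_N^2)s^2$.
Your second piece follows from $2L(N-1,Z)\geq 2\sqrt{\sigma_{N-1}}\,t\geq d_{N-1}(t+s)$, using $d_{N-1}\leq\sqrt{\sigma_{N-1}}$ and $s<t$. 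Adding the two pieces yields exactly your reduced inequality.

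The paper does the same bookkeeping around $L_0$. It bounds the cross terms of $L(N,Z)^2$ using $\sigma_{N-1}\leq\sigma_N$ and $(Z-f^{-1}(N))+(f^{-1}(N)-f^{-1}(N-1))=Z-f^{-1}(N-1)$. The part linear in $L_0$ then reduces to concavity of $\sqrt{\sigma_n}$, and the $L_0$-free part reduces to concavity of $\sigma_n$.
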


\begin{proof}
If $Z \in [Z_0, f^{-1}(N)]$, then it follows from the recursive Definition \eqref{DefL} of $L(\,\cdot \,,Z)$ that $L(N,Z)=L(N-1,Z)$ and hence $E_L(N-1,Z)-E_L(N,Z)=0$. Moreover, since $N \mapsto E_L(N,Z)$ is monotonically decreasing and hence $E_L(N-2,Z) - E_L(N-1,Z) \geq 0$, Estimate \eqref{IonizationIneq} is satisfied in this case. 

Let us now consider the more difficult case that $Z>f^{-1}(N)$ for some $N \geq N_0+2$. Due to the strict monotonicity of $f$, we then have that 
$Z>f^{-1}(N) > f^{-1}(N-1)>f^{-1}(N-2)$. 
Therefore, the definition of $L(\,\cdot \,,Z)$ in Theorem \ref{theo1} implies that
\begin{align}
L(N-2,Z)&=L_0 + a_{N-1}\sqrt{\sigma_{N-2}}\,,  \label{EqN-2}\\
L(N-1,Z)&=L_0 + a_{N-1}\sqrt{\sigma_{N-1}}\,, \label{EqN-1} \\
L(N,Z)&=L(N-1,f^{-1}(N))+ a_N\sqrt{\sigma_{N}} =L_0 + b_N\sqrt{\sigma_{N-1}} + a_N\sqrt{\sigma_{N}} \,,  \label{EqN}
\end{align}
where $L_0:= L(N-2,f^{-1}(N-1))$, $a_{N-1}:=Z-f^{-1}(N-1)$, $a_{N}:=Z-f^{-1}(N)$ and $b_{N}:= f^{-1}(N)-f^{-1}(N-1)$ for short. Eqs. \eqref{EqN-2}-\eqref{EqN} imply that  
\begin{align}
&E_L(N,Z)+E_L(N-2,Z)-2E_L(N-1,Z) \nonumber \\
&\qquad=2L(N-1,Z)^2 - L(N,Z)^2 - L(N-2,Z)^2 \nonumber \\
&\qquad= 2L_0 \left( 2 a_{N-1} \sqrt{\sigma_{N-1}} - a_{N-1} \sqrt{\sigma_{N-2}} - a_N \sqrt{\sigma_{N}} - b_N \sqrt{\sigma_{N-1}} \right) \nonumber \\
&\qquad \quad + 2 (a_{N-1})^2 \sigma_{N-1} -(a_{N-1})^2 \sigma_{N-2}  - (a_N)^2 \sigma_N  - (b_N)^2 \sigma_{N-1} - 2 a_N b_N \sqrt{\sigma_{N}} \sqrt{\sigma_{N-1}}\,.  \label{ELEq}
\end{align}
To prove Estimate \eqref{IonizationIneq}, we have to show that the right-hand side of Eq. \eqref{ELEq} is non-negative. For this purpose, we first recall that $\sigma_N$ is defined by Eq. \eqref{DefSigmaN} with $h_N$ given by Eq. \eqref{Defh_N}. As already mentioned after Eq. \eqref{Defh_N}, we have the explicit formula $\sigma_N=N/2$ in (Case FS), whereas in (Case FER) $\sigma_N$ is given by Eq. \eqref{SigmaFer}. In any case, it is straightforward to verify that for all $N \geq 3$,
\begin{align}
2 \sigma_{N-1} - \sigma_{N-2} -  \sigma_N \geq 0\,.  \label{SigmaConcave}
\end{align}
Using this together with $\sigma_{N-1} \leq \sigma_N$ and the identity $a_N+b_N=a_{N-1}$, we obtain the estimate
\begin{align*}
&2 (a_{N-1})^2 \sigma_{N-1} -(a_{N-1})^2 \sigma_{N-2}  - (a_N)^2 \sigma_N  - (b_N)^2 \sigma_{N-1} - 2 a_N b_N \sqrt{\sigma_{N}} \sqrt{\sigma_{N-1}} \\
& \qquad \geq 2 (a_{N-1})^2 \sigma_{N-1} -(a_{N-1})^2 \sigma_{N-2}  - (a_N+b_N)^2 \sigma_N \\
& \qquad=  (a_{N-1})^2 \left(2 \sigma_{N-1} - \sigma_{N-2} -  \sigma_N \right) \\ & \qquad  \geq 0\,.
\end{align*}
This shows that the second term on the right-hand side of Eq. \eqref{ELEq} is non-negative. For the first term we use the following estimate:
\begin{align*}
&2 a_{N-1} \sqrt{\sigma_{N-1}} - a_{N-1} \sqrt{\sigma_{N-2}} - a_N \sqrt{\sigma_{N}} - b_N \sqrt{\sigma_{N-1}} \\
& \qquad \geq  2 a_{N-1} \sqrt{\sigma_{N-1}} - a_{N-1} \sqrt{\sigma_{N-2}} - (a_N+b_N) \sqrt{\sigma_{N}} \\
& \qquad = a_{N-1} \left(2 \sqrt{\sigma_{N-1}} -  \sqrt{\sigma_{N-2}} - \sqrt{\sigma_{N}} \right)\,. 
\end{align*}
Therefore, to show that the right-hand side of Eq. \eqref{ELEq} is non-negative, it only remains to show that for all $N \geq 3$, 
\begin{align*}
 \sqrt{\sigma_{N-1}}  \geq \dfrac{\sqrt{\sigma_{N-2}} + \sqrt{\sigma_N}}{2}\,.
\end{align*}
However, this easily follows from Estimate \eqref{SigmaConcave} and the concavity of the square root function. 
\end{proof}

\noindent\large\textbf{Acknowledgement}\newline
\normalsize The author thanks Ioannis Anapolitanos and Dirk Hundertmark for many helpful discussions and for proofreading parts of this work. The research for this article was funded by the Deutsche Forschungsgemeinschaft (DFG, German Research Foundation) – Project-ID 258734477 – SFB 1173. 

\bibliographystyle{plain}
\bibliography{LowerBoundsBib}

\end{document}